\newtheorem{observation}{Observation}
\newtheorem{theorem}{Theorem}
\newtheorem{fact}{Fact}
\newtheorem{lemma}{Lemma}
\newtheorem{claim}{Claim}
\newtheorem{definition}{Definition}
\newtheorem{corollary}{Corollary}
\newtheorem{question}{Question}
\newcommand{\ignore}[1]{}
\newtheorem{proposition}       [theorem]   {Proposition}
\def\E{\mathrm{E}}
\def\cA{\mathcal{A}}
\def\cF{\mathcal{F}}
\def\cO{\mathcal{O}}
\def\cS{\mathcal{S}}
\def\01{\{0,1\}}
\def\r01{[0,1]}
\def\reals{{\bf R}}
\def\eps{\epsilon}
\def\opt{\mathrm{opt}}
\def\poly{\mathrm{poly}}
\newtheorem{Alg}{Algorithm}
\title{Dueling algorithms}
\author{
Nicole Immorlica\thanks{Department of Electrical Engineering and Computer Science, Northwestern University} \thanks{Part of this work was performed while the author was at Microsoft Research}
\and Adam Tauman Kalai\thanks{Microsoft Research New England}
\and Brendan Lucier\thanks{Department of Computer Science, University of Toronto} $^\dag$
\and Ankur Moitra\thanks{Department of Electrical Engineering and Computer Science, Massachusetts Institute of Technology.  Supported in part by a Fannie Hurts Fellowship.} $\,^\dag$
\and Andrew Postlewaite\thanks{Department of Economics, University of Pennsylvania} $^\dag$
\and Moshe Tennenholtz\thanks{Microsoft R\&D Israel and the Technion, Israel}}
\begin{document}

\setcounter{page}{0}

\maketitle

\abstract{ We revisit classic algorithmic search and optimization
  problems from the perspective of competition.  Rather than a single
  optimizer minimizing expected cost, we consider a zero-sum game in
  which an optimization problem is presented to two players, whose only goal
  is to {\em outperform the opponent}.  Such games are typically
  exponentially large zero-sum games, but they often have a rich combinatorial
  structure.  We provide general techniques by which such structure
  can be leveraged to find minmax-optimal and approximate
  minmax-optimal strategies.  We give examples of ranking,
  hiring, compression, and binary search duels, among others.  We give
  bounds on how often one can beat the classic optimization algorithms
  in such duels.}

\newpage

\section{Introduction}

Many natural optimization problems have two-player competitive
analogs.  For example, consider the ranking problem of selecting an
order on $n$ items, where the cost of searching for a single item is
its rank in the list.  Given a fixed probability distribution over
desired items, the trivial greedy algorithm, which orders items in
decreasing probability, is optimal.

Next consider the following natural two-player version of the problem,
which models a user choosing between two search engines.  The user
thinks of a desired web page and a query and executes the query on
both search engines.  The engine that ranks the desired page higher is
chosen by the user as the ``winner.''  If the greedy algorithm has
the ranking of pages $\omega_1,\omega_2,\ldots,\omega_n$, then the
ranking $\omega_2,\omega_3,\ldots,\omega_n,\omega_1$ beats the greedy
ranking on every item 
except $\omega_1$.  We say
the greedy algorithm is $1-1/n$ {\em beatable} because there is a
probability distribution over pages for which the greedy algorithm
loses $1-1/n$ of the time.
Thus, in a competitive
setting, an ``optimal'' search engine can perform poorly against
a clever opponent.

This {\em ranking duel} can be modeled as a symmetric constant-sum
game, with $n!$ strategies, in which the player with the higher
ranking of the target page receives a payoff of 1 and the other
receives a payoff of 0 (in the case of a tie, say they both receive a
payoff of 1/2).  As in all symmetric one-sum games, there must be
(mixed) strategies that guarantee expected payoff of at least 1/2
against any opponent.  Put another way, there must be a (randomized)
algorithm that takes as input the probability distribution and outputs
a ranking, which is guaranteed to achieve expected payoff of at least
$1/2$ against any opposing algorithm.

This conversion can be applied to any optimization
problem with an element of uncertainty.  Such problems are of the form
$\min_{x \in X} \E_{\omega\sim p}[c(x,\omega)]$, where $p$ is a
probability distribution over the {\em state of nature} $\omega \in
\Omega$, $X$ is a feasible set, and $c:X\times\Omega\rightarrow
\reals$ is an objective function.  The dueling analog has two players
simultaneously choose $x,x'$; player 1 receives payoff 1 if
$c(x,\omega)<c(x',\omega)$, payoff $0$ if
$c(x,\omega)>c(x',\omega)$, payoff $1/2$ otherwise, and similarly
for player 2.\footnote{Our techniques will also apply to asymmetric payoff
  functions; see Appendix \ref{app:asymmetric}.}

There are many natural examples of this setting beyond the ranking
duel mentioned above.  For example, for the shortest-path routing
under a distribution over edge times, the corresponding {\it racing
  duel} is simply a race, and the state of nature encodes uncertain
edge delays.\footnote{
We also refer to this as the {\em primal duel} because
any other duel can be represented as a race with an appropriate
graph and probability distribution $p$, though there may be an
exponential blowup in representation size.}
For the classic secretary problem, in the corresponding {\it hiring
  duel} two employers must each select a candidate from a pool of $n$
candidates (though, as standard, they must decide whether or not to
choose a candidate before interviewing the next one), and the winner
is the one that hires the better candidate.  This could model, for
example, two competing companies attempting to hire CEOs or two
opposing political parties selecting politicians to run in an
election; the absolute quality of the candidate may be less important
than being better than the other's selection.
%
%
In a {\it compression duel}, a user with a (randomly chosen) sample
string $\omega$ chooses between two compression schemes based on which
one compresses that string better.  This setting can also model a user searching
for a file in two competing, hierarchical storage systems and choosing the system
that finds the file first. In a {\it binary search duel}, a
user searches for a random element in a list using two different
search trees, and chooses whichever tree finds the element faster.

\paragraph{Our contribution.}
For each of these problems, we consider a number of questions related to how vulnerable
a classic algorithm is to competition, what algorithms will be selected at equilibrium, and how well these strategies at equilibrium solve the original optimization problem.
\begin{question}
Will players use the classic optimization solution in the dueling setting?
\end{question}
Intuitively, the answer to this question should depend on how much an opponent can {\it game} the classic optimization solution. For example, in the {\it ranking duel} an opponent can beat the greedy algorithm on almost all pages -- and even the most oblivious player would quickly realize the need to change strategies. In contrast, we demonstrate that many classic optimization solutions  -- such as the secretary algorithm for hiring, Huffman coding for compression, and standard binary search -- are substantially less vulnerable. We say an algorithm is $\beta$-beatable (over
distribution $p$) if there exists a response which achieves payoff
$\beta$ against that algorithm (over distribution $p$). We summarize our results on the beatability of the standard optimization algorithm in each of our example optimization problems in the table below:

\begin{center}
\begin{tabular}{| l | c | c | }
\hline
Optimization Problem & Upper Bound & Lower Bound \\
\hline
Ranking & $1 - 1/n$ & $1 - 1/n$ \\
Racing & $1$ & $1$ \\
Hiring & $0.82$ & $0.51$ \\
Compression & $3/4$ & $2/3$ \\
Search & $5/8$ & $5/8$ \\
\hline
\end{tabular}
\end{center}

\begin{question}
What strategies do players play at equilibrium?
\end{question}

We say an algorithm efficiently {\em solves} the duel if it takes as input a
representation of the game and probability distribution $p$, and
outputs an action $x \in X$ distributed according to some minmax
optimal (i.e., Nash equilibrium) strategy.  As our main result, we
give a general method for solving duels that can be represented in a
certain bilinear form.  We also show how to convert an approximate
best-response oracle for a dueling game into an approximate minmax
optimal algorithm, using techniques from low-regret learning.  We
demonstrate the generality of these methods by showing how to apply
them to the numerous examples described above. For many problems we
consider, the problem of computing minmax optimal strategies reduces to
finding a simple description of the space of feasible mixed strategies 
(i.e. expressing this set as the projection of a polytope with polynomially many
variables and constraints). See \cite{Yann} for a thorough treatment of such problems.

\begin{question}
Are these equilibrium strategies still good at solving the optimization problem?
\end{question}

As an example, consider the ranking duel. How much more time does a web surfer need
to spend browsing to find the page he is interested in, because more than one search
engine is competing for his attention? In fact, the surfer may be \emph{better} off due
to competition, depending on the model of comparison. For example, the cost to the web
surfer may be the minimum of the ranks assigned by each search engine. And we leave
open the tantalizing possibility that this quantity could in general be smaller at equilibrium for two
competing search engines than for just one search engine playing the greedy algorithm.


\vspace{-2mm}
\paragraph{Related work.}
The work most relevant to ours is the study of ranking
games~\cite{BFHS09}, and more generally the study of social context
games~\cite{AKT08}.
In these settings, players' payoffs are translated
into utilities based on social contexts, defined by a graph and an
aggregation function. For example, a player's utility can be the
sum/max/min of his neighbors' payoffs.  This work studies the effect
of social contexts on the existence and computation of game-theoretic
solution concepts,
%
but does not re-visit optimization algorithms in competitive settings.

For the hiring problem, several competitive variants and their
algorithmic implications have been considered (see, e.g.,~\cite{IKM06}
and the references therein).  A typical competitive setting is a
(general sum) game where a player achieves payoff of 1 if she hires
the very best applicant and zero otherwise.  But, to the best of our
knowledge, no one has considered the natural model of a duel where
the objective is simply to hire a better candidate than the
opponent.
Also related to our algorithmic results are succinct zero-sum games,
where a game has exponentially many strategies but the payoff function
can be computed by a succinct circuit. This general class has been
showed to be EXP-hard to solve~\cite{FKS95}, and also difficult to
approximate~\cite{FIKU05}.

Finally, we note the line of research on competition among mechanisms,
such as the study of competing auctions (see
e.g. \cite{BS99,Mcafee93,MT04,PS97}) or schedulers \cite{ATZ10}.
In such settings, each player selects a mechanism and then bidders
select the auction to
participate in and how much to bid there, where both designers and
bidders are strategic. This work is largely concerned with the existence
of sub-game perfect equilibrium.

\vspace{-2mm}
\paragraph{Outline.}
In Section \ref{sec:defn} we define our model formally and provide a
general framework for solving dueling problems as well as the warmup
example of the ranking duel.  We then use these tools to analyze the
more intricate settings of the hiring duel (Section~\ref{sec:hiring}),
the compression duel (Section~\ref{sec:compression}), and the search
duel (Section~\ref{sec:bst}).  We describe
avenues of future research in Section~\ref{sec:conc}.

\section{Preliminaries}
\label{sec:defn}

A problem of optimization under uncertainty, $(X,\Omega,c,p)$, is
specified by a feasible set $X$, a commonly-known distribution $p$
over the state of nature, $\omega$, chosen from set $\Omega$,
and an objective function $c:X \times \Omega\rightarrow \reals$.  For
simplicity we assume all these sets are finite.  When $p$ is clear
from context, we write the expected cost of $x \in X$ as $c(x) =
\E_{\omega \sim p}[c(x,\omega)]$. The one-player optimum is
$\opt=\min_{x \in X} c(x)$.  Algorithm $A$ takes as input $p$ and
randomness $r \in [0,1]$, and outputs $x\in X$.  We define
$c(A)=\E_r[c(A(p,r))]$ and an algorithm $A$ is {\em one-player
  optimal} if $c(A)=\opt$.

In the two-person constant-sum duel game $D(X,\Omega,c,p)$, players
simultaneously choose $x,x'\in X$, and player 1's payoff is:
$$v(x,x',p)=\Pr_{\omega \sim p}[c(x,\omega)<c(x',\omega)]+
\frac{1}{2}\Pr_{\omega \sim p}[c(x,\omega)=c(x',\omega)].$$
When $p$ is understood from context we write $v(x,x')$.  Player 2's
payoff is $v(x',x)=1-v(x,x')$.  This models a tie,
$c(x,\omega)=c(x',\omega)$, as a half point for each.  We define the
value of a strategy, $v(x,p)$, to be how much that strategy
guarantees,
$v(x,p)=\min_{x'\in X} v(x,x',p).$
Again, when $p$ is understood from context we write simply $v(x)$.

The set of probability distributions over set $S$ is denoted
$\Delta(S)$.  A {\em mixed strategy} is $\sigma \in \Delta(X)$.
As is standard, we extend the domain of $v$ to mixed strategies
bilinearly by expectation.  A {\em best response} to mixed strategy
$\sigma$ is a strategy which yields maximal payoff against
$\sigma$, i.e., $\sigma'$ is a best response to $\sigma$ if it
maximizes $v(\sigma',\sigma).$ A {\em minmax} strategy is a (possibly
mixed) strategy that guarantees the safety value, in this case
1/2, against any opponent play.  The best response to
such a strategy yields payoffs of 1/2.  The set of minmax strategies
is denoted $MM(D(X,\Omega,c,p))=\{\sigma \in
\Delta(X)~|~v(\sigma)=1/2\}$.  A basic fact about constant-sum games
is that the set of Nash equilibria is the cross product of the minmax
strategies for player 1 and those of player 2.

\subsection{Bilinear duels}\label{sec:bilinear}

In a bilinear duel, the feasible set of strategies are points in
$n$-dimensional Euclidean space, i.e., $X \subseteq \reals^n$, $X'
\subseteq \reals^{n'}$ and the payoff to player 1 is
$v(x,x')=x^t M x'$
for some matrix $M \in \reals^{n \times n'}$.  In $n\times n$ bimatrix
games, $X$ and $X'$ are just simplices $\{x\in \reals_{\geq 0}^n ~|~ \sum x_i
=1\}$.  Let $K$ be the convex hull of $X$.  Any point in $K$ is
achievable (in expectation) as a mixed strategy.  Similarly define
$K'$.  As we will point out in this section, solving these reduces to
linear programming with a number of constraints proportional to the
number of constraints necessary to define the feasible sets, $K$ and
$K'$.  (In typical applications, $K$ and $K'$ have a polynomial number
of facets but an exponential number of vertices.)

Let $K$ be a polytope defined by the intersection of $m$ halfspaces,
$K = \{x \in \reals^n ~|~w_i \cdot x \geq b_i \text{ for
}i=1,2,\ldots,m\}.$ Similarly, let $K'$ be the intersection of $m'$
halfspaces $w_i' \cdot x \geq b_i'$.  The typical way to reduce to an
LP for constant-sum games is:
$$\max_{v \in \reals, x\in \reals^n} v \text{ such that } x \in K \text{ and } x^TMx' \geq v \text{ for all }x'\in X'.$$
The above program has a number of constraints which is $m+|X'|$, ($m$
constraints guaranteeing that $x\in K$), and $|X'|$ is typically
exponential.  Instead, the following linear program has $O(n'+m+m')$
constraints, and hence can be found in time polynomial in $n',m,m'$
and the bit-size representation of $M$ and the constraints in $K$ and
$K'$.
\begin{equation}\label{eq:LP}
\max_{x\in \reals^n, \lambda \in \reals^{m'}} \sum_1^{m'} \lambda_i b_i' \text{ such that } x \in K \text{ and } x^tM = \sum_1^{m'} \lambda_i w_i'.
\end{equation}
\begin{lemma}
\label{lem:lp}
For any constant-sum game with strategies $x \in K, x'\in K$ and
payoffs $x^t M x'$, the maximum of the above linear program is the
value of the game to player 1, and any maximizing $x$ is a minmax
optimal strategy.
\end{lemma}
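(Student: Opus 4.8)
The plan is to show that the linear program~\eqref{eq:LP} computes exactly the value guaranteed by the best minmax strategy, by exhibiting a chain of equalities between the LP optimum and $\max_{x \in K} v(x)$, where $v(x) = \min_{x' \in K'} x^t M x'$. First I would recall that by von Neumann's minimax theorem for constant-sum games, the game value to player~1 equals $\max_{x \in K}\min_{x' \in K'} x^t M x'$, and any $x$ achieving this outer maximum is a minmax optimal strategy; so it suffices to prove that the LP optimum equals $\max_{x \in K}\min_{x' \in K'} x^t M x'$ and that the maximizing $x$ of the LP achieves the inner minimum value $1/2$ (the safety value). Fixing any $x \in K$, the inner problem $\min_{x' \in K'} x^t M x'$ is itself a linear program over the polytope $K' = \{x' \mid w_i' \cdot x' \geq b_i',\ i = 1,\dots,m'\}$ with objective vector $x^t M$.

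The core step is LP duality applied to this inner minimization. Its dual is $\max \{ \sum_{i=1}^{m'} \lambda_i b_i' \mid \lambda \geq 0,\ \sum_{i=1}^{m'} \lambda_i w_i' = x^t M \}$ — I should note the equality constraint $\sum_i \lambda_i w_i' = x^t M$ arising because $x'$ is unconstrained in sign (or, if $X'$ lives in a simplex-like region, the constraint form matches after accounting for the normalization row, which I would handle as in the bimatrix special case mentioned in the text). By strong LP duality, for each fixed $x \in K$ the value $\min_{x' \in K'} x^t M x'$ equals $\max_{\lambda \geq 0,\ \sum_i \lambda_i w_i' = x^t M} \sum_i \lambda_i b_i'$. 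Substituting this identity into $\max_{x \in K} \min_{x' \in K'} x^t M x'$ and merging the two maximizations (over $x \in K$ and over the feasible $\lambda$) gives precisely program~\eqref{eq:LP}. Hence the LP optimum equals the game value to player~1.

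Finally, for the ``minmax optimal'' conclusion: if $x^\star$ is a maximizer of~\eqref{eq:LP} with optimal $\lambda^\star$, then $v(x^\star) = \min_{x' \in K'} (x^\star)^t M x' = \sum_i \lambda_i^\star b_i'$ equals the game value; since in a symmetric constant-sum duel the game value is $1/2$, we get $v(x^\star) = 1/2$, so $x^\star \in MM(D(X,\Omega,c,p))$. The main obstacle I anticipate is the bookkeeping around the exact constraint form of $K'$ versus $X'$: one must be careful that optimizing $x^t M x'$ over $x' \in K'$ (the convex hull) is the same as over $x' \in X'$ (vertices), which holds because a linear objective over a polytope is minimized at a vertex, and that the sign conventions on $\lambda$ and the presence or absence of a normalizing equality constraint are handled consistently. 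This is routine but is the place where a careless argument could go wrong; everything else is a direct application of strong duality and the minimax theorem.
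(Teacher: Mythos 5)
Your proposal is correct and uses essentially the same approach as the paper: both arguments hinge on applying LP duality to the inner minimization $\min_{x'\in K'} x^t M x'$ for fixed $x$, together with the minmax theorem, so that merging $\max_x$ with the dual $\max_\lambda$ yields program~\eqref{eq:LP}. (The paper phrases this as two inequalities --- a weak-duality/feasibility direction showing the LP optimum lower-bounds the game value via a minmax strategy, and a strong-duality direction for the converse --- but the substance is identical; note only that the ``safety value $1/2$'' remark is a side comment, as the lemma is stated for general bilinear constant-sum games.)
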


\begin{proof}
First we argue that the value of the above LP is at least as large as
the value of the game to player 1.  Let $x,\lambda$ maximize the above
LP and let the maximum be $\alpha$.  For any $x' \in K'$, $$x^t M x' =
\sum_1^{m'} \lambda_i w_i' \cdot x' \geq \sum_1^{m'} \lambda_i b_i' =
\alpha.$$ Hence, this means that strategy $x$ guarantees player $x$ at
least $\alpha$ against any opponent response, $x' \in K$.  Hence
$\alpha \leq v$ with equality iff $x$ is minmax optimal.  Next, let
$x$ be any minmax optimal strategy, and let $v$ be the value of the
constant-sum game.  This means that $x^t M x' \geq v$ for all $x' \in
K'$ with equality for some point.  In particular, the minmax theorem
(equivalently, duality) means that the LP $\min_{x' \in K'} x^t M x'$
has a minimum value of $v$ and that there is a vector of $\lambda\geq
0$ such that $\sum_1^{m'} \lambda_i w_i' = x^t M$ and $\sum_1^{m'}
\lambda_i b_i'=v$.  Hence $\alpha \geq v$.
\end{proof}

\subsection{Reduction to bilinear duels}\label{sec:blexact}
The sets $X$ in a duel are typically objects such as paths, trees,
rankings, etc., which are not themselves points in Euclidean space.
In order to use the above approach to reduce a given duel
$D(X,\Omega,c,p)$ to a bilinear duel in a {\em computationally
  efficient manner}, one needs the following:
\begin{enumerate}
\item An efficiently computable function $\phi:X \rightarrow K$ which
  maps any $x \in X$ to a feasible point in $K \subseteq \reals^n$.
\item A payoff matrix $M$ demonstrating such that $v(x,x') = \phi(x)^t
  M \phi(x')$, demonstrating that the problem is indeed bilinear.
\item A set of polynomially many feasible constraints which defines
  $K$.
\item A ``randomized rounding algorithm'' which takes as input a point
  in $K$ outputs an object in $X$.
\end{enumerate}
In many cases, parts (1) and (2) are straightforward.  Parts (3) and
(4) may be more challenging.  For example, for the binary trees used
in the compression duel, it is easy to map a tree to a vector of node
depths.  However, we do not know how to efficiently determine whether
a given vector of node depths is indeed a mixture over trees (except
for certain types of trees which are in sorted order, like the binary
search trees in the binary search duel).  In the next subsection, we show how
computing approximate best responses suffices.

\subsection{Approximating best responses and approximating minmax}\label{sec:appx}
In some cases, the polytope $K$ may have exponentially or infinitely
many facets, in which case the above linear program is not very
useful. In this section, we show that if one can compute {\em
  approximate} best responses for a bilinear duel, then one can {\em
  approximate} minmax strategies.

For any $\eps>0$, an $\eps$-best response to a player 2 strategy $x'
\in K'$ is any $x \in K$ such that $x^t M x' \geq \min_{y \in K} y^T M
x' - \eps$.  Similarly for player 1.  An $\eps$-minmax strategy $x \in
K$ for player 1 is one that guarantees player 1 an expected payoff not
worse than $\eps$ minus the value, i.e.,
$$\min_{x' \in K} v(x,x')\geq \max_{y \in K}\min_{x' \in K} v(y,x') -
\eps.$$

Best response oracles are functions from $K$ to $K'$ and vice versa.  However, for many applications (and in particular the ones in this paper) where all feasible points are nonnegative, one can define a best response oracle for all nonnegative points in the positive orthant.  (With additional effort, one can remove this assumption using Kleinberg and Awerbuch's elegant notion of a Barycentric spanner \cite{AK04}.)  For scaling purposes, we assume that for some $B>0$, the convex sets are $K \subseteq [0,B]^n$ and $K' \subseteq [0,B]^{n'}$ and the matrix $M \in [-B,B]^{n \times n'}$ is bounded as well.

Fix any $\eps>0$.  We suppose that we are given an $\eps$-approximate best response oracle in the following sense.
For player 1, this is an oracle $\cO:[0,B]^{n'} \rightarrow K$ which has the property that
$\cO(x')^t M x'  \geq \max_{x \in K} x^t M x' -\eps$ for any $x'\in [0,B]^{n'}$.  Similarly for $\cO'$ for player 2.  Hence, one is able to potentially respond to things which are not feasible strategies of the opponent.  As can be seen in a number of applications, this does not impose a significant additional burden.

\begin{lemma}\label{lem:appx}
For any $\eps>0$, $n,n' \geq 1$, $B>0$, and any bilinear dual with convex $K \subseteq [0,B]^n$ and $K' \subseteq [0,B]^{n'}$ and $M \in [-B,B]^{n \times n'}$, and any $\eps$-best response oracles, there is an algorithm for
finding $\bigl(24(\eps\max(m,m'))^{1/3}B^2(nn')^{2/3}\bigr)$-minmax
strategies $x \in K, x'\in K'$.  The algorithm uses
$\poly(\beta,m,m',1/\eps)$ runtime and make $\poly(\beta,m,m',1/\eps)$
oracle calls.
\end{lemma}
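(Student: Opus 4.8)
The plan is to solve the bilinear duel by simulating $T$ rounds of repeated play in which each player runs a low-regret online learning procedure, and then output the time-averaged plays. In round $t$ player~1 commits to $x_t\in K$ and player~2 to $x_t'\in K'$; player~1 then faces the linear loss $x\mapsto -x^tMx_t'$ and player~2 the linear loss $x'\mapsto x_t^tMx'$. Writing $R$ (resp.\ $R'$) for player~1's (resp.\ player~2's) accumulated regret against the best fixed point of $K$ (resp.\ $K'$), the standard learning-in-games argument shows that $\bar x=\frac1T\sum_t x_t$ and $\bar x'=\frac1T\sum_t x_t'$ form a $\frac{R+R'}{T}$-approximate minmax pair: by bilinearity $v(\bar x,x')=\bar x^tMx'=\frac1T\sum_t v(x_t,x')$, so $\min_{x'\in K'}v(\bar x,x')\ge \frac1T\sum_t v(x_t,x_t')-R'/T\ge \max_{x\in K}v(x,\bar x')-(R+R')/T$, which is at least the value minus $(R+R')/T$; symmetrically for $\bar x'$. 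Moreover $\bar x\in K$ and $\bar x'\in K'$ since they are averages of feasible points. Thus it suffices to run, for each player, a no-regret online linear optimization algorithm over $K$ (resp.\ $K'$) that accesses the feasible set only through the given $\eps$-best-response oracle.

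The second step is to implement such a learner using only the oracle. At round $t$, a leader-based learner (Follow-the-Perturbed-Leader in the style of Kalai--Vempala, or a projection-free online conditional-gradient learner) needs to approximately maximize $x\mapsto x^tMy_t$ over $x\in K$, where $y_t$ is (a perturbation of) the normalized cumulative opponent play $\frac1t\sum_{s\le t}x_s'$. Since $\frac1t\sum_{s\le t}x_s'\in K'\subseteq[0,B]^{n'}$, the unperturbed version of this query is exactly a call to $\cO$; and because $\cO$ is defined on the whole box $[0,B]^{n'}$ rather than only on $K'$, we may also feed it the rescaled or noise-perturbed directions the learner requires, without invoking a Barycentric spanner. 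Symmetrically $\cO'$ drives player~2's learner. The ideal (exact, no-slack) regret of such a learner over $T$ rounds is $O\big(\mathrm{diam}(K)\cdot L\cdot\sqrt T\big)$ plus lower-order terms, where $\mathrm{diam}(K)$ is controlled by $B$ and $n$, a single-round loss $|x^tMx'|$ is bounded via $\|M\|_\infty\le B$ and the box sizes, and the facet counts $m,m'$ enter through the resolution of the discretization/perturbation needed to turn the linear-optimization oracle into a bona fide online algorithm.

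The third step is error accounting. Replacing each exact linear maximization by an $\eps$-approximate one inflates each player's regret additively, and a discretization at scale $\delta$ contributes an additive $O(\delta T)$, so the realized regrets satisfy $R,R'\le O\big(\mathrm{diam}\cdot L\cdot\sqrt T\big)+O(\eps T)+O(\delta T)$. Dividing by $T$, the approximate-equilibrium error of $(\bar x,\bar x')$ is $O\big(\mathrm{diam}\cdot L/\sqrt T+\eps+\delta\big)$, with the dependence on $n,n',m,m',B$ carried by $\mathrm{diam}$, $L$, and the net size; choosing the number of rounds $T$ and the resolution $\delta$ to balance these competing terms yields the stated bound $24(\eps\max(m,m'))^{1/3}B^2(nn')^{2/3}$. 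The optimal $T$ is polynomial in $m,m',1/\eps$ and the dimensions, and each round uses a constant number of oracle calls plus polynomial bookkeeping, so the total running time and number of oracle calls are polynomial.

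I expect the second step to be the crux. The off-the-shelf no-regret toolkit is not directly usable: $K$ may have exponentially many vertices (ruling out multiplicative weights over pure strategies) and no efficient Euclidean projection (ruling out projected gradient descent), so one must drive a projection-free learner purely by the linear-optimization (= approximate best response) oracle, and simultaneously show that the oracle's $\eps$-slack and the learner's internal discretization degrade the regret only by the controlled additive amounts above. Pinning down the exponents in $\eps,m,m',n,n'$ is exactly the tradeoff analysis among the diameter, the per-round loss magnitude, the perturbation scale, and the oracle error in that regret bound.
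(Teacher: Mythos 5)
Your overall skeleton matches the paper's: simulate $T$ rounds of repeated play, run a Kalai--Vempala / Hannan-type ``follow the leader'' learner for each player driven by its $\eps$-best-response oracle, output the time-averaged plays, and invoke the regret-to-equilibrium reduction (your first paragraph is exactly the paper's Observation~\ref{ob:1}).

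The gap is in the error accounting of your third step, and it is precisely the part you flag as the crux. You feed the learner the \emph{normalized} cumulative play $\frac1t\sum_{s\le t}x_s'$ (plus a perturbation), and then assert that the oracle's $\eps$-slack contributes only an additive $O(\eps T)$ to the regret. If that were correct it would imply a strictly stronger bound than the lemma states: taking $T\to\infty$ would drive the equilibrium error down to $O(\eps)$, whereas the stated bound scales as $\eps^{1/3}$. The paper's FEL analysis does something different: it feeds the oracle the \emph{unnormalized} cumulative play plus a perturbation of magnitude $R\sim B\sqrt{\max(m,m')\,T}$, and because the oracle guarantee is calibrated only on $[0,B]^{n'}$ and scales linearly in the input norm (Eq.~(\ref{eq:scale})), the per-round slack becomes $\eps(t+R/B)$ rather than $\eps$. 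Summing gives a $T(T+R/B)\eps$ regret term, and balancing this quadratic-in-$T$ term against the usual $\sqrt{T}$-type FPL terms is exactly what forces $T\sim\eps^{-2/3}$ and produces the $\eps^{1/3}$ exponent in the lemma. Your proposal does not reproduce this term, nor does it prove the FPL stability lemma for the normalized-plus-perturbed leader (the overlap argument for consecutive perturbation cubes changes when you normalize, and you would need to show the cubes still overlap sufficiently and that the oracle's scale-sensitivity does not bite elsewhere). Either the normalization genuinely buys you something — in which case you have not proven it — or the $O(\eps T)$ claim is simply wrong and you need the paper's $T(T+R/B)\eps$ accounting. As written, the proposal does not arrive at the stated bound.
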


The reduction and proof is deferred to Appendix~\ref{app:defn}.  It
uses Hannan-type of algorithms, namely ``Follow the expected leader''
\cite{KV05}.

We reduce the compression duel, where the base objects are trees, to a
bilinear duel and use the approximate best response oracle.
To perform such a reduction, one needs the following.
\begin{enumerate}
\item An efficiently computable function $\phi:X \rightarrow K$ which
  maps any $x \in X$ to a feasible point in $K \subseteq \reals^n$.
\item A bounded payoff matrix $M$ demonstrating such that $v(x,x') = \phi(x)^t
  M \phi(x')$, demonstrating that the problem is indeed bilinear.
\item $\eps$-best response oracles for players 1 and 2.  Here, the
  input to an $\eps$ best response oracle for player 1 is $x' \in [0,B]^{n'}$.
\end{enumerate}

\subsection{Beatability}
One interesting quantity to examine is how well a one-player
optimization algorithm performs in the two-player game.  In other
words, if a single player was a monopolist solving the one-player
optimization problem, how badly could they be beaten if a second
player suddenly entered.  For a particular one-player-optimal
algorithm $A$, we define its {\em beatability over distribution $p$}
to be $\E_r[v(A(p,r),p)]$, and we define its {\em beatability} to be
$\inf_p \E_r[v(A(p,r),p)]$.

\subsection{A warmup: the ranking duel}\label{sec:rank}
In the ranking duel, $\Omega = [n]=\{1,2,\ldots,n\}$, $X$ is the set
of permutations over $n$ items, and $c(\pi,\omega) \in [n]$ is the
position of $\omega$ in $\pi$ (rank 1 is the ``best'' rank).  The
greedy algorithm, which outputs permutation
$(\omega_1,\omega_2,\ldots,\omega_n)$ such that $p(\omega_1)\geq
p(\omega_2) \geq \cdots \geq p(\omega_n)$, is optimal in the
one-player version of the problem.\footnote{In some cases, such as a
  model of competing search engines, one could have the agents rank
  only $k$ items, but the algorithmic results would be similar.}
%

This game can be represented as a bilinear duel as follows.  Let $K$
and $K'$ be the set of doubly stochastic matrices, $K=K'=\{x \in
\reals_{\geq 0}^{n^2}~|~\forall j \sum_i x_{ij}=1, \forall i \sum_j
x_{ij}=1\}.$ Here $x_{ij}$ indicates the {\em probability} that item
$i$ is placed in position $j$, in some distribution over rankings.
The Birkhoff-von Neumann Theorem states that the set $K$ 
is precisely the set of probability distributions over rankings (where 
each ranking is represented as a permutation matrix  $x \in 
\{0,1\}^{n^2}$), and moreover any such $x \in K$ can be 
implemented efficiently via a form of randomized rounding.
See, for example, Corollary 1.4.15 of \cite{LP86}.
Note $K$ is a polytope in $n^2$ dimensions with $O(n)$ facets.  In
this representation, the expected payoff of $x$ versus $x'$ is
$$\sum_i p(i)\left(\frac{1}{2}\Pr[\text{Equally rank } i]+\Pr[\text{P1 ranks } i \text{ higher}]\right)=\sum_i p(i) \sum_j x_{ij} \left(\frac{1}{2}x'_{ij}+ \sum_{k>j} x'_{ik}\right).$$
The above is clearly bilinear in $x$ and $x'$ and can be written as
$x^tMx'$ for some matrix $M$ with bounded coefficients.  Hence, we can
solve the bilinear duel by the linear program (\ref{eq:LP}) and round
it to a (randomized) minmax optimal algorithm for ranking.

We next examine the beatability of the greedy algorithm.  Note that
for the uniform probability distribution $p(1)=p(2)=\ldots=p(n)=1/n$,
the greedy algorithm outputting, say, $(1,2,\ldots,n)$ can be beaten
with probability $1-1/n$ by the strategy $(2,3,\ldots,n,1)$.  One can
make greedy's selection unique by setting $p(i) =
1/n+(i-n/2)\epsilon$, and for sufficient small $\epsilon$ greedy can
be beaten a fraction of time arbitrarily close to $1-1/n$.

\section{Hiring Duel}\label{sec:hiring}



In a hiring duel, there are two employers $A$ and $B$ and two
corresponding sets of workers $U_A=\{a_1,\ldots,a_n\}$ and
$U_B=\{b_1,\ldots,b_n\}$ with $n$ workers each.  The $i$'th worker of
each set has a common value $v(i)$ where $v(i)>v(j)$ for all $i$ and
$j>i$.  Thus there is a total ranking of workers $a_i\in U_A$
(similarly $b_i\in U_B$) where a rank of $1$ indicates the best
worker, and workers are labeled according to rank.  The goal of the
employers is to hire a worker whose value (equivalently rank) beats
that of his competitor's worker.  Workers are interviewed by employers
one-by-one in a random order.  The relative ranks of workers are
revealed to employers only at the time of the interview.  That is, at
time $i$, each employer has seen a prefix of the interview order
consisting of $i$ of workers and knows only the projection of the
total ranking on this prefix.\footnote{In some cases, an employer also
  knows when and whom his opponent hired, and may condition his
  strategy on this information as well.  Only one of the settings
  described below needs this knowledge set; hence we defer our
  discussion of this point for now and explicitly mention the
  necessary assumptions where appropriate.}  Hiring decisions must be
made at the time of the interview, and only one worker may be hired.
Thus the employers' pure strategies are mappings from any prefix and
permutation of workers' ranks in that prefix to a binary hiring
decision.  We note that the permutation of ranks in a prefix does not
effect the distribution of the rank of the just-interviewed worker, and hence
without loss of generality we may assume the strategies are mapings
from the round number and current rank to a hiring decision.

In dueling notation, our game is $(X,\Omega,c,p)$ where the elements
of $X$ are functions $h:\{1,\ldots,n\}^2\rightarrow\{0,1\}$ indicating
for any round $i$ and projected rank of current interviewee $j\leq i$
the hiring decision $h(i,j)$; $\Omega$ is the set
$(\sigma_A,\sigma_B)$ of all pairs of permutations of $U_A$ and $U_B$;
$c(h,\sigma)$ is the value $v(\sigma^{-1}(i^*))$ of the first
candidate $i^*=\mathrm{argmin}_{i}\{i:h(i,[\sigma^{-1}(i)]_i)=1\}$ (where
$[\sigma^{-1}(i)]_j$ indicates the projected rank of the $i$'th
candidate among the first $j$ candidates according to $\sigma$) that
received an offer; and $p$ (as is typical in the secretary problem) is
the uniform distribution over $\Omega$.  The mixed strategies
$\pi\in\Delta(X)$ are simply mappings
$\pi:\{0,\ldots,n\}^2\rightarrow[0,1]$ from rounds and projected ranks
to a probability $\pi(i,j)$ of a hiring decision.

The values $v(\cdot)$ may be chosen adversarially, and hence in the
one-player setting the optimal algorithm against a worst-case
$v(\cdot)$ is the one that maximizes the probability of hiring the
best worker (the worst-case values set $v(1)=1$ and $v(i)<<1$ for
$i>1$).  In the literature on secretary problems, the following {\it
  classical algorithm} is known to hire the best worker with
probability approaching $\frac{1}{e}$: Interview n/e workers and hire
next one that beats all the previous.  Furthermore, there is no other
algorithm that hires the best worker with higher probability.
%

\subsection{Common pools of workers}
\label{subsec:commonhiring}

In this section, we study the {\it common hiring duel} in which
employers see the {\it same} candidates in the {\it same} order so
that $\sigma_A=\sigma_B$ and each employer observes when the other
hires.  In this case, the following strategy $\pi$ is a symmetric
equilibrium: If the opponent has already hired, then hire anyone who
beats his employee; otherwise hire as soon as the current candidate
has at least a $50\%$ chance of being the best of the remaining
candidates.

\begin{lemma}
\label{lem:commonequil}
Strategy $\pi$ is efficiently computable and constitutes a symmetric
equilibrium of the common hiring duel.
\end{lemma}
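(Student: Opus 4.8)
The plan is to verify the two claims separately: efficient computability of $\pi$, and the equilibrium property. For computability, the key observation is that the strategy is defined by a simple stopping rule, and the only nontrivial quantity is the threshold round $t$ after which ``the current candidate has at least a $50\%$ chance of being the best of the remaining candidates.'' First I would write down, for a candidate observed in round $i$ that is the best-so-far, the probability that it is in fact the best of the $n$ candidates; by the standard secretary-problem calculation this is $i/n$ (conditioned on being a prefix-maximum, the event of being the global max among the remaining $n-i+1$ unseen-so-far positions has a clean closed form). So the rule ``hire if this probability is at least $1/2$'' becomes ``hire the first prefix-maximum occurring in round $i \ge n/2$,'' which is obviously computable in polynomial time; the ``if the opponent has already hired'' branch is likewise a trivial comparison. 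I would state this threshold explicitly and note the mixed-strategy representation $\pi(i,j)$ promised in the model section is immediate.

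The substantive part is showing $(\pi,\pi)$ is a Nash equilibrium of the symmetric constant-sum duel, which by the basic fact quoted in Section~\ref{sec:defn} amounts to showing $v(\pi,\pi) = 1/2$ and that $\pi$ is a best response to itself, i.e.\ $v(\pi',\pi) \le 1/2$ for every strategy $\pi'$. Symmetry of the game already gives $v(\pi,\pi)=1/2$ (ties split evenly, and both players face identical situations since $\sigma_A=\sigma_B$), so the whole content is the best-response inequality. Here I would argue by backward induction over the rounds, tracking the state from the deviator's (player 1's) point of view: the relevant state at round $i$ is whether player 2 has already hired, and if not, the current best-so-far value and whether player 1 is still free. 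The crucial structural point is that once player 2 has hired some candidate of rank $r$, player 1's optimal continuation is transparent — hire the next candidate beating rank $r$ if one will appear, and this is exactly what $\pi$ prescribes, so no deviation helps in that branch. So the real comparison is in the branch where neither has hired: player 1 must decide whether to grab the current prefix-maximum now or wait. Grabbing it now yields a win precisely when player 2's (future, $\pi$-driven) hire is worse; waiting trades the current candidate for a later one. I would show the ``grab it iff $i \ge n/2$'' rule maximizes player 1's win probability against player 2 also playing $\pi$, by comparing, at each prefix-maximum round $i$, the value of stopping now versus the value of the optimal continuation, and checking the sign of the difference flips exactly at the $50\%$ threshold.

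The main obstacle I expect is the coupling in that last branch: when player 1 passes on the round-$i$ prefix-maximum, player 2 (playing $\pi$) may \emph{hire} it, and then player 1's problem changes discontinuously (player 2's employee is now the best-so-far, which is a strong benchmark). So the backward induction must carry a slightly richer state — essentially ``value of the best candidate seen so far'' and ``who, if anyone, holds it'' — and I would need to check that the symmetry of $\pi$ makes the two players' hazard rates of hiring line up so that player 1 can never exploit a timing advantage. I would handle this by showing that conditioned on neither having hired through round $i$, the pair's situation is still symmetric, so that any pure deviation of player 1 at round $i$ can be ``mirrored'' and shown not to exceed payoff $1/2$; combined with the transparent post-opponent-hire branch, this gives $v(\pi',\pi)\le 1/2$ for all $\pi'$ and hence the equilibrium. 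A secondary technical point is ties (player 1 and player 2 hiring candidates of equal rank, which happens with positive probability since the pools are ranked identically and they may hire the same person): I would confirm the $\tfrac12$-tie convention is exactly what makes the threshold land at $50\%$ rather than somewhere else.
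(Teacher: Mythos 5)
Your characterization of $\pi$ in the computability part is incorrect, and this matters for the rest. You reduce $\pi$ to ``hire the first prefix-maximum in round $i \geq n/2$,'' treating $i/n$ (the probability a best-so-far at round $i$ is globally best) as the only threshold that arises. But $\pi$ is not restricted to prefix-maxima: it hires the candidate of projected rank $j$ (for any $j \leq i$) at round $i$ whenever she has $\geq 50\%$ chance of being the best of the remaining candidates, and that probability is $\binom{i}{j}/\binom{n}{j}$, not just $i/n$. For $j > 1$ this ratio crosses $1/2$ at later rounds (e.g.\ $j=2$ near $i \approx n/\sqrt{2}$), and at $i = n$ it equals $1$ for every $j$, so $\pi$ always hires by round $n$. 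Your rule, by contrast, fails to hire at all on orders in which no prefix-maximum occurs after round $n/2$, so it is a strictly different strategy. The paper's computability argument consists precisely of exhibiting this ratio of binomials and observing it is poly-time evaluable for all $(i,j)$, not just $j=1$.

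On the equilibrium claim, your route differs from the paper's. You propose a backward-induction / stopping-time argument with explicit state tracking and a ``mirroring'' step, and you correctly flag the coupling issue (that after passing, the opponent may grab the current candidate and become the new benchmark) as the technical obstacle. The paper sidesteps the state machine entirely by a two-case argument on the round at which the deviator first differs from $\pi$: if the deviator hires \emph{earlier} than $\pi$ would, then by the definition of the $50\%$ threshold there is $>50\%$ probability that a better candidate remains, and $\pi$ (in its ``opponent already hired'' branch) will grab the first such beater, so the deviator wins $<50\%$; if the deviator hires \emph{later}, then $\pi$ has already taken a candidate with $\geq 50\%$ chance of being best-remaining, so the deviator wins $\leq 50\%$. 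This is more elementary than your induction and avoids having to verify the mirroring claim. Your observation that the $\tfrac{1}{2}$-tie convention is what pins the threshold at $50\%$ is correct and consistent with the paper.
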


The computability follows from a derivation of probabilities in terms
of binomials, and the equilibrium claim follows by observing that
there can be no profitable deviation.  This strategy also beats the
classical algorithm, enabling us to provide non-trivial lower and
upper bounds for its beatability.

\begin{proof}
For a round $i$, we compute a threshold $t_i$ such that $\pi$ hires if
and only if the projected rank of the current candidate $j$ is at most
$t_i$.  Note that if $i$ candidates are observed, the probability that
the $t_i$'th best among them is better than all remaining candidates
is precisely ${i \choose t_i}/{n \choose t_i}$.  The numerator is the
number of ways to place the $1$ through $t_i$'th best candidates
overall among the first $i$ and the denominator is the number of ways
to place the $1$ through $t_i$'th best among the whole order.  Hence
to efficiently compute $\pi$ we just need to compute $t_i$ or,
equivalently, estimate these ratios of binomials and hire whenever on
round $i$ and observing the $j$'th best so far, ${i\choose
  j}/{n\choose j}\geq1/2$.


We further note $\pi$ is a symmetric equilibrium since if an employer
deviates and hires early then by definition the opponent has a better
than $50\%$ chance of getting a better candidate.  Similarly, if an
employer deviates and hires late then by definition his candidate has
at most a $50\%$ chance of being a better candidate than that of his
opponent.
\end{proof}

\begin{lemma}
\label{lem:hiringbeatability}
The beatability of the classical algorithm is at least $0.51$ and at
most $0.82$.
\end{lemma}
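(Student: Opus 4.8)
The plan is to handle the two inequalities separately, in each case relating the best-response payoff $v(x,A)$ against the classical rule $A$ to the chance that $A$ hires a top-ranked worker, and using the constant-sum identity $v(x,A)=1-v(A,x)$.

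\textbf{Upper bound ($0.82$).} I would show that $A$ cannot be beaten by more than $0.82$ no matter which (strictly decreasing) value function the adversary fixes. Let $W$ be the event that $A$ hires the rank-$1$ worker $a_1$ of its own pool. On $W$, $A$'s worker has value $v(1)$, which is the largest value any worker can have, so $A$'s worker is weakly better than the opponent's; hence $A$ does not lose on $W$ and collects payoff at least $\tfrac12$ (exactly $\tfrac12$ only when the opponent also hires a rank-$1$ worker). Consequently $v(A,x)\ge \tfrac12\Pr[W]$ for every opponent strategy $x$ and every value function. Since $A$ is exactly the classical $1/e$-rule, the secretary estimate already used in this section gives $\Pr[W]=\tfrac{\lfloor n/e\rfloor}{n}\sum_{i=\lfloor n/e\rfloor}^{n-1}\tfrac1i$, which tends to $1/e$ and stays above a fixed constant (e.g.\ $0.36$) for every $n$. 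Thus $v(A,x)\ge 0.18$, hence $v(x,A)\le 0.82$ on every instance.

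\textbf{Lower bound ($0.51$).} I would exhibit a single bad instance. The natural opponent is the strategy $\pi$ of Lemma~\ref{lem:commonequil} (which this section already notes beats $A$), and the natural value function is the worst-case one, in which strictly lower rank always wins (concretely $v(1)=1$ and $v(i)=\delta^{\,i-1}$ with $\delta\to 0$, so that, apart from ties between equal ranks, the employer with the lower-ranked worker wins). The gain of $\pi$ over $A$ comes from the event, of probability tending to $1/e$, that no candidate beats $A$'s threshold and $A$ is forced to hire the \emph{last} candidate, whose rank is typically very poor; on those runs $\pi$, being reactive and threshold-based, almost always ends up with a strictly better worker. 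One then nets this against the runs where $\pi$ does worse than $A$ — principally because $\pi$ uses a more conservative record threshold and so sometimes passes the best worker that $A$ would have taken — and shows the balance is at least $0.51$. An alternative that makes the accounting cleaner is the value function with $v(1)=v(2)=1$ and $v(i)=0$ for $i\ge 3$: then a short case analysis (the cross terms cancel) gives the exact identity
\[
v(x,A)=\tfrac12+\tfrac12\bigl(\Pr[x\text{ hires a top-}2\text{ worker}]-\Pr[A\text{ hires a top-}2\text{ worker}]\bigr),
\]
where $\Pr[A\text{ hires a top-}2\text{ worker}]\to \tfrac1e+\tfrac1{e^2}\approx 0.503$ by an elementary sum over the position of the second-best worker, while the opponent — whose only relevant information is the sequence of relative ranks, since ``$A$ has hired'' is a deterministic function of it — can push $\Pr[x\text{ hires a top-}2\text{ worker}]$ to a constant comfortably above $0.503$ (the classical problem of choosing one of the two best candidates). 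Either way one lands above $0.51$.

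\textbf{Main obstacle.} The upper bound is essentially immediate once the event $W$ is isolated; its only subtlety is that the $1/e$-rule's finite-$n$ success probability dips slightly below $1/e$, which is harmless since beatability is a worst-case quantity and the relevant constant is still near $1/e$. The real work is the lower bound: pinning down the probability that $A$ hires the second-best worker, and — more delicate — obtaining an honest lower bound on the opponent's chance of securing a top worker. With $\pi$ this means tracking the \emph{joint} evolution of $\pi$'s and $A$'s decisions along the same interview order (because $\pi$ conditions on when $A$ hires), and with a hand-crafted two-threshold opponent it means estimating the resulting sums of binomial ratios; in either case a careful asymptotic ($n\to\infty$) analysis is what carries the argument.
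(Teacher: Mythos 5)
Your upper-bound argument is exactly the paper's: isolate the event that the $1/e$-rule secures the globally best candidate, observe that on that event the classical rule cannot be beaten (so it collects at least $1/2$), and use $\Pr[W]\to 1/e$ to conclude $v(x,A)\le 1-\tfrac{1}{2e}<0.82$. Nothing to add there.

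The lower bound is where the paper and you diverge, and where your proposal has a real gap. The paper gives a self-contained, parametrized event decomposition: it fixes $q\in(1/e,1/2)$, notes that with probability $1-\tfrac{1}{qe}$ the $1/e$-rule commits during $[n/e,qn]$ (because the best of the first $qn$ candidates falls outside the opening window), observes that $\pi$ has not yet hired at that point, and then uses the fact that with (independent) probability $1-q$ the globally best candidate arrives in the final $(1-q)n$ positions — in which case $\pi$, now in ``beat the opponent's hire'' mode, is guaranteed a strictly better worker. This gives the explicit lower bound $(1-q)(1-\tfrac{1}{qe})+\tfrac12\cdot\tfrac{1}{qe}$, which is then maximized at $q=1/\sqrt{2e}$ to yield $1+\tfrac1e-\sqrt{2/e}\approx 0.51$. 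No external secretary results are invoked, and the optimization over $q$ is what produces the precise constant.

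Your two suggested routes are both plausible but neither is carried to the point of producing a number. The first (``$\pi$ gains mainly on the $\approx 1/e$ event where the $1/e$-rule is forced to take the last candidate'') is not, in fact, where the paper's gain comes from, and the accounting of ``runs where $\pi$ does worse'' is left entirely open. The second route — the top-2 value function — is genuinely nice: your identity $v(x,A)=\tfrac12+\tfrac12(\Pr[x\text{ top-2}]-\Pr[A\text{ top-2}])$ is correct (the cross and tie terms cancel as you say), and $\Pr[A\text{ top-2}]\to \tfrac1e+\tfrac1{e^2}$ is right. But the conclusion hinges on the unproved assertion that a stopping rule can secure a top-2 candidate with probability ``comfortably above $0.503$.'' That is precisely the crux; to close it you would have to either compute the optimal top-2 stopping probability (a known but nontrivial secretary-type result, $\approx 0.574$) or exhibit a concrete two-threshold rule with a verified bound. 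As written, the $0.51$ is asserted, not derived — you flag this yourself, and it is indeed the missing piece. The paper's $q$-parametrized event argument is a more elementary way to land on $0.51$ without reaching for the top-$2$ secretary constant.
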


  The lower bound follows from the
fact that $\pi$ beats the classical algorithm with probability bounded
above $1/2$ when the classical algorithm hires early (i.e., before
round $n/2$), and the upper bound follows from the fact that the
classical algorithm guarantees a probability of $1/e$ of hiring the
best candidate, in which case no algorithm can beat it.

\begin{proof}
For the lower bound, note that in any event, $\pi$ guarantees a payoff
of at least $1/2$ against the classical algorithm.  We next argue that
for a constant fraction of the probability space, $\pi$ guarantees a
payoff of strictly better than $1/2$.  In particular, for some
$q,1/e<q<1/2$, consider the event that the classical algorithm hires
in the interval $\{n/e,qn\}$.  This event happens whenever the best
among the first $qn$ candidates is not among the first $n/e$
candidates, and hence has a probability of $(1-1/qe)$.  Conditioned on
this event, $\pi$ beats the classical algorithm whenever the best
candidate overall is in the last $n(1-q)$ candidates,\footnote{This is
  a loose lower bound; there are many other instances where $\pi$ also
  wins, e.g., if the second-best candidate is in the last $n(1-q)$
  candidates and the best occurs after the third best in the first
  $qn$ candidates.} which happens with probability $(1-q)$ (the
conditioning does not change this probability since it is only a
property of the permutation projected onto the first $qn$ elements).
Hence the overall payoff of $\pi$ against the classical algorithm is
$(1-q)(1-1/qe) + (1/2)(1/qe)$.  Optimizing for $q$ yields the result.

For the upper bound, note as mentioned above that the classical
algorithm has a probability approaching $1/e$ of hiring the {\it best}
candidate.  From here, we see $((1/2e) + (1-1/e))=1-1/2e<0.82$ is an
upper bound on the beatability of the classical algorithm since the
best an opponent can do is always hire the best worker when the
classical algorithm hires the best worker and always hire a better
worker when the classical algorithm does not hire the best worker.
\end{proof}

\subsection{Independent pools of workers}
\label{subsec:separatehiring}

In this section, we study the {\it independent hiring duel} in which
the employers see {\it different} candidates.  Thus
$\sigma_A\not=\sigma_B$ and the employers do not see when the opponent
hires.  We use the
bilinear duel framework introduced in Section~\ref{sec:bilinear} to
compute an equilibrium for this setting, yielding the following
theorem.

\begin{theorem}
\label{thm:separatehiring}
The equilibrium strategies of the independent hiring duel are
efficiently computable.
\end{theorem}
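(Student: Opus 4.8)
The plan is to cast the independent hiring duel as a bilinear duel (via the reduction of Section~\ref{sec:blexact}) and then invoke Lemma~\ref{lem:lp}. Recall that a (behavioral) strategy is a map $\pi$ assigning to each pair $(i,j)$ with $1\le j\le i\le n$ a probability $\pi(i,j)$ of hiring at round $i$ when the just-interviewed candidate has projected rank $j$ among the first $i$. The payoff is \emph{not} bilinear in $\pi$, since the induced distribution over which candidate is hired depends multilinearly on $\pi$ (hiring at round $i$ requires not hiring at rounds $1,\dots,i-1$). I would cure this with a change of variables: for a strategy $\pi$ let $a_{ij}$ be the probability that the eventual hire occurs at round $i$ with the interviewee at projected rank $j$, and let $s_i=1-\sum_{i'<i,\,j'}a_{i'j'}$ be the probability of entering round $i$ un-hired. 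Using the classical fact that the relative ranks $R_1,\dots,R_n$ are mutually independent with $R_i$ uniform on $\{1,\dots,i\}$ — so that the event ``enter round $i$ un-hired'' is independent of $R_i$ — one gets $a_{ij}=s_i\cdot\frac1i\cdot\pi(i,j)$, an efficiently computable correspondence between strategies and the polytope
$$K=\Bigl\{\,a\in\R^{\binom{n+1}{2}}_{\ge 0}\ :\ i\,a_{ij}+\sum_{i'<i,\ 1\le j'\le i'}a_{i'j'}\le 1\ \ \text{for all }1\le j\le i\le n\,\Bigr\},$$
which lives in $\binom{n+1}{2}=O(n^2)$ dimensions and has $O(n^2)$ facets (one checks that the ``$s_i\ge 0$'' conditions are implied by averaging the displayed inequalities over $j$). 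Conversely, given $a\in K$ one recovers a strategy by $\pi(i,j)=i\,a_{ij}/s_i$ when $s_i>0$ (and arbitrarily otherwise), which is the required randomized rounding. Since every behavioral strategy is outcome-equivalent to a mixture of pure strategies, the convex hull of $\{a(h):h\text{ pure}\}$ is precisely $K$, so we may take $\phi(h)=a(h)$.

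Next I would verify bilinearity of the payoff in these coordinates. Conditioned on hiring at round $i$ with projected rank $j$, the hired worker has true rank $r$ with probability $\binom{r-1}{j-1}\binom{n-r}{i-j}/\binom{n}{i}$; hence the distribution $q=q(a)$ over the true rank of the hired worker (together with an atom for ``no hire'') is a linear function of $a$ with coefficients of polynomial bit-size. Because the two candidate pools are independent and the common value function fixes a total order on outcomes, $v(\pi,\pi')$ is just the probability that an independent draw from $q(a)$ beats one from $q(a')$, which is a bilinear form $v(\pi,\pi')=a^{t}Ma'$ for an efficiently computable, polynomial-bit-size matrix $M$. Together with the polytope $K$ and the rounding above, this furnishes all four ingredients of Section~\ref{sec:blexact}.

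It then remains to apply Lemma~\ref{lem:lp} to the bilinear duel $(K,K,M)$: since $K$ is cut out by $O(n^2)$ halfspaces and $M$ has polynomial bit-size, the linear program~(\ref{eq:LP}) has polynomial size and is solvable in polynomial time, returning the value of the game and a minmax-optimal $a^{\star}\in K$; rounding $a^{\star}$ back to $\pi^{\star}$ gives an efficiently computable minmax strategy, and since the duel is symmetric its value is $1/2$ and $(\pi^{\star},\pi^{\star})$ is an equilibrium. I expect the crux to be the polytope characterization: showing that the image of $\pi\mapsto a$ is exactly $K$ — both that every strategy lands in $K$ and that every point of $K$ is attained by the rounding, so that $K$ is the convex hull of the pure-strategy images — which is where the independence of relative ranks and the survival-probability bookkeeping do the real work. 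The remaining steps are routine accounting with binomial coefficients and a direct appeal to Lemma~\ref{lem:lp}.
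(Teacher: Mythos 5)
Your proposal is correct and follows essentially the same route as the paper: both pass from the behavioral strategy $\pi(i,j)$ to the "total hiring probability" variables (your $a_{ij}$, the paper's $p_{ij}$) via $a_{ij}=s_i\pi(i,j)/i$, cut out the resulting polytope by $O(n^2)$ linear constraints, compute the payoff matrix entrywise through Bayes' rule and the binomial count $\binom{r-1}{j-1}\binom{n-r}{i-j}$, and then invoke the bilinear-duel LP of Lemma~\ref{lem:lp}. The only cosmetic difference is that you eliminate the auxiliary survival probabilities $s_i=q_i$ by substitution (and correctly observe their nonnegativity is implied by averaging), whereas the paper keeps them as explicit variables.
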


The main idea is to represent strategies $\pi$ by vectors $\{p_{ij}\}$
where $p_{ij}$ is the (total) probability of hiring the $j$'th best
candidate seen so far on round $i$.  Let $q_i$ be the probability of
reaching round $i$, and note it can be computed from the $\{p_{ij}\}$.
Recall $\pi(i,j)$ is the probability of hiring the $j$'th best so far
at round $i$ conditional on seeing the $j$'th best so far at round
$i$.  Thus using Bayes' Rule we can derive an efficiently-computable
bijective mapping (with an efficiently computable inverse) $\phi(\pi)$
between $\pi$ and $\{p_{ij}\}$ which simply sets
$\pi(i,j)=p_{ij}/(q_i/i)$.  It only remains to show that one can find
a matrix $M$ such that the payoff of a strategy $\pi$ versus a
strategy $\pi'$ is $\phi(\pi)^tM\phi(\pi')$.  This is done by
calculating the appropriate binomials.  

We show how to apply the bilinear duel framework to compute the
equilibrium of the independent hiring duel.  This requires the
following steps: define a subset $K$ of Euclidean space to represent
strategies, define a bijective mapping between $K$ and feasible
(mixed) strategies $\Delta(X)$, and show how to represent the payoff
matrix of strategies in the bilinear duel space.  We discuss each step
in order.

{\bf Defining $K$.}  For each $1\leq i\leq n$ and $j\leq i$ we
define $p_{ij}$ to be the (total) probability of seeing and hiring the
$j$'th best candidate seen so far at round $i$.  Our subspace
$K=[0,1]^{n(n+1)/2}$ consists of the collection of probabilities
$\{p_{ij}\}$.  To derive constraints on this space, we introduce a new
variable $q_i$ representing the probability of reaching round $i$.  We
note that the probability of reaching round $(i+1)$ must equal the
probability of reaching round $i$ and {\it not} hiring, so that
$q_{i+1}=q_i-\sum_{j=1}^np_{ij}$.  Furthermore, the probability
$p_{ij}$ can not exceed the probability of reaching round $i$ and
interviewing the $j$'th best candidate seen so far.  The probability
of reaching round $i$ is $q_i$ by definition, and the probability that
the projected rank of the $i$'th candidate is $j$ is $1/i$ by our
choice of a uniformly random permutation.  Thus $p_{ij}\leq q_i/i$.
Together with the initial condition that $q_i=1$, these constraints
completely characterize $K$.

{\bf Mapping.}  Recall a strategy $\pi$ indicates for each
$i$ and $j\leq i$ the {\it conditional} probability of making an offer
given that the employer is interviewing the $i$'th candidate and his
projected rank is $j$ whereas $p_{ij}$ is the {\it total} probability
of interviewing the $i$'th candidate with a projected rank of $j$ and
making an offer.  Thus $\pi(i,j)=p_{ij}/(q_i/i)$ and so
$p_{ij}=q_i\pi(i,j)/i$.  Together with the equailities derived above
that $q_1=1$ and $q_{i+1}=q_i-\sum_{j=1}^np_{ij}$, we can recursively
map any strategy $\pi$ to $K$ efficiently.  To map back we just take
the inverse of this bijection: given a point $\{p_{ij}\}$ in $K$, we
compute the (unique) $q_i$ satisfying the constraints $q_1=1$ and
$q_{i+1}=q_i-\sum_{j=1}^np_{ij}$, and define
$\pi(i,j)=p_{ij}/(q_i/i)$.

{\bf Payoff Matrix.}  By the above definitions, for any strategy $\pi$
and corresponding mapping $\{p_{ij}\}$, the probability that the
strategy hires the $j$'th best so far on round $i$ is $p_{ij}$.  Given
that employer $A$ hires the $j$'th best so far on round $i$ and
employer $B$ hires the $j'$'th best so far on round $i'$, we define
$M_{iji'j'}$ to be the probability that the overall rank of employer
$A$'s hire beats that of employer $B$'s hire plus one-half times the
probability that their ranks are equal.  We can derive the entries of
the this matrix as follows: Let $E^X_r$ be the event that with respect
to permutation $\sigma_X$ the overall rank of a fixed candidate is
$r$, and $F^X_{ij}$ be the event that the projected rank of the last
candidate in a random prefix of size $i$ is $j$.  Then
$$M_{iji'j'}=\sum_{r,r':1\leq r<r'\leq
  n}\Pr[E^A_r|F^A_{ij}]\Pr[E^B_{r'}|F^B_{i'j'}]+\frac{1}{2}\sum_{1\leq r\leq
  n}\Pr[E^A_r|F^A_{ij}]\Pr[E^B_r|F^B_{i'j'}].$$
Furthermore, by Bayes rule,
$\Pr[E^X_r|F^X_{ij}]=\Pr[F^X_{ij}|E^X_r]\Pr[E^X_r]/\Pr[F^X_{ij}]$
where $\Pr[E^X_r]=1/n$ and $\Pr[F^X_{ij}]=1/i$.  To compute
$\Pr[F^X_{ij}|E^X_r]$, we select the ranks of the other candidates in
the prefix of size $i$.  There are ${r-1\choose j-1}$ ways to pick the
ranks of the better candidates and ${n-r+1\choose i-j}$ ways to pick
the ranks of the worse candidates.  As there are ${n-1\choose i-1}$
ways overall to pick the ranks of the other candidates, we see:
$$\Pr[F^X_{ij}|E^X_r]=\frac{{r-1\choose j-1}{n-r+1\choose i-j}}{{n-1\choose i-1}}.$$
Letting $\{p_{ij}\}$ be the mapping $\phi(\pi)$ of employer $A$'s
strategy $\pi$ and $\{p'_{ij}\}$ be the mapping $\phi(\pi)$ of
employer $B$'s strategy $\pi'$, we see that
$c(\pi,\pi')=\phi(\pi)^tM\phi(\pi')$, as required.

By the above arguments, and the machinery from
Section~\ref{sec:bilinear}, we have proven
Theorem~\ref{thm:separatehiring} which claims that the equilibrium of
the independent hiring duel is computable.

\section{Compression Duel}
\label{sec:compression}

In a compression duel, two competitors each choose a binary tree with leaf set $\Omega$.  An element $\omega \in \Omega$ is then chosen according to distribution $p$, and whichever player's tree has $\omega$ closest to the root is the winner.  This game can be thought of as a competition between prefix-free compression schemes for a base set of words.
The Huffman algorithm, which repeatedly pairs nodes with lowest probability, is known to be optimal for single-player compression.

The compression duel is $D(X,\Omega,c,p)$, where $\Omega = [n]$ and $X$ is the set of binary trees with leaf set $\Omega$.
For $T \in X$ and $\omega \in \Omega$, $c(T,\omega)$ is the depth of $\omega$ in $T$.
In Section \ref{sec.compress.fail} we consider a variant in which not every element of $\Omega$ must appear in the tree.


\subsection{Computing an equilibrium}

The compression duel can be represented as a bilinear game.   In this case, $K$ and $K'$ will be sets of stochastic matrices, where a matrix entry $\{x_{ij}\}$ indicates the probability that item $\omega_i$ is placed at depth $j$.  The set $K$ is precisely the set of probability distributions over node depths that are consistent with probability distributions over binary trees.
We would like to compute minmax optimal algorithms as in Section \ref{sec:blexact}, but we do not have a randomized rounding scheme that maps elements of $K$ to binary trees.  Instead, following Section \ref{sec:appx}, we will find approximate minmax strategies by constructing an $\eps$-best response oracle.


The mapping $\phi : X \to K$ is straightforward: it maps a binary tree to its depth profile.  Also, the expected payoff of $x \in K$ versus $x' \in K'$ is
$\sum_i p(i) \sum_j x_{ij} \left( \frac{1}{2} x'_{ij} + \sum_{k > j} x'_{ij} \right)$
which can be written as $x^t M x'$ where matrix $M$ has bounded entries.  To apply Lemma \ref{lem:appx}, we must now provide an $\eps$ best response oracle, which we implement by reducing to a knapsack problem.

%
%
%
Fix $p$ and $x' \in K'$.  We will reduce the problem of finding a best response for $x'$ to the multiple-choice knapsack problem (MCKP), for which there is an FPTAS \cite{Lawler-79}.  In the MCKP, there are $n$ lists of items, say $\{ (\alpha_{i1}, \dotsc, \alpha_{ik_i})\ |\ 1 \leq i \leq n \}$, with each item $\alpha_{ij}$ having a value $v_{ij} \geq 0$ and weight $w_{ij} \geq 0$.  The problem is to choose exactly one item from each list with total weight at most $1$, with the goal of maximizing total value.  
Our reduction is as follows.  For each $\omega_i \in \Omega$ and $0 \leq j \leq n$, define $w_{ij} = 2^{-j}$ and
$v_{ij} = p(\omega_i) \left( \frac{1}{2} x'_{ij} + \sum_{d > j}x'_{id} \right)$.
This defines a MCKP input instance.  For any given $t \in X$,
$v(\phi(t),x') = \sum_{\omega_i \in \Omega} v_{i d_t(i)}$
and
$\sum_{\omega_i \in \Omega} w_{i, d_t(i)} \leq 1$
by the Kraft inequality.  Thus, any strategy for the compression duel can be mapped to a solution to the MCKP.  Likewise, a solution to the MCKP can be mapped in a value-preserving way to a binary tree $t$ with leaf set $\Omega$, again by the Kraft inequality.  This completes the reduction.
%

\subsection{Beatability}

We will obtain a bound of $3/4$ on the beatability of the Huffman algorithm.  The high-level idea is to choose an arbitrary tree $T$ and consider the leaves for which $T$ beats $H$ and vice-versa.  We then apply structural properties of trees to limit the relative sizes of these sets of leaves, then use properties of Huffman trees to bound the relative probability that a sampled leaf falls in one set or the other.  

Before bounding the beatability of the Huffman algorithm in the No Fail compression model, we review some facts about Huffman trees.  Namely, that nodes with lower probability occur deeper in the tree, and that siblings are always paired in order of probability (see, for example, page 402 of Gersting \cite{gersting-93}.  In what follows, we will suppose that $H$ is a Huffman tree.

\begin{fact}
\label{fact.huff.depths}
If $d_H(v_1) > d_H(v_2)$ then $p_H(v_1) \leq p_H(v_2)$.
\end{fact}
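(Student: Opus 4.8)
The plan is to prove the contrapositive by a subtree-exchange argument, leaning on the single-player optimality of Huffman trees (a Huffman tree minimizes the expected leaf depth $\sum_{\omega \in \Omega} p(\omega)\, d_H(\omega)$ among all binary trees with leaf set $\Omega$, as already noted above). Write $p_H(v)$ for the total $p$-mass of the leaves in the subtree rooted at $v$. Suppose, for contradiction, that $d_H(v_1) > d_H(v_2)$ yet $p_H(v_1) > p_H(v_2)$.

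First I would dispose of the degenerate configurations involving an ancestor relationship. Since $d_H(v_1) > d_H(v_2)$, the node $v_1$ is strictly deeper than $v_2$, so $v_1$ cannot be an ancestor of $v_2$. If instead $v_2$ is an ancestor of $v_1$, then the subtree rooted at $v_1$ is contained in the subtree rooted at $v_2$, whence $p_H(v_1) \leq p_H(v_2)$ automatically and there is nothing to prove. Thus we may assume neither node is an ancestor of the other, so the subtrees $S_1$ and $S_2$ rooted at $v_1$ and $v_2$ are vertex-disjoint.

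Now form a new tree $H'$ by swapping $S_1$ and $S_2$ — attaching $S_1$ where $v_2$ hung and $S_2$ where $v_1$ hung. Every leaf of $S_1$ has its depth decreased by exactly $\delta := d_H(v_1) - d_H(v_2) > 0$, every leaf of $S_2$ has its depth increased by $\delta$, and all other leaf depths are unchanged; $H'$ is again a valid binary tree with leaf set $\Omega$. Hence the expected depth changes by $-\delta\, p_H(v_1) + \delta\, p_H(v_2) = \delta\,(p_H(v_2) - p_H(v_1)) < 0$, so $H'$ strictly improves on $H$, contradicting that $H$ is a Huffman tree and therefore cost-minimal. Therefore $p_H(v_1) \leq p_H(v_2)$.

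The only real subtlety is the case split on the ancestor relationship — and the observation that in one branch the claim is immediate from monotonicity of subtree weights, while in the other the exchange is well-defined because the subtrees are disjoint; everything else is a one-line computation. An alternative would be a direct induction on the sequence of merges performed by Huffman's algorithm, using the invariant that each newly created internal node weighs at least as much as either of its children and sits no deeper than any node created before it; I expect that bookkeeping to be more cumbersome than the exchange argument, so I would take the latter route.
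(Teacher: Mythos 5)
Your proof is correct. The paper itself does not prove this Fact; it simply cites a textbook (Gersting, p.~402) and moves on, so there is no internal argument to compare against. Your subtree-exchange argument is the standard one, and you handle the ancestor case split cleanly: $v_1$ cannot be an ancestor of $v_2$ because it is strictly deeper, and if $v_2$ is an ancestor of $v_1$ the conclusion is immediate by subtree containment; in the remaining disjoint case the swap is well-defined and strictly decreases $\sum_{\omega} p(\omega)\,d_H(\omega)$, contradicting optimality. One small observation worth making: your argument uses only that $H$ is cost-minimal, so it actually establishes the property for \emph{any} optimal binary tree, not just one produced by Huffman's greedy merging rule. That distinction is harmless here, but it is relevant for the paper's Fact~\ref{fact.huff.sibling}, which genuinely depends on the greedy pairing structure and would not follow from optimality alone.
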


\begin{fact}
\label{fact.huff.sibling}
If $v_1$ and $v_2$ are siblings with $p_H(v_1) \leq p_H(v_2)$, then for every node $v_3 \in H$ either $p_H(v_3) \leq p_H(v_1)$ or $p_H(v_3) \geq p_H(v_2)$.
\end{fact}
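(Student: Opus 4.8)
\emph{Proof proposal.} The plan is to trace the execution of Huffman's algorithm, which maintains a forest of trees and repeatedly merges the two trees of minimum weight, where the weight of a tree is $p_H$ of its root (the total probability of its leaves). The key structural point is that trees are only ever merged, never split, so each node $v$ of the final tree $H$ is the root of a fully assembled subtree throughout a contiguous block of steps --- from the step at which it is created (the start, if $v$ is a leaf) until the step at which it is merged into its parent --- and outside that block it is either a union of several smaller current roots (before) or a proper descendant of some current root (after). I will use two elementary invariants: (i) at any merge step, both merged trees realize the two smallest weights in the current forest, so every \emph{other} tree then present has weight at least that of each merged tree; and (ii) if at step $s$ a node $v$ is not itself a current root but has not yet been merged into its parent, then some proper descendant of $v$ is a current root at step $s$ (since $v$'s subtree is not yet assembled, but its leaves are).

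First I would fix notation: let $\pi$ be the common parent of $v_1$ and $v_2$ and let $k$ be the step at which $\pi$ is created, so that at step $k$ the trees $v_1,v_2$ are the two minimum-weight trees and, by (i), every other tree then present has weight $\ge p_H(v_2)$. Now take an arbitrary node $v_3$ and do a case analysis on its position. If $v_3$ is a descendant of $v_1$ (including $v_1$), then $p_H(v_3)\le p_H(v_1)$ and we are done; if $v_3$ equals $v_2$ or is an ancestor of $v_2$, then $p_H(v_3)\ge p_H(v_2)$ and we are done. Otherwise $v_3$ is neither, so it is not $\pi$ nor a strict ancestor of $\pi$; hence $v_3$ is already present in the step-$k$ forest, i.e.\ $v_3$ is a descendant (possibly equal) of some current root $R$ with $R\ne v_1$. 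If $v_3=R$ then $R\ne v_2$ (as $v_3\ne v_2$ here), so $p_H(v_3)=p_H(R)\ge p_H(v_2)$ by (i), and we are done.

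The crux is the remaining subcase: $v_3$ is a \emph{proper} descendant of a step-$k$ root $R\ne v_1$ (possibly $R=v_2$). Here I claim $p_H(v_3)\le p_H(v_1)$. Let $s$ be the step at which $v_3$ is merged with its sibling $v_3'$; since $R$'s subtree is untouched at step $k$, this merge happened at some step $s<k$, and by (i) every tree present at step $s$ other than $v_3,v_3'$ has weight $\ge p_H(v_3)$. Since $v_1$ is merged into $\pi$ only at step $k>s$, at step $s$ the node $v_1$ has not yet been merged up, so either $v_1$ is itself a current root or, by (ii), some proper descendant $\rho$ of $v_1$ is a current root. In the first case $v_1\notin\{v_3,v_3'\}$ because $v_1$ and $R$ are distinct step-$k$ roots with disjoint subtrees while $v_3,v_3'$ lie in $R$'s subtree; hence $p_H(v_1)\ge p_H(v_3)$. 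In the second case $\rho\notin\{v_3,v_3'\}$ for the same disjointness reason, so $p_H(v_3)\le p_H(\rho)\le p_H(v_1)$. Either way $p_H(v_3)\le p_H(v_1)$, which finishes the case analysis.

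I expect the main obstacle to be exactly this last subcase: ruling out a node whose weight lies strictly between $p_H(v_1)$ and $p_H(v_2)$ but which is buried deep inside $v_2$'s subtree (or a side tree) and was assembled long before $v_1$ was. The resolution rests on the ``merge, never split'' monotonicity --- so $v_1$ cannot vanish from the forest and later reappear --- together with invariant (ii), which supplies a witness $\rho$ still sitting in the forest at the moment $v_3$'s parent is formed. Everything else is bookkeeping about the chronology of the merges; one could also package the same idea as a strong induction on the number of merge steps, but the explicit timeline seems cleaner to state.
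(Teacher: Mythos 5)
Your proof has a genuine gap in the case analysis. After reducing to the case ``$v_3$ is not a descendant of $v_1$, and not $v_2$ or one of its ancestors,'' you assert that $v_3$ must already be present in the step-$k$ forest because it is not $\pi$ nor a strict ancestor of $\pi$. That inference is invalid: the nodes absent from the forest just before step $k$ are $\pi$ together with \emph{every} internal node created at a later step, and those later-created nodes need not be ancestors of $\pi$. Concretely, take leaf weights $1,2,4,5,8,10$. Huffman merges $(1,2)\to A$, then $(A,4)\to B$, then $(5,B)\to C$, then $(8,10)\to D$, then $(C,D)\to$ root. With $v_1=A$, $v_2$ the leaf of weight $4$, $\pi=B$, and $k=2$, the node $D$ is not a descendant of $v_1$, is not $v_2$ or an ancestor of $v_2$, and is not $\pi$ or a strict ancestor of $\pi$, yet it is created only at step~$4$ and so lies outside every case you consider.

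The hole is fillable with one more chronological observation. Immediately after step $k$, every tree in the forest has weight at least $p_H(v_2)$: the new root $\pi$ has weight $p_H(v_1)+p_H(v_2)\ge p_H(v_2)$, and every other tree already had weight $\ge p_H(v_2)$ by your invariant (i). A merge of two trees each of weight $\ge p_H(v_2)$ produces a tree of weight $\ge p_H(v_2)$, so this lower bound persists for all subsequent steps, and therefore any node created at or after step $k$ has $p_H$-value at least $p_H(v_2)$. Inserting this as an additional case closes the gap; the remaining cases you treat, including the ``crux'' subcase about a node buried inside a step-$k$ root $R\neq v_1$, are argued soundly via invariants (i) and (ii). Note also that the paper does not prove this Fact --- it cites it as a textbook property of Huffman trees (Gersting) --- so there is no authorial proof to compare your approach against.
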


We next give a bound on the relative probabilities of nodes on any given level of a Huffman tree, subject to the tree not being too ``sparse'' at the subsequent (deeper) level.  Let $p_H^{min}(d) = \min_{v : d_H(v) = d}p_H(v)$ and $p_H^{max}(d) = \max_{v : d_H(v) = d}p_H(v)$.
\begin{lemma}
\label{lem.huff.mult3}
Choose any $d < \max_v d_H(v)$ and nodes $v, w$ such that $d_H(w) = d_H(v) = d$.  If $v$ is not the common ancestor of all nodes of depth greater than $d$, then $p_H(w) \leq 3p_H(v)$.
\end{lemma}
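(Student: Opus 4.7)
The plan is to sort the depth-$d$ nodes by probability as $n_1 \leq n_2 \leq \cdots \leq n_m$ (noting $m \geq 2$, since the hypothesis $d < \max_v d_H(v)$ forces an internal node at depth $d-1$ whose two children live at depth $d$) and to derive the factor $3$ from two simple inequalities extracted from Facts~\ref{fact.huff.depths} and~\ref{fact.huff.sibling}.

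First I would establish the inequality $p_H(n_m) \leq p_H(n_1) + p_H(n_2)$. I begin by arguing that, with a suitable tie-breaking in the sort, Fact~\ref{fact.huff.sibling} forces the sibling pairing among depth-$d$ nodes to be the consecutive pairing $(n_1,n_2), (n_3,n_4), \ldots, (n_{m-1},n_m)$: any other pairing would place some $n_i$ with probability strictly between two siblings, violating Fact~\ref{fact.huff.sibling}. The parent $P$ of the pair $(n_1, n_2)$ then sits at depth $d-1$ with $p_H(P) = p_H(n_1) + p_H(n_2)$, and Fact~\ref{fact.huff.depths} applied to $n_m$ (depth $d$) and $P$ (depth $d-1$) yields $p_H(n_m) \leq p_H(P)$.

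Next I would establish $p_H(n_2) \leq 2\,p_H(n_1)$ using the non-ancestor hypothesis. Since $v = n_1$ is not the common ancestor of all nodes deeper than $d$, some node at depth $>d$ lies outside $v$'s subtree, and its ancestor at depth $d$ is some internal node $u \neq v$. Both children of $u$ live at depth $d+1$, hence have probability at most $p_H(n_1)$ by Fact~\ref{fact.huff.depths}; summing gives $p_H(u) \leq 2\,p_H(n_1)$, and since $u \neq n_1$ the sorted ordering forces $p_H(n_2) \leq p_H(u)$.

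To finish I would split on $v$. If $p_H(v) \geq p_H(n_2)$, the first inequality alone yields $p_H(w) \leq p_H(n_m) \leq 2\,p_H(n_2) \leq 2\,p_H(v)$. If $v = n_1$, chaining both inequalities gives $p_H(w) \leq p_H(n_1) + p_H(n_2) \leq p_H(n_1) + 2\,p_H(n_1) = 3\,p_H(v)$. The main obstacle I anticipate is handling ties in probability cleanly, since Fact~\ref{fact.huff.sibling} only excludes nodes \emph{strictly} between two siblings: with ties the literal consecutive pairing of the sorted list need not coincide with the actual sibling pairing. I would dispatch this by choosing a tie-breaking that places each actual sibling pair consecutively, which is always possible because Fact~\ref{fact.huff.sibling} still forces sibling probabilities to be adjacent values in the multiset of depth-$d$ probabilities.
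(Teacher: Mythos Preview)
Your argument is correct and shares the decisive idea with the paper---use the hypothesis to locate an internal node at depth $d$ other than $v$ and bound its probability by $2p_H(v)$ via its two children and Fact~\ref{fact.huff.depths}---but it is organized more elaborately than necessary. The paper works directly with $v$: having found a non-leaf $z \neq v$ at depth $d$ with $p_H(z) \leq 2p_H(v)$, it applies Fact~\ref{fact.huff.sibling} once (to the sibling pair $(v,\text{sib}(v))$ with third node $z$) to conclude that $v$'s sibling has probability at most $2p_H(v)$, so the parent of $v$ has probability at most $3p_H(v)$, and a final appeal to Fact~\ref{fact.huff.depths} bounds $p_H(w)$. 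There is no sorting of the depth-$d$ nodes, no global consecutive-pairing claim, and no case split on where $v$ falls in the sorted order.

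Your two inequalities mirror the paper's two steps in disguise: your second inequality ($p_H(n_2)\leq 2p_H(n_1)$ via an internal $u\neq v$) is essentially the paper's bound $p_H(z)\leq 2p_H(v)$, and your first inequality ($p_H(n_m)\leq p_H(n_1)+p_H(n_2)$ via the parent of the smallest pair) plays the role of the paper's ``parent of $v$'' step. The detour through $n_1,\ldots,n_m$ buys nothing extra and creates the tie-breaking nuisance you flag; in particular, your proposed fix (``choose a tie-breaking that places each actual sibling pair consecutively'') is not obviously compatible with keeping the list non-decreasing using only Fact~\ref{fact.huff.sibling} as stated, whereas the paper's single application of Fact~\ref{fact.huff.sibling} to one pair and one third node sidesteps this entirely.
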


\begin{proof}
Let $a = p_H(v)$.  By assumption there exists a non-leaf node $z \neq v$ with $d_H(z)=d$, say with children $z_1$ and $z_2$.
Then $p_H(z_1) \leq a$ and $p_H(z_2) \leq a$ by Fact \ref{fact.huff.depths}, so $p_H(z) \leq 2a$.  This implies that $v$'s sibling has probability at most $2a$ by Fact \ref{fact.huff.sibling}, so the parent of $v$ has probability at most $3a$.  Fact \ref{fact.huff.depths} then implies that $p_H(w) \leq 3a$ as required.
\end{proof}

For any $T \in X$ and set of nodes $R \subseteq T$ we define the weight of $R$ to be $w_T(R) = \sum_{v \in R}2^{-d_T(v)}$.  The Kraft inequality for binary trees is $w_T(T) \leq 1$.  In fact, we have $w_T(T) = 1$ since we can assume each interior node of $T$ has two children.

\begin{lemma}
\label{lem.huff.weight}
Choose $R \subseteq H$ such that no node of $R$ is a descendent of any other, and suppose $w(R) = 2^{-d}$ for some $d \in [n]$.  Then $p^{min}_H(d) \leq p(R) \leq p^{max}_H(d)$.
\end{lemma}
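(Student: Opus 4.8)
The plan is to separate two easy observations from the one genuinely hard case. Throughout, write $p(R)=\sum_{v\in R}p_H(v)$, i.e.\ the probability that a random leaf (drawn from $p$) lies in the subtree of some node of $R$; since the nodes of $R$ are pairwise incomparable, these subtrees are disjoint. First, every node of $R$ has depth at least $d$: if some $v\in R$ had $d_H(v)<d$, then already $w(R)\ge 2^{-d_H(v)}>2^{-d}$, contradiction. Second --- and this alone settles the statement whenever $R$ lies inside a single ``depth-$d$ subtree'' --- if all nodes of $R$ descend from a single node $u$ with $d_H(u)=d$, then $w(R)=2^{-d}$ is exactly the Kraft weight of the whole subtree rooted at $u$, so $R$ covers every positive-probability leaf below $u$; hence $p(R)=p_H(u)$, and $p_H^{min}(d)\le p_H(u)\le p_H^{max}(d)$ by definition. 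Note this single case already yields both the lower and the upper bound.

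For the general case, where $R$ may straddle several depth-$d$ subtrees, I would induct --- say, on the number of leaves of $H$, using the recursive structure of Huffman trees (contract the two lowest-probability leaves $p_1\le p_2$ into their parent $P$, obtaining a Huffman tree $H'$ on $n-1$ weights), or equivalently on $\sum_{v\in R}(d_H(v)-d)$ via two weight-preserving moves on $R$: (i) replacing two siblings of $R$ by their parent, which changes neither $w(R)$ nor $p(R)$ and keeps $R$ an antichain; and (ii) replacing a node of $R$ by its sibling, which preserves $w(R)$ (equal depths) and shifts $p(R)$ monotonically, so that it can be pushed toward whichever of the two bounds is being proved. The cases in which $R$ meets $\{p_1,p_2,P\}$ in a ``closed'' way --- it contains both $p_1$ and $p_2$, or neither, or only $P$ --- are routine: $R$ then maps to an antichain of the same weight and probability in $H'$, and the depth-$d$ node sets of $H$ and $H'$ coincide, except when $d_H(P)=d-1$, in which case the only new depth-$d$ nodes of $H$ are $p_1,p_2$, the two globally smallest leaves, which merely widen $[p_H^{min}(d),p_H^{max}(d)]$.

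The hard part is the one remaining configuration, where $R$ ``half-covers'' a local subtree: it contains exactly one of two siblings and not their parent. A single sibling-swap just relocates the problem to the other sibling, and no argument using only the Kraft inequality and Fact \ref{fact.huff.depths} can suffice --- one can exhibit a (non-Huffman) tree with the same depth profile in which an antichain scoops up the near-maximal child from each of two distinct depth-$d$ subtrees while discarding the tiny siblings, overshooting $p_H^{max}(d)$. The role of Fact \ref{fact.huff.sibling} is precisely to forbid this: if two such near-maximal children lay in distinct depth-$d$ subtrees rooted at $a_1$ and $a_2$, then Fact \ref{fact.huff.sibling} together with Fact \ref{fact.huff.depths} would force each of $a_1,a_2$ to be created strictly before the other in the Huffman merge order, which is impossible. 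Making this precise --- and, in particular, reducing the general multi-subtree, multi-level straddling case down to such a two-subtree comparison so that $p(R)$ stays squeezed between $p_H(a_1)$ and $p_H(a_2)$ --- is where essentially all of the difficulty lies, and is the step I expect to be the main obstacle.
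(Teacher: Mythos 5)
You correctly set up the easy observations (every node of $R$ lies at depth at least $d$; the single-subtree case is immediate), but your induction, with moves (i) merge two siblings of $R$ and (ii) swap a node for its sibling, leaves you stuck precisely on the ``one sibling in $R$, one out'' configuration --- and you say so. The paper never meets that configuration because its one replacement move is strictly more general than your move (i). Inducting on $|R|$, it takes $v$ and $w$ to be the two nodes of $R$ at \emph{maximum} depth having the \emph{smallest} probabilities, $p_H(v)\le p_H(w)$ (at least two nodes must sit at the bottom level once $|R|\ge 2$ and $w(R)=2^{-d}$). It then replaces $\{v,w\}$ by the parent $w'$ of $w$ \emph{alone}, even though $v$ is typically not $w$'s sibling: $v$ and $w$ have equal depth, so $w(R)$ is unchanged, $|R|$ drops by one, and $R'=(R\setminus\{v,w\})\cup\{w'\}$ is again an antichain (if $w$'s sibling already lies in $R$, one simply merges the two siblings instead, which is your move (i)). The only price is to show $p(R')\ge p(R)$, i.e.\ that $w$'s actual sibling $s$ satisfies $p_H(s)\ge p_H(v)$; this is exactly what Fact~\ref{fact.huff.sibling} supplies, applied once to the sibling pair $\{s,w\}$ with $v$ as the third node: no probability can fall strictly between a Huffman sibling pair, so $p_H(v)\le p_H(w)$ forces $p_H(s)\ge p_H(v)$, hence $p_H(w')=p_H(w)+p_H(s)\ge p_H(w)+p_H(v)$ and $p(R)\le p(R')$. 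The base case $|R|=1$ forces $d_H(v)=d$, giving $p(R)\le p^{max}_H(d)$; the lower bound is the mirror image with the two \emph{largest} maximum-depth probabilities.

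So the gap in your proposal is the specific induction step: by constraining yourself to merging siblings you make the ``half-cover'' configuration a genuine obstacle, whereas pairing \emph{arbitrary} equal-depth nodes at the bottom level and cashing in Fact~\ref{fact.huff.sibling} once per step dissolves it. Your move (ii) is never needed, and the global ``two near-maximal children in distinct depth-$d$ subtrees'' argument you gesture at aims at a harder statement than the one local inequality the induction actually requires.
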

\begin{proof}
We will show $p(R) \leq p^{max}_H(d)$; the argument for the other inequality is similar.  We proceed by induction on $|R|$.  If $|R| = 1$ the result is trivial (since $R = \{v\}$ where $d_H(v) = d$).  Otherwise, since $w(R) = 2^{-d}$, there must be at least two nodes of the maximum depth present in $R$.  Let $v$ and $w$ be the two such nodes with smallest probability, say with $p_H(v) \leq p_H(w)$.  Let $w'$ be the parent of $w$.  Then $p_H(w') \geq p_H(w) + p_H(v)$, since the sibling of $w$ has weight at least $p_H(v)$ by Fact \ref{fact.huff.sibling}.  Also, $w' \not\in R$ since $w \in R$ and no node of $R$ is a descendent of any other.  Let $R' = R \cup \{w'\} - \{w,v\}$.  Then $w(R') = w(R)$, $p(R') \geq p(R)$, and no node of $R'$ is a descendent of any other.  Thus, by induction, $p(R) \leq p(R') \leq p^{max}_H(d)$ as required.
\end{proof}

We are now ready to show that the beatability of the Huffman algorithm is at most $\frac{3}{4}$.

\begin{proposition}
\label{prop.huffman.beatable}
The beatability of the Huffman algorithm is at most $\frac{3}{4}$.
\end{proposition}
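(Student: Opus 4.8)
The plan is to fix the distribution $p$, let $H$ be any Huffman tree for $p$, let $T$ be an arbitrary competing tree, and prove $v(T,H)\le 3/4$. The first step is a purely formal reformulation. Write $W=\{\omega: d_T(\omega)<d_H(\omega)\}$, $E=\{\omega:d_T(\omega)=d_H(\omega)\}$, $L=\{\omega:d_T(\omega)>d_H(\omega)\}$ for the sets of leaves on which $T$ wins, ties, and loses, and write $p(W),p(E),p(L)$ for their probabilities. Since $p(W)+p(E)+p(L)=1$ and $v(T,H)=p(W)+\tfrac12 p(E)$, we get $v(T,H)=\tfrac12+\tfrac12\bigl(p(W)-p(L)\bigr)$, so the proposition is equivalent to the single inequality $p(W)-p(L)\le \tfrac12$. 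I may also assume $T$ is a full binary tree, since contracting a degree-one node only moves $T$'s leaves toward the root, which can only help the challenger; in particular $w_T(T)=w_H(H)=1$.

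The second step handles the easy regime. From $v(T,H)=p(W)+\tfrac12 p(E)\le p(W)+\tfrac12(1-p(W))=\tfrac12+\tfrac12 p(W)$ we see the bound is immediate whenever $p(W)\le \tfrac12$. So the entire content is to show that a heavy winning set $W$ is forced, by the structure of $H$, to create an equally heavy losing set: precisely, $p(L)\ge p(W)-\tfrac12$.

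The third and main step is a charging argument along the lines of the high-level sketch. For each $\omega\in W$ let $a(\omega)$ be the ancestor of $\omega$ in $H$ at depth $d_T(\omega)$; because $\omega$ lies strictly below it, $a(\omega)$ is an internal node of $H$. Grouping the leaves of $W$ by $a(\cdot)$ partitions $W$, and the leaves sent to a fixed node $u$ are distinct $H$-leaves below $u$, so their total probability is at most $p_H(u)$; hence $p(W)\le\sum_u p_H(u)$ over the internal $H$-nodes $u$ arising as some $a(\omega)$, and after replacing each $u$ by its shallowest such ancestor we may take this family to be an antichain in $H$ (the inequality only improves). I would then process this antichain level by level. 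At a level $d$ below the root, Fact~\ref{fact.huff.depths} says node probabilities only decrease with depth, and Lemma~\ref{lem.huff.mult3} says that, apart from the single node that is the common ancestor of everything deeper, all depth-$d$ node probabilities are within a factor $3$ of one another; meanwhile Lemma~\ref{lem.huff.weight} identifies the probability of any weight-$2^{-d}$ $H$-antichain with the range $[p_H^{min}(d),p_H^{max}(d)]$. On the $T$-side the Kraft identity $w_T(T)=1$ bounds how much shallow weight $W$ can occupy, and the sibling subtrees that $H$ attaches next to the nodes $a(\omega)$ — which $H$ therefore keeps at depth close to $d_T(\omega)$, hence shallower than $T$ places them — are the natural source of the displaced loss-probability $p(L)$. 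Combining these ingredients, each internal node $u=a(\omega)$ at depth $d$ carries probability at most $3$ times that of a ``typical'' depth-$d$ node, while the sibling of $u$ contributes at least one such unit to $p(L)$; pushing this accounting through the antichain, together with the factor-$2$ Kraft bookkeeping, is what is meant to yield $p(W)\le\tfrac12+p(L)$ and hence the constant $3/4$.

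The step I expect to be the real obstacle is exactly this bookkeeping, because the two trees measure depth on incompatible scales: the partition of $W$ is indexed by $T$-depths, whereas all the Huffman structure (Facts~\ref{fact.huff.depths}, \ref{fact.huff.sibling} and Lemmas~\ref{lem.huff.mult3}, \ref{lem.huff.weight}) lives on $H$-depths, and the induced map from $W$ to internal nodes of $H$ is many-to-one. One must also treat separately the exceptional node in Lemma~\ref{lem.huff.mult3} — the common ancestor of all deeper nodes — since at a level dominated by one heavy node the factor-$3$ spread bound is useless and a direct weight/Kraft argument is needed instead. I would therefore organize the induction so that the probability $W$ wins at each level of $H$ is charged either against tie- or loss-probability already committed at shallower levels or against fresh loss-probability at that level, and verify that the cumulative excess of $p(W)$ over $p(L)$ never exceeds $\tfrac12$.
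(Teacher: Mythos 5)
Your setup is sound---fixing $H$ and an arbitrary $T$, reducing the claim to $p(W) - p(L) \le \tfrac12$, observing that the whole difficulty is the regime $p(W) > \tfrac12$---and you correctly identify the relevant structural ingredients (Facts~\ref{fact.huff.depths} and~\ref{fact.huff.sibling}, Lemmas~\ref{lem.huff.mult3} and~\ref{lem.huff.weight}, and the Kraft inequality). But the proof has a genuine gap at exactly the point you flag as the ``real obstacle,'' and I do not see how the charging argument as described can be made to close. The problematic step is the assertion that the sibling of $u = a(\omega)$ in $H$ ``contributes at least one such unit to $p(L)$.'' The subtree under that sibling does contain $H$-leaves, but nothing forces those leaves to lie in $L$: $T$ is adversarial and may assign them shallow depths, in which case they land in $W$ or $E$ instead. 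So the mechanism you propose for generating loss-probability is not a lower bound on $p(L)$, and $p(W) \le \tfrac12 + p(L)$ is never actually derived. You also never extract a quantitative consequence from the Kraft identity---in particular you have no analogue of an inequality comparing the $H$-side Kraft weights of $W$ and $L$---and your antichain is indexed by $T$-depths (via $a(\omega)$ at depth $d_T(\omega)$), whereas Lemmas~\ref{lem.huff.mult3} and~\ref{lem.huff.weight} compare nodes at a common $H$-depth, a mismatch you acknowledge but do not resolve.

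The paper's proof avoids the charging entirely and never touches $T$'s internal structure beyond leaf depths. It first uses the Kraft inequality in both trees to prove the combinatorial inequality $w(P) < w(Q)$, where $w$ is $H$-side Kraft weight and $P,Q$ are your $W,L$; this is exactly the quantitative link your sketch is missing. It then writes $w(P)$ and $w(Q)$ in binary and partitions $P$ and $Q$ into pieces $P_i, Q_i$ of $H$-weight $2^{-i}$, so that both sides are now indexed by the \emph{same} $H$-depth $i$. Lemma~\ref{lem.huff.weight} converts a weight-$2^{-i}$ antichain into a probability in $[p_H^{min}(i), p_H^{max}(i)]$, and Lemma~\ref{lem.huff.mult3} gives the factor-$3$ spread at depth $i$; together these yield $p(P_i) \le 3p(Q_i)$ level by level and hence the stronger conclusion $p(P) \le 3p(Q)$, which (together with $p(P) + p(Q) \le 1$) implies $p(P) - p(Q) \le \tfrac12$. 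Your plan is compatible with this in spirit, but without the weight inequality and the binary-expansion partition it lacks the machinery to close.
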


Fix $\Omega$ and $p$.  Let $H$ denote the Huffman tree and choose any other tree $T$.  Define $P = \{ v \in \Omega : d_{T}(v) < d_{H}(v) \}$, $Q = \{ v \in \Omega : d_{T}(v) > d_{H}(v) \}$.  That is, $P$ is the set of elements of $\Omega$ for which $T$ beats $H$, and $Q$ is the set of elements for which $H$ beats $T$.  Our goal is to show that $p(P) < 3p(Q)$, which would imply that $v(T,H) \leq 3/4$.

We first claim that $w(P) < w(Q)$.  To see this, write $U = \Omega -( P\cup Q)$ and note that, by the Kraft inequality,
\begin{equation}
\label{eq.kraft.1}
w(P) + w(Q) + w(U) = 1 = w_{T}(P) + w_{T}(Q) + w_{T}(U).
\end{equation}
Moreover, $w_{T}(Q) > 0$, $w_{T}(U) = w_H(U)$, and $w_{T}(P) \geq 2w(P)$ (since $d_{T}(v) \leq d_H(v) - 1$ for all $v \in P$).   Applying these inequalities to \eqref{eq.kraft.1} implies $w(P) - w(Q) < 0$, completing the claim.

Our approach will be to express $P$ and $Q$ as disjoint unions $P = P_1 \cup \dotsc \cup P_r$ and $Q = Q_1 \cup \dotsc \cup Q_r$ such that $p(P_i) \leq 3p(Q_i)$ for all $i$.  To this end, we express the quantities $w(P)$ and $w(Q)$ in binary: choose $x_1, \dotsc, x_n$ and $y_1, \dotsc, y_n$ from $\{0,1\}$ such that $w(P) = \sum_i x_i 2^{-i}$ and $w(Q) = \sum_i y_i 2^{-i}$.  Since $w(P)$ is a sum of element weights that are inverse powers of two, we can partition the elements of $P$ into disjoint subsets $P_1, \dotsc, P_n$ such that $w(P_i) = x_i 2^{-i}$ for all $i \in [n]$.  Similarly, we can partition $Q$ into disjoint subsets $Q_1, \dotsc, Q_n$ such that $w(Q_i) = y_i 2^{-i}$ for all $i \in [n]$.

Let $r = \min\{ i : x_i \neq y_i \}$.  Note that, since $w(P) < w(Q)$, we must have $x_r = 0$ and $y_r = 1$.

We first show that $p(P_i) \leq 3p(Q_i)$ for each $i < r$.  Since $x_i = y_i$, we either have $P_i = Q_i = \emptyset$ or else $w(P_i) = w(Q_i) = 2^{-i}$.  In the latter case, suppose first that $|Q_i|=1$.  Then, since $Q_i$ consists of a single leaf and $i$ is not the maximum depth of tree $H$, we can apply Lemma \ref{lem.huff.weight} and Lemma \ref{lem.huff.mult3} to conclude
$p(P_i) \leq p^{max}_H(i) \leq 3p(Q_i)$.
Next suppose that $|Q_i| > 1$.  We would again like to apply Lemma \ref{lem.huff.mult3}, but we must first verify that its conditions are met.  Suppose for contradiction that all nodes of depth greater than $i$ share a common ancestor of depth $i$. Then, since $w(Q_i) = 2^{-i}$ and $|Q_i| > 1$, it must be that $Q_i$ contains all such nodes, which contradicts the fact that $Q_r$ contains at least one node of depth greater than $i$. We conclude that the conditions of Lemma \ref{lem.huff.mult3} are satisfied for all $v$ and $w$ at depth $i$, and therefore
$p(P_i) \leq p^{max}_H(i) \leq 3p^{min}_H(i) \leq 3p(Q_i)$
as required.

We next consider $i \geq r$.  Let $P'_r = \bigcup_{j \geq r} P_j$ and $Q'_r = \bigcup_{j \geq r} Q_j$.  We claim that $p(P'_r) \leq 3p(Q'_r)$.  If $P'_r = \emptyset$ then this is certainly true, so suppose otherwise.  Then $w(P'_r) < 2^{-r}$, so $P'_r$ contains elements of depth greater than $r$.  As in the case $i < r$, this implies that either $Q_r$ contains only a single node (and cannot be the common ancestor of all nodes of depth greater than $r$), or else not all nodes of depth greater than $r$ have a common ancestor of depth $r$.  We can therefore apply Lemma \ref{lem.huff.weight} and Lemma \ref{lem.huff.mult3} to conclude
$p(P'_r) \leq p^{max}_H(r) \leq 3p(Q_r) \leq 3p(Q'_r)$.

Since $P = P_1 \cup \dotsc \cup P_{r-1} \cup P_r'$ and $Q = Q_1 \cup \dotsc \cup Q_{r-1} \cup Q_r'$ are disjoint partitions, we conclude that $p(P) \leq 3p(Q)$ as required.
\qed

We now give an example to demonstrate that the Huffman algorithm is at least $(2/3 - \epsilon)$-beatable for every $\epsilon > 0$.  For any $n \geq 3$, consider the probability distribution given by $p(\omega_1) = \frac{1}{3}$, $p(\omega_i) = \frac{1}{3 \cdot 2^{i-2}}$ for all $1 < i < n$, and $p(\omega_n) = \frac{1}{3 \cdot 2^{n-3}}$.  For this distribution, the Huffman tree $t$ satisfies $d_t(\omega_i) = i$ for each $i < n$ and $d_t(\omega_n) = n-1$.  Consider the alternative tree $t'$ in which $d(\omega_1) = n-1$ and $d(\omega_i) = i-1$ for all $i > 1$.  Then $t'$ will win if any of $\omega_2, \omega_3, \dotsc, \omega_{n-1}$ are chosen, and will tie on $\omega_n$.  Thus
$v(t',t) = \sum_{i > 1}\frac{1}{3 \cdot 2^{i-2}} + \frac{1}{2} \cdot \frac{1}{3\cdot 2^{n-3}} = \frac{2}{3} - \frac{1}{3 \cdot 2^{n-2}}$,
and hence the Huffman algorithm is $(\frac{2}{3} - \frac{1}{3 \cdot 2^{n-2}})$-beatable for every $n \geq 3$.

We conclude the section by noting that if all probabilities are inverse powers of $2$, the Huffman algorithm is minmax optimal. \

\begin{proposition}
\label{prop.huffman.power2}
Suppose there exist integers $a_1, \dotsc, a_n$ such that $p(\omega_i) = 2^{-a_i}$ for each $i \leq n$.  Then the value of the Huffman tree $H$ is $v(H) = 1/2$.
\end{proposition}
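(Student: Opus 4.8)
\noindent The plan is to show $v(H,T)\ge 1/2$ for every tree $T$; since $v(H,H)=1/2$, this yields $v(H)=\min_T v(H,T)=1/2$. The crucial preliminary observation is that when $p(\omega_i)=2^{-a_i}$ the Huffman tree places $\omega_i$ at depth exactly $a_i$, i.e.\ $p(\omega_i)=2^{-d_H(\omega_i)}$ for every $i$. Indeed, a tree with leaf depths $a_1,\dots,a_n$ exists (the Kraft sum $\sum_i 2^{-a_i}=1$) and achieves expected depth $\sum_i p(\omega_i)a_i$, while for \emph{any} leaf depths $\ell_1,\dots,\ell_n$ with $\sum_i 2^{-\ell_i}\le 1$ one has $\sum_i 2^{-a_i}(\ell_i-a_i)=\sum_i 2^{-a_i}\log_2\!\bigl(2^{-a_i}/2^{-\ell_i}\bigr)\ge 0$ by the log-sum (Gibbs) inequality, with equality only when $\ell_i=a_i$ for all $i$ (using that $\sum_i 2^{-\ell_i}=1$ for the full Huffman tree). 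Since Huffman minimizes expected depth, equality must hold, forcing $d_H(\omega_i)=a_i$.

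Given this, fix any competing tree $T$ and reuse the partition from the proof of Proposition~\ref{prop.huffman.beatable}: let $P=\{\omega\in\Omega : d_T(\omega)<d_H(\omega)\}$ (the leaves on which $T$ beats $H$), $Q=\{\omega\in\Omega : d_T(\omega)>d_H(\omega)\}$, and $U=\Omega-(P\cup Q)$. Because $p(\omega)=2^{-d_H(\omega)}$ on every leaf, we have $p(P)=w_H(P)$ and $p(Q)=w_H(Q)$, so it suffices to prove $w_H(P)\le w_H(Q)$.

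To get this I would run the same Kraft bookkeeping as before. Since $d_T(\omega)=d_H(\omega)$ for $\omega\in U$ we have $w_T(U)=w_H(U)$; since $d_T(\omega)\le d_H(\omega)-1$ on $P$ we have $w_T(P)\ge 2w_H(P)$; and trivially $w_T(Q)\ge 0$. Subtracting the two Kraft identities $w_T(P)+w_T(Q)+w_T(U)=1=w_H(P)+w_H(Q)+w_H(U)$ gives $w_T(P)+w_T(Q)=w_H(P)+w_H(Q)$, hence
$$w_H(P)+w_H(Q)=w_T(P)+w_T(Q)\ge 2w_H(P)+w_T(Q)\ge 2w_H(P),$$
so $w_H(Q)\ge w_H(P)$ as desired. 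Finally $v(T,H)=p(P)+\frac12 p(U)=\frac12+\frac12\bigl(p(P)-p(Q)\bigr)\le \frac12$, so $v(H,T)=1-v(T,H)\ge \frac12$ for every $T$, giving $v(H)\ge 1/2$; combined with $v(H)\le v(H,H)=1/2$ this proves the claim.

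I expect the one genuinely delicate point to be the preliminary claim that the Huffman tree realizes the depths $a_i$ \emph{exactly} (rather than merely achieving the optimal expected depth) --- this is where the dyadic hypothesis is really used, and it needs the equality case of the log-sum inequality together with fullness of the Huffman tree. Everything after that is a direct reprise of the weight accounting already established for Proposition~\ref{prop.huffman.beatable}, now with the added leverage that leaf probabilities coincide with leaf weights, which is what turns the earlier strict inequality $w(P)<w(Q)$ into the exact balance needed for value $1/2$.
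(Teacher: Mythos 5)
Your proof is correct and takes essentially the same approach as the paper: observe that $d_H(\omega_i)=a_i$ so that leaf probabilities coincide with Kraft weights, then reuse the $w(P)$ versus $w(Q)$ accounting from Proposition~\ref{prop.huffman.beatable} to conclude $v(T,H)\le 1/2$ for all $T$. You supply one detail the paper merely asserts (that the Huffman tree realizes depths exactly $a_i$, which you justify via the Gibbs/log-sum inequality and fullness of the Huffman tree), and you sensibly relax the strict inequality to $w_H(P)\le w_H(Q)$ so that the case $T=H$ needs no special treatment.
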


\begin{proof}
We suppose that there exist integers $a_1, \dotsc, a_n$ such that $p(\omega_i) = 2^{-a_i}$ for each $i \leq n$.  Our goal is to show that the value of the Huffman tree $H$ is $v(H) = 1/2$.

For this set of probabilities, the Huffman tree will set $d_H(\omega_i) = a_i$ for all $\omega_i \in \Omega$.  In this case, $p(R) = w(R)$ for all $R \subseteq H$.   Choose any other tree $T$, and define sets $P$ and $Q$ as in the proof of Proposition \ref{prop.huffman.beatable}.  That is, $P$ is the set of elements of $\Omega$ for which $T$ beats $H$, and $Q$ is the set of elements for which $H$ beats $T$.  Then, as in Proposition \ref{prop.huffman.beatable}, we must have $w(P) < w(Q)$, and hence $p(P) < p(Q)$.  Thus $v(H,T) < 1/2$.  We conclude that the best response to the Huffman tree $H$ must be $H$ itself, and thus strategy $H$ has a value of $1/2$.
\end{proof}


\subsection{Variant: allowed failures}
\label{sec.compress.fail}

We consider a variant of the compression duel in which an algorithm
can fail to encode certain elements.  If we write $L(T)$ to be the set
of leaves of binary tree $T$, then in the (original) model of
compression we require that $L(T) = \Omega$ for all $T \in X$, whereas
in the ``Fail" model we require only that $L(T) \subseteq \Omega$.  If
$\omega \not\in L(T)$, we will take $c(T,\omega) = \infty$.  The
Huffman algorithm is optimal for single-player compression in the Fail
model.

We note that our method of computing approximate minmax algorithms
carries over to this variant; we need only change our best-response
reduction to use a Multiple-Choice Knapsack Problem in which \emph{at
  most} one element is chosen from each list.  What is different,
however, is that the Huffman algorithm is completely beatable in the
Fail model.  If we take $\Omega = \{\omega_1,\omega_2\}$ with
$p(\omega_1) = 1$ and $p(\omega_2) = 0$, the Huffman tree $H$ places
each of the elements of $\Omega$ at depth $2$.  If $T$ is the
singleton tree that consists of $\omega_1$ as the root, then $v(T, H)
= 1$.

\section{Binary Search Duel}
\label{sec:bst}

In a binary search duel, $\Omega = [n]$ and $X$ is the set of binary
search trees on $\Omega$ (i.e. binary trees in which nodes are
labeled with elements of $\Omega$ in such a way that an in-order
traversal visits the elements of $\Omega$ in sorted order). Let $p$ be
a distribution on $\Omega$. Then for $T \in X$ and $\omega \in
\Omega$, $c(T, \omega)$ is the depth of the node labeled by
``$\omega$'' in the tree $T$. In single-player binary search and
uniform $p$,
selecting the median
$m$ element in $\Omega$ as the root node and recursing on
the left $\{\omega | \omega < m\}$ and right $\{\omega | \omega > m\}$
subsets to construct
sub-trees is known to be
optimal.

The binary search game can be represented as a bilinear duel. In this
case, $K$ and $K'$ will be sets of stochastic matrices (as in the case
of the compression game) and the entry $\{ x_{i, j}\}$ will represent
the probability that item $\omega_j$ is placed at depth $i$. Of
course, not every stochastic matrix is realizable as a distribution
on binary search trees (i.e.\
such that the probability $\omega_j$ is placed at depth $i$ is $\{x_{i, j}\}$).
In order to define linear constraints on $K$
so that any matrix in $K$ is realizable, we will introduce an
auxiliary data structure in Section~\ref{sec:sas} called the {\sc
  State-Action Structure}
that
captures the decisions made by a binary search tree. Using these
ideas, we will be able to fit the binary search game into the bilinear
duel framework introduced in Section~\ref{sec:blexact} and hence be
able to efficiently compute a Nash equilibrium strategy for each
player.

Given a binary search tree $T \in X$, we will write $c_T(\omega)$ for
the depth of $\omega$ in $T$. We will also refer to $c_T(\omega)$ as
the time that $T$ finds $\omega$.

\subsection{Computing an equilibrium}\label{sec:sas}

In this subsection, we give an algorithm for computing a Nash
equilibrium for the binary search game,
based on the
bilinear duel framework introduced in Section~\ref{sec:blexact}. We
will do this by defining a structure called the {\sc State-Action
  Structure} that we can use to represent the decisions made by a
binary search tree using only polynomially many variables. The set of
valid variable assignments in a {\sc State-Action Structure} will also
be defined by only polynomially many linear constraints and so these
structures will naturally be closed under taking convex
combinations.
We will demonstrate that the value of
playing $\sigma \in \Delta(X)$ against any value matrix $V$ -- see
Definition~\ref{def:penalty} is a linear function of the variables in
the {\sc State-Action Structure} corresponding to $\sigma$.
Furthermore, all valid {\sc State-Action Structures} can be
efficiently realized as a distribution on binary search trees which
achieves the same expected value.

To apply the bilinear duel framework, we must give a mapping $\phi$
from the space of binary search trees to a convex set $K$ defined
explicitly by a polynomial number of linear constraints (on a
polynomial number of variables). We now give an informal description
of $K$: The idea is to represent a binary search tree $T \in X$ as a
layered graph. The nodes (at each depth) alternate in type. One layer
represents the current knowledge state of the binary search
tree. After making some number of queries (and not yet finding the
token), all the information that the binary search tree knows is an
interval of values to which the token is confined - we refer to this
as the \emph{live interval}. The next layer of nodes represents an
action - i.e. a query to some item in the live
interval. Correspondingly, there will be three outgoing edges from an
action node representing the possible replies that either the item is
to the left, to the right, or at the query location (in which case the
outgoing edge will exit to a terminal state).

We will define a flow on this layered graph based on $T$ and the
distribution $p$ on $\Omega$. Flow will represent total probability -
i.e. the total flow into a state node will represent the probability
(under a random choice of $\omega \in \Omega$ according to $p$) that
$T$ reaches this state of knowledge (in exactly the corresponding
number of queries). Then the flow out of a state node represents a
decision of which item to query next. And lastly, the flow out of an
action node splits according to Bayes' Rule - if all the information
revealed so far is that the token is confined to some interval, we can
express the probability that (say) our next query to a particular item
finds the token as a conditional probability. We can then take convex
combinations of these "basic" flows
in order to form flows corresponding to
distributions on
binary search trees.

We give a randomized rounding algorithm to select a random binary
search tree based on a flow - in such a way that the marginal
probabilities of finding a token $\omega_i$ at time $r$ are exactly
what the flow specifies they should be. The idea is that if we choose
an outgoing edge for each state node (with probability proportional to
the flow), then we have fixed a binary search tree because we have
specified a decision rule for each possible internal state of
knowledge. Suppose we were to now select an edge out of each action
node (again with probability proportional to the flow) and we were to
follow the unique path from the start node to a terminal node. This
procedure would be equivalent to searching for a randomly chosen token
$\omega_i$ chosen according to $p$ and using this token to choose
outgoing edges from action nodes. This procedure generates a random
path from the start node to a terminal node, and is in fact equivalent
to sampling a random path in the path decomposition of the flow
proportionally to the flow along the path. Because these two rounding
procedures are equivalent, the marginal distribution that results from
generating a binary search tree (and choosing a random element to look
for) will exactly match the corresponding values of the flow.

\subsection{Notation}

The natural description of the strategy space of the binary search game is exponential (in $|\Omega|$) -- so we will assume that the value of playing any binary search tree $T$ against an opponent's mixed strategy is given to us in a compact form which we will refer to as a value matrix:

\begin{definition}~\label{def:penalty}
A value matrix $V$ is an $|\Omega| \times |\Omega|$ matrix in which the entry $V_{i, j}$ is interpreted to be the value of finding item $\omega_j$ at time $i$.
\end{definition}

Given any binary search tree $T' \in X$, we can define a value matrix $V(T')$ so that the expected value of playing any binary search tree $T \in X$ against $T$ in the binary search game can be written as $\sum_{i, j} 1_{c_T(\omega_j) = i} V(T')_{i, j}$:

\begin{definition}
Given a binary search tree $T' \in X$, let $V(T')$ be a value matrix such that
\[ V(T')_{i, j} = \left\{ \begin{array}{ll}
         0 & \mbox{if } c_{T'}(\omega_j) < i \\
         \frac{1}{2} & \mbox{if } c_{T'}(\omega_j) = i \\
        1 &  \mbox{if } c_{T'}(\omega_j) > i \end{array} \right. \]
Similarly, given a mixed strategy $\sigma' \in \Delta(X)$, let $V(\sigma') = E_{T' \sim \sigma'}[V(T')]$
\end{definition}

Note that not every value matrix $V$ can be realized as the value matrix $V(T')$ for some $T' \in X$. In fact, $V$ need not be realizable as $V(\sigma)$ for some $\sigma \in \Delta(X)$. However, we will be able to compute the best response against any value matrix $V$, regardless of whether or not the matrix corresponds to playing the binary search game against an adversary playing some mixed strategy. Lastly, we define a stochastic matrix $I(T)$, given $T \in X$. From $I(T)$, and $V(T')$ we can write the expected value of playing $T$ against $T'$ as a inner-product. We let $< A, B>_{p} = \sum_{i, j} A_{i, j} B_{i, j} p(\omega_j)$ when $A$ and $B$ are $|\Omega| \times |\Omega|$ matrices.

\begin{definition}
Given a binary search tree $T \in X$, let $I(T)$ be an $|\Omega| \times |\Omega|$ matrix in which $I(T)_{i, j} = 1_{c_T(\omega_j) = i}$. Similarly, given $\sigma \in \Delta(X)$, let $I(\sigma) = E_{T \sim \sigma}[I(T)]$.
\end{definition}

\begin{lemma}
Given $\sigma, \sigma' \in \Delta(X)$, the expected value of playing $\sigma$ against $\sigma'$ in the binary search game is exactly $<I(\sigma), V(\sigma')>_p$.
\end{lemma}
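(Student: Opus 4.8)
The plan is to verify the claimed identity by unwinding the definitions of $I(\sigma)$, $V(\sigma')$, and the inner product $\langle\cdot,\cdot\rangle_p$, and checking it first on pure strategies, where everything is concrete, and then extending bilinearly.

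First I would fix pure strategies $T, T' \in X$ and compute $\langle I(T), V(T')\rangle_p = \sum_{i,j} I(T)_{i,j} V(T')_{i,j}\, p(\omega_j)$. Since $I(T)_{i,j} = 1_{c_T(\omega_j) = i}$, for each fixed $j$ exactly one term in the sum over $i$ survives, namely $i = c_T(\omega_j)$, so the expression collapses to $\sum_j p(\omega_j) V(T')_{c_T(\omega_j), j}$. Now I invoke the definition of the value matrix: $V(T')_{c_T(\omega_j), j}$ equals $1$ if $c_{T'}(\omega_j) > c_T(\omega_j)$ (i.e.\ $T$ finds $\omega_j$ strictly sooner), equals $\frac12$ if $c_{T'}(\omega_j) = c_T(\omega_j)$, and equals $0$ if $c_{T'}(\omega_j) < c_T(\omega_j)$. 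Comparing with the definition of the duel payoff $v(T,T',p) = \Pr_{\omega\sim p}[c(T,\omega) < c(T',\omega)] + \frac12\Pr_{\omega\sim p}[c(T,\omega) = c(T',\omega)]$, and noting $c(T,\omega) = c_T(\omega)$ here, I see that $\sum_j p(\omega_j) V(T')_{c_T(\omega_j), j}$ is exactly $v(T,T',p)$. This establishes the identity for pure strategies.

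Next I extend to mixed strategies. By definition $I(\sigma) = \E_{T\sim\sigma}[I(T)]$ and $V(\sigma') = \E_{T'\sim\sigma'}[V(T')]$, and the map $(A,B)\mapsto \langle A,B\rangle_p$ is bilinear in its two matrix arguments. Hence $\langle I(\sigma), V(\sigma')\rangle_p = \E_{T\sim\sigma}\E_{T'\sim\sigma'}[\langle I(T), V(T')\rangle_p] = \E_{T\sim\sigma}\E_{T'\sim\sigma'}[v(T,T',p)]$, which is precisely the expected value of playing $\sigma$ against $\sigma'$ in the binary search game, since the payoff $v$ was extended to mixed strategies bilinearly by expectation in Section~\ref{sec:defn}. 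This completes the argument.

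There is no real obstacle here; the only point requiring a modicum of care is making sure the three-way case split in the definition of $V(T')$ lines up correctly with the strict/non-strict inequalities in the duel payoff $v$ — in particular that the ``$V = 1$'' case corresponds to $T$ winning (token found strictly earlier) rather than losing, so that the bookkeeping yields $v(T,T',p)$ and not $v(T',T,p)$. Once the pure-strategy case is pinned down, the extension to mixed strategies is a one-line appeal to bilinearity of expectation.
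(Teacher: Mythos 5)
Your proof is correct and follows essentially the same route as the paper's: verify the identity on pure strategies $T, T'$ by collapsing the sum over $i$ using $I(T)_{i,j}=1_{c_T(\omega_j)=i}$ and matching the resulting expression against the definition of the duel payoff $v(T,T')$, then extend to mixed strategies by bilinearity of $\langle\cdot,\cdot\rangle_p$ and expectation. The paper's version is just terser; you spell out the same steps with more care about the three-way case split in $V(T')$.
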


\begin{proof}
Consider any $T, T' \in X$. Then the expected value of playing $T$ against $T$ in the binary search game is exactly $\sum_{i} p(\omega_i) \Big [1_{c_{T}(\omega_i) < c_{T'}(\omega_i)} + \frac{1}{2} 1_{c_{T}(\omega_i) = c_{T'}(\omega_i)} \Big ] = <I(T), V(T')>_p$. And since $<I(T), V(T')>_p$ is bilinear in the matrices $I(T)$ and $V(T')$, indeed the expected value of playing $\sigma$ against $\sigma'$ is $<I(\sigma), V(\sigma')>_p$.
\end{proof}

\subsection{{\sc State-Action Structure}}

\begin{definition}~\label{def:lsr}
Given a distribution $p$ on $\Omega$ and $\omega_i, \omega_j, \omega_k
\in \Omega$ (and $ai < j < k$), let
$$p_{i, j, k}^L = \frac{ Pr_{\omega_{k'} \sim p}[ i \leq k' < k ]}{ Pr_{\omega_{k'} \sim p}[ i \leq k' \leq j ]}, p_{i, j, k}^E = \frac{ Pr_{\omega_{k'} \sim p}[ k' = k ]}{ Pr_{\omega_{k'} \sim p}[ i \leq k' \leq j ]}, \mbox{ and } p_{i, j, k}^R = \frac{ Pr_{\omega_{k'} \sim p}[ k < k' \leq j ]}{ Pr_{\omega_{k'} \sim p}[ i \leq k' \leq j ]}$$
\end{definition}

Intuitively, we can regard the interval $[\omega_i, \omega_j]$ as
being divided into the sub-intervals $[\omega_i, \omega_{k-1}]$,
$\{\omega_k\}$ and $[\omega_{k+1}, \omega_j]$. Then the quantity
$p_{i, j, k}^L$ represents the probability that randomly generated
element is contained in the first interval, conditioned on the element
being contained in the original interval $[\omega_i,
  \omega_j]$. Similarly, one can interpret $p_{i, j, k}^E$ and $p_{i,
  j, k}^R$ as being conditional probabilities as well.

We also define a set of knowledge states, which represent the current
information that the binary search tree knows about the element and
also how many queries have been made:

\begin{definition}
We define:
\begin{enumerate}
\item $\cS = \{ (i, j, r) | \omega_i, \omega_j \in \Omega, i < j,
  \mbox{ and } r \in \{1, 2, ...., |\Omega|\}\}$
\item $\cA = \{ (S, k) | S = (i, j, r) \in \cS, \omega_k \in \Omega
  \mbox{ and } k \in (i, j)\}$
\item $\cF = \{ (k, r) | \omega_k \in \Omega \mbox{ and } r \in \{1,
  2, ...., |\Omega|\}\}$
\end{enumerate}
We will refer to $\cS$ as the set of knowledge state. Additionally we
will refer to $S_{start} = (\omega_1, \omega_n, 0)$ as the start
state. We will refer to $\cA$ as the set of action state and $\cF$ as
the set of termination states.
\end{definition}

We can now define a {\sc State-Action Structure}:

\begin{definition}\label{def:sastructure}
A {\sc State-Action Structure} is a fixed directed graph generated as:
\begin{enumerate}
\item Create a node $n_S$ for each $S \in \cS$, a node $n_A$ for each
  $A \in \cA$ and a node $n_F$ for each $F \in \cF$.
\item For each $S = (i, j, r) \in \cS$, and for each $k$ such that $i
  < k < j$, create a directed edge $e_{S, k}$ from $S$ to $A = (S, k)
  \in \cA$.
\item For each $A = (S, k) \in \cA$ and $S = (i, j, r)$, create a
  directed edge $e_{A, F}$ from $A$ to $F = (k, r+ 1)$ and directed
  edges $e_{A, S_L}$ and $e_{A, S_R}$ from $A$ to $S_L$ and $S_R$
  respectively for $S_L = (i, k-1, r+1)$ and $S_R = (k + 1, j, r+ 1)$.
\end{enumerate}
\end{definition}

We will define a flow on this directed graph. The source of this flow
will be the start node $S_{start}$ and the node corresponding to each
termination state will be a sink. The total flow in this graph will be
one unit, and this flow should be interpreted as representing the
total probability of reaching a particular knowledge state, or
performing a certain action.

\begin{definition}\label{def:stateful}
We will call an set of values $x_{e}$ for each directed edge in a {\sc
  State-Action Structure} a stateful flow if (let us adopt the
notation that $x_{S, A}$ is the flow on an edge $e_{S, A}$):
\begin{enumerate}
\item For all $e$, $0 \leq x_e \leq 1$
\item All nodes except $n_{S_{start}}$ and $n_{F}$ (for $F \in \cF$)
  satisfy conservation of flow
\item For each action state $A = (S, i) \in \cA$ for $S = (i, j, r)$,
  the the flow on the three out-going edges $e_{A, F}, e_{A, S_L}$ and
  $e_{A, S_R}$ from $n_A$, satisfy $x_{A, F} = p_{i, j, k}^E C$,
  $x_{A, S_L} = p_{i, j, k}^L C$ and $x_{A, S_R} = p_{i, j, k}^R$
  where $C = \sum_{e = (S', A) \mbox{ for } S' \in \cS} x_{S', A}$
\end{enumerate}
\end{definition}

Given $T \in X$, we can define a flow $x_T$ in the {\sc State-Action
  Structure} that captures the decisions made by $T$:
\begin{definition}\label{def:xt}
Given $T \in X$, define $x_T$ as follows:
\begin{enumerate}
\item For each $S = (i, j, r) \in \cS$ let $T_{i, j}$ be the sub-tree
  of $T$ (if a unique such sub-tree exists) such that the labels
  contained in $T_{i, j}$ are exactly $\{\omega_i, \omega_{i+1}, ...,
  \omega_{j}\}$. Suppose that the root of this sub-tree $T_{i, j}$ is
  $\omega_k$. Then send all flow entering the node $n_S$ on the
  outgoing edge $e_{S, A}$ for $A = (S, k)$.
\item For each $A \in \cA$, divide flow into a action node $n_A$
  according to Condition $3$ in Definition~\ref{def:stateful} among
  outgoing edges.
\end{enumerate}
\end{definition}

Note that the flow out of $n_{S_{start}}$ is one. Of course, the
choice of how to split flow on outgoing edges from an action node
$n_A$ is already well-defined. But we need to demonstrate that $x_T$
does indeed satisfy conservation of flow requirements, and hence is a
stateful flow:

\begin{lemma}~\label{lemma:isstateful}
For any $T \in X$, $x_T$ is a stateful flow
\end{lemma}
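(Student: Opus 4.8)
The plan is to verify the three defining conditions of a stateful flow (Definition~\ref{def:stateful}) for the flow $x_T$ associated with a fixed binary search tree $T$. Condition~3 (the Bayes-rule splitting at action nodes) holds by construction, since item~2 of Definition~\ref{def:xt} defines the outgoing flow at each action node exactly according to Condition~3. Similarly, the bound $0 \leq x_e \leq 1$ in Condition~1 will follow once we show that all flow values are themselves probabilities (each edge flow is at most the total flow of one unit leaving $S_{start}$, and all flows are nonnegative because the $p^L, p^E, p^R$ quantities are nonnegative and flow is only ever split, never created). So the real content is Condition~2: conservation of flow at every knowledge-state node $n_S$ and every action node $n_A$.

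The key step is to give the right inductive description of the flow entering each state node. I would prove by induction on the query count $r$ the following invariant: for each reachable $S = (i,j,r)$, the total flow entering $n_S$ equals $\Pr_{\omega \sim p}[\,i \leq \omega \leq j\,]$, provided $[\omega_i,\omega_j]$ is exactly the ``live interval'' of $T$ after the first $r$ queries along the unique search path that produces it (and the flow is $0$ if no such sub-tree $T_{i,j}$ exists). The base case $r=0$ is immediate: the unique flow unit enters $S_{start} = (\omega_1,\omega_n,0)$, and $\Pr[1 \leq \omega \leq n] = 1$. For the inductive step, fix $S=(i,j,r)$ with incoming flow $\Pr[i \leq \omega \leq j]$. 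By item~1 of Definition~\ref{def:xt}, all of this flow is routed to the single action node $A = (S,k)$ where $\omega_k$ is the root of the sub-tree $T_{i,j}$; this establishes conservation at $n_S$ (one incoming edge's worth in, one outgoing edge's worth out, rest zero). Now at $n_A$ the incoming flow is $C = \Pr[i \leq \omega \leq j]$, and Condition~3 splits it into
\[
x_{A,S_L} = p^L_{i,j,k}\,C = \Pr[i \leq \omega \leq k-1], \quad
x_{A,F} = p^E_{i,j,k}\,C = \Pr[\omega = k], \quad
x_{A,S_R} = p^R_{i,j,k}\,C = \Pr[k+1 \leq \omega \leq j],
\]
using exactly the definitions of $p^L_{i,j,k}, p^E_{i,j,k}, p^R_{i,j,k}$ from Definition~\ref{def:lsr} (the conditioning denominator $\Pr[i \leq \omega \leq j]$ cancels against $C$). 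These three values sum to $C$, giving conservation at $n_A$; and since the children $S_L = (i,k-1,r+1)$ and $S_R = (k+1,j,r+1)$ are precisely the live intervals of the two sub-trees of $T_{i,j}$, the invariant is re-established at depth $r+1$. One should also note that a state node $(i,j,r)$ not on any search path of $T$ receives no flow at all, and its outgoing flow is likewise zero, so conservation holds trivially there; the same holds for action nodes with no incoming flow.

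The main obstacle is bookkeeping rather than conceptual: one must be careful that the ``unique sub-tree $T_{i,j}$'' in item~1 of Definition~\ref{def:xt} is well-defined exactly along the reachable states, so that the routing rule is unambiguous, and that degenerate intervals (e.g.\ when $i = j$, or when a child interval is empty and should route only to a termination node) are handled by the structure's edge set without breaking conservation. I would dispatch this by observing that in a binary search tree, the set of labels in any sub-tree is automatically a contiguous block $\{\omega_i,\dots,\omega_j\}$ (that is the defining BST property invoked at the start of Section~\ref{sec:bst}), so for each query-depth $r$ the live intervals partition the not-yet-found elements and the inductive invariant propagates cleanly. Everything else — nonnegativity, the upper bound of $1$, and Condition~3 — is then immediate from the construction.
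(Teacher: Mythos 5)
Your proof is correct and follows essentially the same approach as the paper's: both argue by induction on the query count $r$ that flow only reaches state nodes $(i,j,r)$ for which $T$ has a sub-tree with label set exactly $\{\omega_i,\dots,\omega_j\}$, which makes the routing rule in Definition~\ref{def:xt} well-defined and yields conservation of flow. You go somewhat further than the paper by carrying the quantitative invariant (that the inflow at a reachable $n_S$ equals $\Pr_{\omega_k\sim p}[i\le k\le j]$, which is really the content of Lemma~\ref{lemma:conditional}) and by explicitly checking Conditions 1--3 and the vacuous case of unreached nodes, whereas the paper leaves those verifications implicit once well-definedness of the routing is established.
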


\begin{proof}
For some intervals $\{\omega_i, \omega_{i+1}, ..., \omega_{j}\}$,
there is no sub-tree in $T$ for which the labels contained in the
sub-tree is exactly $\{\omega_i, \omega_{i+1}, ..., \omega_{j}\}$. If
there is such an interval, however, it is clearly unique. We will
prove by induction that the only state nodes in the {\sc State-Action
  Structure} which are reached by flow $x_T$ are state nodes for which
there is such a sub-tree.

We will prove this condition by induction on $r$ for state nodes $n_S$
of the form $S = (i, j, r)$. This condition is true in the base case
because all flow starts at the node $n_{S_{start}}$ and $S_{start} =
(\omega_1, \omega_n, 0)$ and indeed the entire binary search tree $T$
has the property that the set of labels used is exactly $\{\omega_1,
\omega_2, ... \omega_n\}$.

Suppose by induction that there is some sub-tree $T_{i, j}$ of $T$ for
which the labels of contained in the sub-tree are exactly $\{\omega_i,
\omega_{i+1}, ..., \omega_{j}\}$. Let $\omega_k$ be the label of the
root node of $T_{i, j}$. Then all flow entering $n_S$ would be sent to
the action node $A = (S, k)$ and all flow out of this action node
would be set to either a termination node or to state nodes $S_L = (i,
k-1, r+1)$ or $S_R = (k+1, r+1)$ and both of the intervals
$\{\omega_i, \omega_2, ... \omega_{r-1}\}$ or $\{\omega_{r+1},
\omega_{r+2}, ..., \omega_{j}\}$ do indeed have the property that
there is a sub-tree that contains exactly each respective set of
labels - these are just the left and right sub-trees of $T_{i, j}$.
\end{proof}

The variables in a stateful flow capture marginal probabilities that
we need to compute the expected value of playing a binary search tree
$T$ against some value matrix $V$:

\begin{lemma}~\label{lemma:conditional}
Consider any state $S = (i, j, r) \in \cS$. The total flow in $x_T$
into $n_S$ is exactly the probability that (under a random choice of
$\omega_k \sim p$), $\omega_k$ is contained in some sub-tree of $T$ at
depth $r+1$. Similarly the total flow in $x_T$ into any terminal node
$n_F$ for $F = (\omega_f, r)$ is exactly the probability (under a
random choice of $\omega_k \sim p$) that $c_T(\omega_k)=r$.
\end{lemma}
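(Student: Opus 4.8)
The plan is to prove both statements simultaneously by induction on the round/depth index $r$, exploiting the recursive structure of the {\sc State-Action Structure} and the way $x_T$ was defined in Definition~\ref{def:xt}. The key invariant I would carry through the induction is: for every state $S = (i,j,r)$ that is reached by $x_T$ (i.e.\ corresponds to an actual sub-tree $T_{i,j}$ of $T$ whose labels are exactly $\{\omega_i,\dots,\omega_j\}$, as established in Lemma~\ref{lemma:isstateful}), the total flow entering $n_S$ equals $\Pr_{\omega_{k'}\sim p}[i \le k' \le j \text{ and } \omega_{k'} \text{ lies in } T_{i,j} \text{ at depth} \ge 1 \text{ within } T_{i,j}]$, which is exactly the probability that a random token $\omega_{k'}$ is confined to the live interval $[\omega_i,\omega_j]$ after exactly $r$ queries without having been found yet — equivalently, that $\omega_{k'}$ sits in the sub-tree $T_{i,j}$ at depth $r$ or greater in $T$. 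Since the labels in $T_{i,j}$ are precisely $\{\omega_i,\dots,\omega_j\}$, this flow is just $\Pr_{\omega_{k'}\sim p}[i \le k' \le j]$.

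First I would verify the base case: the unique flow source is $n_{S_{start}}$ with $S_{start}=(\omega_1,\omega_n,0)$ carrying one unit of flow out, and indeed $\Pr_{\omega_{k'}\sim p}[1 \le k' \le n] = 1$, so the invariant holds at $r=0$. For the inductive step, suppose the invariant holds for a state $S=(i,j,r)$ reached by $x_T$, with root label $\omega_k$ of $T_{i,j}$. By Definition~\ref{def:xt}, all the incoming flow — call it $C = \Pr_{\omega_{k'}\sim p}[i \le k' \le j]$ — is routed to the action node $A=(S,k)$, and then split by Condition~3 of Definition~\ref{def:stateful} into $x_{A,F} = p_{i,j,k}^E C$, $x_{A,S_L} = p_{i,j,k}^L C$, $x_{A,S_R} = p_{i,j,k}^R C$. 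Now I unfold the definitions in Definition~\ref{def:lsr}: $p_{i,j,k}^L C = \Pr_{\omega_{k'}\sim p}[i \le k' < k]$, $p_{i,j,k}^E C = \Pr_{\omega_{k'}\sim p}[k' = k]$, and $p_{i,j,k}^R C = \Pr_{\omega_{k'}\sim p}[k < k' \le j]$, where the conditional denominators cancel against $C$ exactly. The flow into $S_L = (i,k-1,r+1)$ is thus $\Pr_{\omega_{k'}\sim p}[i \le k' \le k-1]$, which is precisely the invariant for $S_L$ (and $T_{i,k-1}$ is the left sub-tree of $T_{i,j}$, so it is reached by $x_T$); symmetrically for $S_R$. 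This advances the induction to round $r+1$. For the terminal-node statement, the flow into $n_F$ for $F=(k,r+1)$ is $x_{A,F} = \Pr_{\omega_{k'}\sim p}[k'=k]$; but in a binary search tree, querying $\omega_k$ as the root of $T_{i,j}$ (which sits at depth $r$ in $T$ counting from $0$, so this is query number $r+1$) finds $\omega_k$ exactly when the token equals $\omega_k$, and this happens at time $c_T(\omega_k) = r$ — matching the index $r$ in the lemma statement once one reconciles the off-by-one between ``number of queries made'' and ``depth,'' which I would state carefully once at the outset.

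The main obstacle I anticipate is bookkeeping the indexing conventions: the round counter $r$ in a state $(i,j,r)$, the query count, and the depth $c_T(\cdot)$ are all nearly but not exactly the same, and Definition~\ref{def:sastructure} increments $r$ by one when passing from an action node to its successor states/terminals. I would pin down once and for all that a state $(i,j,r)$ reached by $x_T$ corresponds to a sub-tree $T_{i,j}$ whose root sits at depth $r$ in $T$, that its root query is the $(r+1)$st query, and that a token found at that query has depth $r$ — so the terminal node $(k,r)$ of the lemma is reached via an action node at level $r$, consistent with the $r+1$ appearing in the graph construction relative to a state at level $r$. Beyond this indexing care, the argument is a routine induction: each step is just substituting the definitions of $p^L_{i,j,k}$, $p^E_{i,j,k}$, $p^R_{i,j,k}$ and observing that the denominators telescope against the incoming flow, together with the fact (from Lemma~\ref{lemma:isstateful}) that the only states carrying flow are those with a genuine corresponding sub-tree, so the left/right sub-intervals always name real sub-trees of $T$.
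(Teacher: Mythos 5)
Your argument matches the paper's: both proceed by induction on $r$, invoke Lemma~\ref{lemma:isstateful} to restrict attention to states with a genuine sub-tree $T_{i,j}$, and show that the flow split dictated by Condition~3 of Definition~\ref{def:stateful} reproduces the Bayes-rule refinement of the probability that a random token lies in the current live interval; your telescoping-denominator computation is a cleaner restatement of the paper's remark that ``the flow splits exactly as this total probability splits.'' Two small points to tidy when writing it out: first, the indexing convention you propose is off by one relative to Definition~\ref{def:sastructure} --- an action node exiting a state at level $r$ feeds the terminal $(k, r+1)$, not $(k, r)$, and the element found there has $c_T(\omega_k)=r+1$; second, you should explicitly note (as the paper does) that sub-trees at any fixed depth of $T$ correspond to disjoint intervals of $\Omega$, so at most one incoming edge to any state node carries flow, which is what licenses applying the inductive hypothesis to ``the'' incoming flow into $n_S$.
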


\begin{proof}
We can again prove this lemma by induction on $r$ for state nodes
$n_S$ of the form $S = (i, j, r)$.  In the base case, the flow into
$n_{S_{start}}$ is $1$, which is exactly the probability that (under a
random choice of $\omega_t \sim p$), $\omega_t$ is contained in some
sub-tree of $T$ at depth $1$.

So we can prove the inductive hypothesis by sub-conditioning on the
event that the element $\omega_k$ is contained in some sub-tree of $T$
at depth $r$. Let this subtree be $T'$. By the inductive hypothesis,
this is exactly the flow into the node $n_{S'}$ where $S' = (i, j,
r-1)$ for some $\omega_i, \omega_j \in \Omega$ and $i \leq k \leq
j$. We can then condition on the event that $\omega_k$ is such that $i
\leq k \leq j$. Let $\omega_r$ be the label of the root node of
$T'$. Then using conditioning, the probability that $\omega_k$ is
contained in the left-subtree of $T'$ is exactly $p_{i, j, r}^L$, and
similarly for the right sub-tree. Also the probability that $\omega_k
= \omega_r$ is $p_{i, j, r}^E$. And so Condition $3$ in
Definition~\ref{def:stateful} enforces the condition that the flow
splits exactly as this total probability splits - i.e. the probability
that $\omega_k$ is contained in the left and right sub-interval of
$\{\omega_i, \omega_{i+1}, ... \omega_{j}\}$ or contained in the root
"$\omega_r$" respectively. Note that the set of sub-trees at any
particular depth in $T$ correspond to disjoint intervals of $\Omega$,
and hence there is no other flow entering the state $n_S$, and this
proves the inductive hypothesis.
\end{proof}

As an immediate corollary:
\begin{corollary}\label{cor:vmat}
The expected value of playing $T$ against value matrix $V$, $$<I(T),
V>_p = \sum_{F = (\omega_k, r) \in \cF} x_T^{in}(F) V_{r, k}$$ where
$x_T^{in}$ denotes the total flow into a node according to $x_T$.
\end{corollary}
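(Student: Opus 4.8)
\textbf{Proof plan for Corollary~\ref{cor:vmat}.}
The plan is to simply combine Lemma~\ref{lemma:conditional} with the definition of the inner product $\langle\cdot,\cdot\rangle_p$ and the matrix $I(T)$. First I would recall that by definition $\langle I(T), V\rangle_p = \sum_{i,r} I(T)_{r,i}\, V_{r,i}\, p(\omega_i) = \sum_{i,r} 1_{c_T(\omega_i)=r}\, V_{r,i}\, p(\omega_i)$, so the only terms that survive are those indices $(r,i)$ for which $c_T(\omega_i)=r$, each contributing $p(\omega_i) V_{r,i}$. Thus $\langle I(T), V\rangle_p = \sum_{i} p(\omega_i)\, V_{c_T(\omega_i), i}$, which we must match against $\sum_{F=(\omega_k,r)\in\cF} x_T^{in}(F)\, V_{r,k}$.

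Next I would invoke the second statement of Lemma~\ref{lemma:conditional}: for a terminal node $n_F$ with $F=(\omega_k, r)$, the total flow $x_T^{in}(F)$ into that node equals $\P_{\omega\sim p}[c_T(\omega_k)=r]$. Since the depth $c_T(\omega_k)$ is a deterministic function of the (fixed) tree $T$, this probability is either $p(\omega_k)$ (when $r = c_T(\omega_k)$) or $0$ (otherwise). Substituting into $\sum_{F=(\omega_k,r)\in\cF} x_T^{in}(F)\, V_{r,k}$, for each $k$ only the single term $r = c_T(\omega_k)$ is nonzero, and it equals $p(\omega_k)\, V_{c_T(\omega_k), k}$. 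Summing over $k$ gives exactly $\sum_k p(\omega_k)\, V_{c_T(\omega_k), k}$, which is the expression derived above for $\langle I(T), V\rangle_p$. This establishes the claimed identity.

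There is essentially no obstacle here: the corollary is a bookkeeping consequence of the lemma, so the only care needed is to confirm that the index ranges line up --- namely that every leaf $\omega_k$ appears in some terminal state $(\omega_k, r)$ with $r = c_T(\omega_k) \le n$, which holds because a binary search tree on $n$ elements has depth at most $n$ and the termination states $\cF$ range over exactly $r \in \{1,\dots,|\Omega|\}$. One should also note that distinct leaves land in distinct terminal nodes (indexed by their label $\omega_k$), so there is no double-counting, and conversely every terminal node with positive flow corresponds to a genuine leaf at the matching depth. With these observations the two sums are term-by-term equal and the corollary follows immediately.
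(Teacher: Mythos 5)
Your proof is correct and matches the route the paper intends: the paper offers no written proof (it simply labels the result ``an immediate corollary'' of Lemma~\ref{lemma:conditional}), and the natural fill-in is exactly what you do — unpack $\langle I(T),V\rangle_p$ using $I(T)_{i,j}=1_{c_T(\omega_j)=i}$ to get $\sum_j p(\omega_j)V_{c_T(\omega_j),j}$, then use the lemma to identify $x_T^{in}(\omega_k,r)=p(\omega_k)\,1_{c_T(\omega_k)=r}$ and sum. One small wording caveat: the intermediate sentence ``$x_T^{in}(F)$ equals $\Pr_{\omega\sim p}[c_T(\omega_k)=r]$'' doesn't quite parse, since the random variable and the fixed index are conflated (an ambiguity inherited from the paper's own statement of the lemma); the quantity you actually need and correctly use in the next line is $\Pr_{\omega\sim p}[\omega=\omega_k\ \text{and}\ c_T(\omega)=r]=p(\omega_k)\,1_{c_T(\omega_k)=r}$, so the computation goes through.
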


And as a second corollary:
\begin{corollary}\label{cor:mmat}
Given $T \in X$, $$V(T)_{i, j} = \frac{\frac{1}{2} x_T^{in}(\omega_j,
  i) + \sum_{i' > i} x_T^{in}( \omega_j, i')}{p(\omega_j)}$$ where
$x_T^{in}(\omega_j, i)$ denotes the total flow into $n_F$ for $F =
(\omega_j, i) \in \cF$.
\end{corollary}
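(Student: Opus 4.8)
The plan is to read the identity off directly from Lemma~\ref{lemma:conditional} together with the definition of $V(T)$, with essentially no computation beyond a three-case check. First I would rewrite the definition of $V(T)$ in indicator form: since $V(T)_{i,j}$ equals $0$ when $c_T(\omega_j) < i$, equals $\tfrac12$ when $c_T(\omega_j) = i$, and equals $1$ when $c_T(\omega_j) > i$, it can be written as the single expression $V(T)_{i,j} = \tfrac12\,1_{c_T(\omega_j) = i} + 1_{c_T(\omega_j) > i}$.

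Second, I would pin down the terminal flows. By Lemma~\ref{lemma:conditional}, $x_T^{in}(\omega_j, i')$ — the total flow of $x_T$ into the node $n_F$ for $F = (\omega_j, i') \in \cF$ — is the probability, over $\omega_k \sim p$, that the chosen element equals $\omega_j$ and is found by $T$ at time $i'$. Since $c_T(\omega_j)$ is determined by $T$, this probability is $p(\omega_j)$ when $i' = c_T(\omega_j)$ and $0$ otherwise; equivalently $x_T^{in}(\omega_j, i') = p(\omega_j)\,1_{c_T(\omega_j) = i'}$. The point that needs to be made precise here is that in the flow $x_T$ of Definition~\ref{def:xt} every state node pushes all of its incoming flow onto the single out-edge dictated by the root of the corresponding subtree, so the path decomposition of $x_T$ routes the mass $p(\omega_j)$ belonging to $\omega_j$ along one path, which terminates exactly at $(\omega_j, c_T(\omega_j))$; hence no $\omega_j$-terminal node at any other depth carries flow.

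Third, I would substitute and simplify. Plugging the above into the claimed right-hand side and cancelling $p(\omega_j)$ (for $p(\omega_j) > 0$; when $p(\omega_j) = 0$ the corresponding column of $V(T)$ is immaterial, as it only ever enters weighted by $p(\omega_j)$ in $\langle \cdot, \cdot \rangle_p$) gives
$$\frac{\tfrac12\, x_T^{in}(\omega_j, i) + \sum_{i' > i} x_T^{in}(\omega_j, i')}{p(\omega_j)} = \tfrac12\,1_{c_T(\omega_j) = i} + \sum_{i' > i} 1_{c_T(\omega_j) = i'} = \tfrac12\,1_{c_T(\omega_j) = i} + 1_{c_T(\omega_j) > i},$$
which is exactly the indicator form of $V(T)_{i,j}$ obtained in the first step; checking the three cases $c_T(\omega_j) < i$, $c_T(\omega_j) = i$, $c_T(\omega_j) > i$ confirms the equality and completes the proof.

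There is no genuine obstacle here: the statement is a bookkeeping consequence of Lemma~\ref{lemma:conditional}. The only step that requires a moment of care is the second one — justifying that the flow into the terminal node $(\omega_j, i')$ is concentrated at $i' = c_T(\omega_j)$ with value $p(\omega_j)$ — and even that follows immediately from the deterministic way $x_T$ splits flow at state nodes.
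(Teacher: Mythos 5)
Your proof is correct and takes essentially the approach the paper intends (the paper states this as an immediate corollary of Lemma~\ref{lemma:conditional} without writing out the computation). You correctly read $x_T^{in}(\omega_j,i')=p(\omega_j)\,1_{c_T(\omega_j)=i'}$ off the flow construction — note that Lemma~\ref{lemma:conditional}'s wording is a bit loose here (it appears to say the flow into $(\omega_f,r)$ equals $\Pr[c_T(\omega_k)=r]$ rather than the joint event $\{\omega_k=\omega_f\}\cap\{c_T(\omega_k)=r\}$), so your explicit justification via the deterministic routing of the mass $p(\omega_j)$ in $x_T$ is a useful clarification rather than an extra step.
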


\subsection{A rounding algorithm}

\begin{proposition}~\label{prop:rr}
Given a stateful flow $x$, there is an efficient randomized rounding
procedure that generates a random $T \in X$ with the property that for
any $\omega_j \in \Omega$ and for any $i \in \{1, 2, ..., |\Omega|\}$,
$Pr[c_T(\omega_j) = i] = \frac{x^{in}(\omega_j, i)}{p_{\omega_j}}$.
\end{proposition}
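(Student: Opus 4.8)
The plan is to make precise the rounding procedure sketched in Section~\ref{sec:sas}. Given the stateful flow $x$, at every knowledge state $n_S$ with $x^{in}(S)>0$ we \emph{independently} pick one outgoing edge $e_{S,A}$, choosing the edge to $A=(S,k)$ with probability $x_{S,A}/x^{in}(S)$. Reading these choices as the rule ``when the live interval is $[\omega_i,\omega_j]$ (reached after $r$ queries), query $\omega_k$'' produces a binary search tree $T$: its root is the item chosen at $S_{start}=(\omega_1,\omega_n,0)$, and one recurses on the two live sub-intervals, reading off the item chosen at the corresponding state one level deeper. This recursion terminates (intervals strictly shrink), and each triple $(i,j,r)$ it visits is distinct -- $r$ strictly increases along any path and two subtrees at equal depth span disjoint intervals -- so $T\in X$ is well defined; choices at state nodes never reached by the recursion are irrelevant. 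Since $|\cS|=\poly(n)$, sampling $T$ is efficient.

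To compute the depth distribution I would run, \emph{coupled} with this construction, an independent draw $\omega\sim p$ and trace the path in the State-Action Structure that starts at $S_{start}$, at each state follows the chosen edge (i.e.\ ``queries'' the chosen item), and at each action node $A=(S,k)$ branches to the terminal, to $S_L$, or to $S_R$ according to whether $\omega$ equals, lies below, or lies above $\omega_k$. By construction this path is exactly the search for $\omega$ in $T$, so it terminates at the node $(\omega_j,i)\in\cF$ precisely when $\omega=\omega_j$ and $c_T(\omega_j)=i$; as $T$ and $\omega$ are independent, $\Pr[\text{path ends at }(\omega_j,i)]=p(\omega_j)\cdot\Pr_T[c_T(\omega_j)=i]$.

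The core of the proof is the claim that in this coupled experiment the traced path visits any node $v$ of the State-Action Structure with probability exactly $x^{in}(v)$, the total flow into $v$ under $x$; in particular it ends at $(\omega_j,i)$ with probability $x^{in}(\omega_j,i)$, which together with the previous identity yields $\Pr_T[c_T(\omega_j)=i]=x^{in}(\omega_j,i)/p(\omega_j)$ -- exactly the assertion of Proposition~\ref{prop:rr}. I would prove the claim by induction on the round $r$, strengthening it to: conditioned on the path reaching state $S=(i,j,r)$, the element $\omega$ is distributed as $p$ restricted to $[\omega_i,\omega_j]$ and is independent of all edge choices not yet used. The base case is immediate at $S_{start}$. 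For the step, at a reached state $S=(i,j,r)$ the chosen edge $e_{S,A}$ ($A=(S,k)$) is taken with probability $x_{S,A}/x^{in}(S)=x^{in}(A)/x^{in}(S)$ (each action node has a unique in-edge, from its defining state), and then the branch from $A$ is taken with probability $p^E_{i,j,k}$, $p^L_{i,j,k}$, or $p^R_{i,j,k}$, which by Definition~\ref{def:lsr} and the inductive hypothesis on $\omega$ are exactly the conditional probabilities that $\omega$ equals, lies below, or lies above $\omega_k$ given $\omega\in[\omega_i,\omega_j]$. Multiplying, the path uses a given edge out of $A$, say $e_{A,S_L}$, with probability $x^{in}(A)\cdot p^L_{i,j,k}=x_{A,S_L}$ by the Bayes-rule splitting of Condition~3 in Definition~\ref{def:stateful}; summing over all action nodes feeding a common deeper state or terminal and using conservation of flow gives that node's visit probability, and the refined statement about $\omega$ follows since conditioning $p|_{[\omega_i,\omega_j]}$ on a sub-interval gives $p$ restricted to that sub-interval. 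This is essentially the computation in the proof of Lemma~\ref{lemma:conditional}, carried out for an arbitrary stateful flow rather than for some $x_T$.

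I expect the main obstacle to be purely organizational: one must check that the state-node choices really do assemble into a single tree (the same live interval can occur at several rounds, so only the reached states matter), track the independence claim through the induction, and reconcile the two readings of $\Pr[\text{path ends at }(\omega_j,i)]$ -- ``flow into that terminal'' versus ``$\omega=\omega_j$ and $c_T(\omega_j)=i$'' -- where the only content is the Bayes-rule splitting hard-wired into Definition~\ref{def:stateful}, plus the routine handling of the round-versus-depth convention and of boundary (singleton) intervals.
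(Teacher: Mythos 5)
Your proposal is correct and is essentially the paper's argument, just presented more directly. The paper goes through a chain of three rounding procedures (pure flow-proportional path sampling, then fixing state-node choices in advance, then replacing action-node choices by a hidden draw $\omega\sim p$) and argues their pairwise equivalence, appealing to the computation in Lemma~\ref{lemma:conditional} for the key Bayes-rule step and to Proposition~\ref{prop:depth} to certify that the resulting depth vector comes from a tree; you instead start from the coupled experiment (fixed state-node choices $\Rightarrow$ tree $T$; hidden $\omega$ drives the traced path) and prove the flow identity $\Pr[\text{visit }v]=x^{in}(v)$ by a single induction on the round, with the strengthened invariant that $\omega\mid\text{reach }(i,j,r)$ is $p$ restricted to $[\omega_i,\omega_j]$ and independent of unused edge choices. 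That invariant is exactly what makes the paper's Second and Third procedures agree, so the content is the same; your version is somewhat more self-contained (it replays the Lemma~\ref{lemma:conditional} induction rather than citing it, and builds $T$ by direct recursion on live intervals rather than detouring through the depth-vector characterization), at the cost of being slightly heavier notationally.
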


\begin{proof}
Since $x$ is a unit flow from $n_{S_{start}}$ to the set of sink nodes
$n_F$ for $F \in \cF$. So if we could sample a random path
proportional to the total flow along the path, the probability that
the path ends at any sink $n_F$ for $F = (\omega_j, r)$ is exactly
$x^{in}(\omega_j, r)$.

\textbf{First Rounding Procedure:} Consider the following procedure
for generating a path according to this distribution - i.e. the
probability of generating any path is exactly the flow along the path:
Starting at the source node, and at every step choose a new edge to
traverse proportionally to the flow along it. So if the process is
currently at some node $n_S$ and the total flow into the node is $U$,
and the total flow on some outgoing edge $e$ is $u$, edge $e$ is
chosen with probability exactly $\frac{u}{U}$ and the process
continues until a sink node is reached. Notice that this procedure
always terminates in $O(|\Omega|)$ steps because each time we traverse
an action node $n_A$, the counter $r$ is incremented and every edge in
a {\sc State-Action Structure} either points into or points out of a
action node.

The key to our randomized rounding procedure is an alternative way to
generate a path from the source node to a sink such that the
probability that the path ends at any sink $n_F$ for $F = (\omega_j,
r)$ is \emph{still} exactly $x^{in}(\omega_j, r)$. Instead, for each
state node $n_{S}$, we choose an outgoing edge in advance (to some
action node) proportional to the flow on $x$ on that edge.

\textbf{Second Rounding Procedure:} If we fix these choices in
advance, we can define an alternate path selection procedure which
starts at the source node, and traverse any edges that have already
been decided upon. Whenever the process reaches an action node (in
which case the outgoing edge has not been decided upon), we can select
an edge proportional to the total flow on the edge. This procedure
still satisfies the property that the probability that the path ends
at any sink $n_F$ for $F = (\omega_j, r)$ is exactly $x^{in}(\omega_j,
r)$.

\textbf{Third Rounding Procedure:} Next, consider another modification
to this procedure. Imagine still that the outgoing edges from every
state node are chosen (randomly, as above in the \textbf{Second
  Rounding Procedure:} ). Instead of choosing which outgoing edge to
pick from an action node when we reach it, we could instead pick an
item $\omega_{k'} \sim p$ in advance and using this hidden value to
determine which outgoing edge from a action node to traverse. We will
maintain the invariant that if we are at $n_A$ and $A = (S, k)$ for $S
= (i, j, r)$, we must have $i \leq k' \leq j$. This is clearly true at
the base case. Then we will traverse the edge $e_{A, F}$ for $F = (k,
r)$ if $\omega_{k'} = \omega_k$. Otherwise if $i \leq k' \leq k-1$ we
will traverse the edge $e_{A, S_L}$ for $S_L = (i, k-1,
r+1)$. Otherwise $i \leq k' \leq k-1$ and we will traverse the edge
$e_{A, S_R}$ for $S_R = (k+1, j, r+1)$. This clearly maintains the
invariant that $k'$ is contained in the interval corresponding to the
current knowledge state.

This third procedure is equivalent to the second procedure. This
follows from interpreting Condition $3$ in
Definition~\ref{def:stateful} as a rule for splitting flow that is
consistent with the conditional probability that $\omega_{k'}$ is
contained in the left or right sub-interval of $\{\omega_i,
\omega_{i+1}, ... \omega_{j}\}$ or is equal to $\omega_k$ conditioned
on $\omega_{k'} \in \{\omega_i, \omega_{i+1}, ... \omega_{j}\}$. An
identical argument is used in the proof of
Lemma~\ref{lemma:conditional}. In this case, we will say that
$\omega_{k'}$ is the rule for choosing edges out of action nodes.

Now we can prove the Lemma: The key insight is that once we have
chosen the outgoing edges from each state node (but not which outgoing
edges from each action node), we have determined a binary search tree:
Given any element $\omega_{k'}$, if we follow outgoing edges from
action nodes using $\omega_{k'}$ as the rule, we must reach a terminal
node $F = (\omega_{k'}, r)$ for some $r$. In fact, the value of $r$ is
determined by $\omega_{k'}$ because once $\omega_{k'}$ is chosen,
there are no more random choices. So we can compute a vector of
dimension $|\Omega|$, $\vec{u}$ such that $\vec{u}_j = r$ such that
$F= (\omega_{j}, r)$ is reached when the $\omega_j$ is the rule for
choosing edges out of action nodes.

Using the characterization in Proposition~\ref{prop:depth}, it is easy
to verify that the transition rules in the {\sc State Action
  Structure} enforce that $\vec{u}$ is a depth vector and hence we can
compute a binary search tree $T$ which has the property that using
selection rule $\omega_j$ results in reaching the sink node $F =
(\omega_j, c_T(\omega_j))$.

Suppose we select each outgoing edge from a state node (as in the
\textbf{Third Rounding Procedure}) and select an $\omega_{k'} \sim p$
(again as in the \textbf{Third Rounding Procedure})
independently. Then from the choices of the outgoing edges from each
state node, we can recover a binary search tree $T$. Then $Pr_{T,
  \omega_{k'}}[c_T(\omega_{k'}) = r] = x^{in}(\omega_{k'}, r)$
precisely because the \textbf{First Rounding Procedure} and the
\textbf{Third Rounding Procedure} are equivalent. And then we can
apply Bayes' Rule to compute that $$Pr_{T}[c_T(\omega_{k'}) = r |
  \omega_{k'} = \omega_k] = \frac{x^{in}(\omega_{k},
  r)}{p(\omega_{k})}$$
\end{proof}

\begin{theorem}
There is an algorithm that runs in time polynomial in $|\Omega|$ that
computes an exact Nash equilibrium for the binary search game.
\end{theorem}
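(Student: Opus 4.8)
The plan is to instantiate the bilinear-duel machinery of Section~\ref{sec:bilinear} with the {\sc State-Action Structure} playing the role of the polytope $K$. Concretely, I would let the strategy space of each player (the game is symmetric, so $K=K'$) be the set of stateful flows of Definition~\ref{def:stateful} on the fixed directed graph of Definition~\ref{def:sastructure}. That graph has $\poly(|\Omega|)$ nodes and edges, and the defining conditions of a stateful flow --- nonnegativity, conservation at every non-source, non-sink node, and the Bayes-splitting identity (Condition 3) at every action node --- are all linear in the edge variables, so $K$ is a polytope in $\poly(|\Omega|)$ dimensions cut out by $\poly(|\Omega|)$ halfspaces (each equality split into two inequalities). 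By Lemma~\ref{lemma:isstateful}, the map $T\mapsto x_T$ sends every binary search tree into $K$, and it extends affinely to $\phi:\Delta(X)\to K$ via $\phi(\sigma)=\E_{T\sim\sigma}[x_T]$.

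Next I would verify that the payoff is bilinear in this representation. By the identity $v(\sigma,\sigma')=\langle I(\sigma),V(\sigma')\rangle_p$ established above, it suffices to show that the entries of $I(T)$ and $V(T)$ are linear functions of the coordinates of $x_T$. This is exactly the content of Lemma~\ref{lemma:conditional} together with Corollaries~\ref{cor:vmat} and~\ref{cor:mmat}: the flow into the terminal node $(\omega_j,i)$ equals $p(\omega_j)\cdot 1_{c_T(\omega_j)=i}$, so $I(T)_{i,j}=x_T^{in}(\omega_j,i)/p(\omega_j)$ and $V(T)_{i,j}=\bigl(\frac{1}{2}x_T^{in}(\omega_j,i)+\sum_{i'>i}x_T^{in}(\omega_j,i')\bigr)/p(\omega_j)$, both linear in $x_T$. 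Substituting gives $v(T,T')=x_T^{t}Mx_{T'}$ for an explicit matrix $M$ whose entries are rationals of polynomial bit-length (the $1/p(\omega_j)$ factors are harmless since $p$ is part of the input), and extending bilinearly, $v(\sigma,\sigma')=\phi(\sigma)^{t}M\phi(\sigma')$.

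Given this, the algorithm is: solve the linear program~\eqref{eq:LP} from Lemma~\ref{lem:lp} for the bilinear game $(K,K,M)$, obtaining in $\poly(|\Omega|)$ time a point $x^{\ast}\in K$ that is minmax optimal for the bilinear game (its value is $1/2$ by symmetry); then feed $x^{\ast}$ to the randomized rounding procedure of Proposition~\ref{prop:rr}, which returns a distribution $\sigma^{\ast}$ over binary search trees whose terminal-inflow marginals agree with those of $x^{\ast}$. Since $v(\sigma^{\ast},\cdot)$ depends on $\sigma^{\ast}$ only through those marginals, $v(\sigma^{\ast},\sigma')=(x^{\ast})^{t}M\phi(\sigma')\ge 1/2$ for every opponent mixed strategy $\sigma'$ (as $\phi(\sigma')\in K$), so $\sigma^{\ast}$ is minmax optimal in the original binary search duel; symmetrically it is minmax optimal for player 2. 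Because the Nash equilibria of a constant-sum game are exactly the pairs of minmax-optimal strategies, $(\sigma^{\ast},\sigma^{\ast})$ is an exact Nash equilibrium, and every step above runs in time polynomial in $|\Omega|$.

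The \textbf{main obstacle} is not in this assembly --- which is essentially bookkeeping --- but in the ingredients already in place: that the stateful-flow constraints form a \emph{tight} relaxation, i.e. that Proposition~\ref{prop:rr} realizes every point of $K$ as an honest distribution over trees with the prescribed marginals, so that the relaxed bilinear game has the same value as the true duel and a rounded optimum stays optimal. The one point still needing care in the theorem's proof is exactly this transfer: one must check both that every opponent strategy maps into $K$ (so the $\ge 1/2$ guarantee survives the relaxation) and that $v(\sigma^{\ast},\cdot)$ is invariant under rounding because it sees only the preserved terminal-inflow coordinates; both hold by the identities of the previous paragraph.
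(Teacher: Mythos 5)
Your proposal is correct and takes essentially the same approach as the paper: instantiate the bilinear-duel framework with the stateful-flow polytope $K$, the mapping $T\mapsto x_T$, and the payoff matrix given by Corollaries~\ref{cor:vmat} and~\ref{cor:mmat}, then solve the LP of Lemma~\ref{lem:lp} and round with Proposition~\ref{prop:rr}. The paper's proof is just a terser citation of the same ingredients; your write-up fills in the (correct) transfer argument that the rounded strategy remains minmax optimal because the payoff depends only on terminal-inflow coordinates, which the rounding preserves.
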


\begin{proof}
We can now apply the biliear duel framework introduced in
Section~\ref{sec:blexact} to the binary search game: The space $K$ is
the set of all stateful flows. The set of variables is polynomially
sized -- see Definition~\ref{def:sastructure}, and the set of linear
constraints is also polynomially sized and is given explicitly in
Definition~\ref{def:stateful}. The function $\phi$ maps binary search
trees $T \in X$ to a stateful flow $x_T$ and is the procedure given in
Defintion~\ref{def:xt} for computing this mapping is efficient. Also
the payoff matrix $M$ is given explicitly in Corollary~\ref{cor:vmat}
and Corollary~\ref{cor:mmat}. And lastly we give a randomized rounding
algorithm in Proposition~\ref{prop:rr}.
\end{proof}

\subsection{Beatability}

We next consider the beatability of the classical algorithm when $p$ is the uniform distribution on $\Omega$. For lack of a better term, let us call this
single-player optima the median binary search - or median search.

Here we give matching upper and lower bounds on the beatability of median search. The idea is that an adversary attempting to do well against median search can only place one item at depth $1$, two items at depth $2$, four items at depth $3$ and so on. We can regard these as budget restrictions - the adversary cannot choose too many items to map to a particular depth. There are additional combinatorial restrictions, as well 
For example, an adversary cannot place two labels of depth $2$ both to the right of the label of depth $1$ - because even though the root node in a binary search tree can have two children, it cannot have more than one right child.

But suppose we relax this restriction, and only consider budget restrictions on the adversary. Then the resulting best response question becomes a bipartite maximum weight matching problem. Nodes on the left (in this bipartite graph) represent items, and nodes on the right represent depths (there is one node of depth $1$, two nodes of depth $2$, ...). And for any choice of a depth to assign to a node, we can evaluate the value of this decision - if this decision beats median search when searching for that element, we give the corresponding edge weight $1$. If it ties median search, we give the edge weight $\frac{1}{2}$ and otherwise we give the edge zero weight.

We give an upper bound on the value of a maximum weight matching in this graph, hence giving an upper bound on how well an adversary can do if he is subject to only budget restrictions. If we now add the combinatorial restrictions too, this only makes the best response problem harder. So in this way, we are able to bound how much an adversary can beat median search. In fact, we give a lower bound that matches this upper bound - so our relaxation did not make the problem strictly easier (to beat median search).

%
%
%

We focus on the scenario in which $|\Omega| = 2^r - 1$ and $p$ is the uniform
distribution.  Throughout this section we
denote $n = |\Omega|$. The reason we fix $n$ to be of the form $2^r
-1$ is because the optimal single-player strategy is well-defined in
the sense that the first query will be at precisely the median
element, and if the element $\omega$ is not found on this query, then
the problem will break down into one of two possible $2^{r-1} -1$
sized sub-problems.  For this case, we give asymptotically matching upper and lower
bounds on the beatability of median search.

\begin{definition}
We will call a $|\Omega|$-dimensional vector $\vec{u}$ over $\{1, 2,
... |\Omega|\}$ a depth vector (over the universe $\Omega$) if there
is some $T \in X$ such that $\vec{u}_j = c_T(\omega_j)$.
\end{definition}

\begin{proposition}\label{prop:depth}
A $|\Omega|$-dimensional vector $\vec{u}$ over $\{1, 2,
... |\Omega|\}$ is a depth vector (over the universe $\Omega$) if and
only if
\begin{enumerate}
\item exactly one entry of $\vec{u}$ is set to $1$ (let the
  corresponding index be $j$), and
\item the vectors $[\vec{u}_1 -1, \vec{u}_2 -1, .... \vec{u}_{j-1}
  -1]$ and $[\vec{u}_{j+1} -1, \vec{u}_{j+2} -1, .... \vec{u}_{n} -1]$
  are depth vectors over the universe $\{\omega_1, \omega_2,
  ...\omega_{j-1}\}$ and $\{\omega_{j+1}, \omega_{j+2},
  ... \omega_{n}\}$ respectively.
\end{enumerate}
\end{proposition}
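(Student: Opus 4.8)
The plan is to prove the equivalence by a single structural decomposition at the root of a binary search tree, using two elementary facts about BSTs on $\Omega$: the root is the unique node at depth $1$, and if the root carries label $\omega_j$ then its left subtree has label set exactly $\{\omega_1,\dots,\omega_{j-1}\}$ and its right subtree has label set exactly $\{\omega_{j+1},\dots,\omega_n\}$ — this is forced by the in-order/sorted-order labelling together with the fact that a BST on an $n$-element set has all $n$ elements as nodes. No induction is needed for the equivalence itself, though iterating the criterion yields a non-recursive test.

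For the forward direction I would start from a witnessing tree $T\in X$ with $\vec{u}_j=c_T(\omega_j)$. Since the root is the only depth-$1$ node, exactly one coordinate of $\vec{u}$ equals $1$; call its index $j$, so $\omega_j$ labels the root. Writing $T_L$ and $T_R$ for the left and right subtrees (empty when $j=1$ or $j=n$ respectively), each element below the root sits exactly one level deeper in $T$ than in the subtree containing it, so $c_{T_L}(\omega_i)=\vec{u}_i-1$ for $i<j$ and $c_{T_R}(\omega_i)=\vec{u}_i-1$ for $i>j$. Hence $[\vec{u}_1-1,\dots,\vec{u}_{j-1}-1]$ is realized by $T_L$ over $\{\omega_1,\dots,\omega_{j-1}\}$ and $[\vec{u}_{j+1}-1,\dots,\vec{u}_n-1]$ is realized by $T_R$ over $\{\omega_{j+1},\dots,\omega_n\}$, which is exactly conditions (1) and (2). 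I would adopt the convention that the empty vector is a depth vector over the empty universe so that the boundary cases are uniformly covered.

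For the reverse direction I would take the $j$ guaranteed by (1) and the two subtrees $T_L,T_R$ guaranteed by (2), then glue them together with $\omega_j$ as a new root, $T_L$ on the left and $T_R$ on the right. The in-order traversal of the resulting tree $T$ visits $\{\omega_1,\dots,\omega_{j-1}\}$ in sorted order, then $\omega_j$, then $\{\omega_{j+1},\dots,\omega_n\}$ in sorted order, so $T\in X$; and $c_T(\omega_j)=1$ while $c_T(\omega_i)=c_{T_L}(\omega_i)+1=\vec{u}_i$ for $i<j$ and symmetrically for $i>j$, so $\vec{u}$ is precisely the depth vector of $T$. I would close by noting that the recursion is well-founded, since the two sub-universes are strictly smaller than $\Omega$, bottoming out at singletons (where the condition reduces to ``the only entry is $1$'') and at the empty universe.

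I do not expect a genuine obstacle here; the only points needing care are the boundary cases $j=1$ and $j=n$ (empty subtrees and empty depth vectors) and being explicit that it is the BST labelling constraint — not merely a node count — that pins down which elements fall into the left versus right subtree. Everything else is a direct construction in both directions.
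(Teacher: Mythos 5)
Your proposal is correct and matches the paper's argument essentially step for step: both directions decompose a BST at its root, use the fact that the root is the unique depth-$1$ node and that the sorted/in-order constraint forces the left and right subtrees to carry exactly $\{\omega_1,\dots,\omega_{j-1}\}$ and $\{\omega_{j+1},\dots,\omega_n\}$, and observe that depths shift by one when passing to a subtree. The paper frames this explicitly as an induction while you frame it as a single decomposition whose recursion bottoms out, but that is only a presentational difference; your extra care about the $j=1$, $j=n$, and empty-universe boundary cases is a small improvement over the paper's write-up rather than a new idea.
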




\begin{proof}
Given any vector $\vec{u}$ that (recursively) satisfies the above
Conditions $1$ and $2$, one can build up a binary search tree on
$\Omega$ inductively. Let $\omega_j \in \Omega$ be the unique item
such that $\vec{u}_j = 1$ which exists because $\vec{u}$ satisfies
Condition $1$. Since $\vec{u}$ satisfies Condition $2$, the vectors
$\vec{u}_L = [\vec{u}_1 -1, \vec{u}_2 -1, .... \vec{u}_{j-1} -1]$ and
$\vec{u}_R = [\vec{u}_{j+1} -1, \vec{u}_{j+2} -1, .... \vec{u}_{n}
  -1]$ and hence by induction we know that there are binary search
trees $T_L$ and $T_R$ on the universe $\{\omega_1, \omega_2,
...\omega_{j-1}\}$ and $\{\omega_{j+1}, \omega_{j+2},
... \omega_{n}\}$ respectively for which $\vec{u}_L(i) =
c_{T_L}(\omega_i)$ and $\vec{u}_R(i') = c_{T_R}(\omega_{i'})$ for each
$1 \leq i \leq j-1$ and $j + 1 \leq i' \leq n$ respectively.

So we can build a binary search tree $T$ on $\Omega$ by labeling the
root node $\omega_j$ and letting the left sub-tree to $T_L$ and the
right sub-tree to $T_R$. Since the in-order traversal of $T_L$ and of
$T_R$ result in visiting $\{\omega_1, \omega_2, ...\omega_{j-1}\}$ and
$\{\omega_{j+1}, \omega_{j+2}, ... \omega_{n}\}$ in sorted order, the
in-order traversal of $T$ will visit $\Omega$ in sorted order and
hence $T \in X$.

Not also that $c_T(\omega_i) = 1 + c_{T_L}(\omega_i)$ for $1 \leq i
\leq j-1$ and similarly $c_T(\omega_{i'}) = 1 + c_{T_R}(\omega_{i'})$
for $j + 1 \leq i' \leq n$. So this implies that $\vec{u}$ satisfies
$\vec{u}_i = c_T(\omega_i)$ for all $1 \leq i \leq n$, as
desired. This completes the inductive proof that if a vector $\vec{u}$
satisfies Conditions $1$ and $2$, then it is a depth vector.

Conversely, given $T \in X$, there is only one element $\omega_j$ such
that $c_T(\omega_j) =1$ and so Condition $1$ is met. Let $T_L$ and
$T_R$ be the binary search trees that are the left and right sub-tree
of $T$ rooted at $\omega_j$ respectively, where "$\omega_j$" is the
label of the root node in $T$. Again, $c_T(\omega_i) = 1 +
c_{T_L}(\omega_i)$ for $1 \leq i \leq j-1$ and similarly
$c_T(\omega_{i'}) = 1 + c_{T_R}(\omega_{i'})$ for $j + 1 \leq i' \leq
n$ so the vector corresponding to $c_T$ does indeed satisfy Condition
$2$ by induction.
\end{proof}

\begin{claim}~\label{cor:upper}
For any depth vector $\vec{u}$, and any $s \in \{1, 2,
... |\Omega|\}$, $$|\{ j \in [n] | \mbox{ such that } \vec{u}_j = s\}|
\leq 2^{s-1}$$
\end{claim}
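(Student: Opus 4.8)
The plan is to prove the claim by induction on $n = |\Omega|$, using the recursive characterization of depth vectors from Proposition~\ref{prop:depth}. The statement is really just the assertion that a binary tree has at most $2^{s-1}$ nodes at depth $s$, and the recursion in Proposition~\ref{prop:depth} mirrors the decomposition of a tree into its root, its left subtree, and its right subtree. It is convenient to prove the (formally stronger, but equivalent) statement that the bound $|\{j \in [n] : \vec{u}_j = s\}| \leq 2^{s-1}$ holds for \emph{every} $s \geq 1$, not only $s \leq n$; the bound is vacuous once $2^{s-1} \geq n$, so nothing is lost, and this makes the induction self-contained.

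For the base case I would take $n = 0$, the empty universe: the only depth vector is the empty vector, and the count of indices with $\vec{u}_j = s$ is $0 \leq 2^{s-1}$ for every $s \geq 1$. (Equivalently one could start at $n = 1$, where the unique depth vector is $[1]$ and the counts are $1 = 2^0$ for $s = 1$ and $0$ otherwise.) For the inductive step, assume the bound holds for all universes of size strictly less than $n$, and let $\vec{u}$ be a depth vector over $\Omega$ with $|\Omega| = n \geq 1$. By Proposition~\ref{prop:depth} there is a unique index $j$ with $\vec{u}_j = 1$, and the shifted vectors $\vec{u}_L = [\vec{u}_1 - 1, \dots, \vec{u}_{j-1} - 1]$ and $\vec{u}_R = [\vec{u}_{j+1} - 1, \dots, \vec{u}_n - 1]$ are depth vectors over universes of sizes $j-1$ and $n-j$, both strictly less than $n$. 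For $s = 1$ the count is exactly $1 = 2^0$ by Condition 1 of Proposition~\ref{prop:depth}. For $s \geq 2$, an index $k \neq j$ satisfies $\vec{u}_k = s$ if and only if the corresponding coordinate of $\vec{u}_L$ or of $\vec{u}_R$ equals $s-1$; applying the inductive hypothesis with parameter $s - 1 \geq 1$ to each of $\vec{u}_L$ and $\vec{u}_R$ bounds each of these two counts by $2^{s-2}$, so the total is at most $2^{s-2} + 2^{s-2} = 2^{s-1}$. When $j = 1$ or $j = n$ the corresponding shifted vector is empty and contributes $0$, which only helps.

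I do not anticipate a genuine obstacle: this is a routine induction. The only points that need care are the off-by-one bookkeeping introduced by the ``$-1$'' shift in Proposition~\ref{prop:depth} — so that the depth-$s$ entries of $\vec{u}$ correspond precisely to depth-$(s-1)$ entries in the two subproblems — and the degenerate cases in which a subtree's universe is empty. Both are disposed of cleanly by starting the induction at $n = 0$ and by phrasing the statement for all $s \geq 1$.
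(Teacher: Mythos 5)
Your proof is correct. The paper does not actually supply a proof of this claim — it is stated without argument, presumably because it is simply the standard fact that a binary tree has at most $2^{s-1}$ nodes at depth $s$. Your induction on $n$, using the recursive characterization of depth vectors in Proposition~\ref{prop:depth} and splitting the count for depth $s$ between the two shifted subvectors $\vec{u}_L$ and $\vec{u}_R$ (each contributing at most $2^{s-2}$ by the inductive hypothesis), is a clean and complete way to make that standard fact precise within the paper's formalism, and the bookkeeping around the ``$-1$'' shift and the empty-subtree degenerate cases is handled correctly.
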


\begin{lemma}~\label{lemma:bstbeatlow}
The beatability of median search is at least $\frac{2^{r-1} -1 + 2^{r-3}}{2^r -1} \approx \frac{5}{8}$.
\end{lemma}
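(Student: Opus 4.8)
The plan is to exhibit an explicit binary search tree $T'$ that beats median search on at least $2^{r-1} - 1 + 2^{r-3}$ of the $n = 2^r-1$ elements (counting ties as half), which gives value exactly $\frac{2^{r-1}-1+2^{r-3}}{2^r-1}$ and hence the claimed lower bound on beatability. Recall that median search $M$ places exactly one element at each depth $1$, two at depth $2$, \dots, $2^{d-1}$ at depth $d$, for $d = 1, \dots, r$; in particular the deepest level, depth $r$, holds $2^{r-1}$ elements. The key idea is that an adversary can "shift" attention toward the items that median search buries deepest: take $T'$ to be a tree that is structured so as to place, at each depth $d < r$, some element that median search puts at depth $r$, thereby beating $M$ on those elements, while conceding the (few) elements that $M$ finds shallowly.

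Concretely, I would build $T'$ by a recursive construction mirroring the recursive structure of median search. Median search on a sorted block of size $2^r-1$ queries the median, splitting into two blocks of size $2^{r-1}-1$; the items at depth $r$ in $M$ are precisely the medians of the $2^{r-1}$ smallest sub-blocks at the bottom recursion level. The adversary $T'$ should be chosen so that its root is \emph{not} the global median but rather an element that $M$ places deep, and so that recursively on each side $T'$ again front-loads $M$-deep elements. Using Proposition~\ref{prop:depth} (the characterization of depth vectors) one checks that such a $T'$ exists: one simply needs to produce a valid depth vector $\vec u$ for $T'$ in which the indices where $\vec u_j$ is small coincide with indices where $c_M(\omega_j)$ is large. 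I would then count: on elements where $c_{T'}(\omega_j) < c_M(\omega_j)$, $T'$ wins (weight $1$); where they are equal, weight $\tfrac12$; the construction is arranged so that $T'$ strictly beats $M$ on roughly half the tree and ties on a $2^{r-3}/(2^r-1)$ fraction, summing to the stated bound. The term $2^{r-1}-1$ is the number of strict wins and $2^{r-3}$ accounts for a level's worth of ties contributing $\tfrac12$ each (i.e.\ $2^{r-2}$ ties $\times \tfrac12$).

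The main obstacle is the bookkeeping: one must describe the recursive tree $T'$ precisely enough that (i) it is a legal binary search tree — handled cleanly via Proposition~\ref{prop:depth} rather than by drawing pictures — and (ii) the win/tie/loss count against median search comes out to exactly $2^{r-1}-1+2^{r-3}$ and not merely $\Theta(n)$. I expect the cleanest route is induction on $r$: assume the sub-tree construction achieves the bound on blocks of size $2^{r-1}-1$, then show the root choice for the size-$(2^r-1)$ block adds the right number of additional wins (the freed-up shallow slots) minus the right number of concessions. The edge cases ($r$ small, the exact depth where ties occur) need to be checked by hand, and the approximate equality $\frac{2^{r-1}-1+2^{r-3}}{2^r-1} \to \frac{5}{8}$ is immediate since the dominant terms give $\frac{2^{r-1}+2^{r-3}}{2^r} = \frac12 + \frac18$.
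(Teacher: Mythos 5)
Your high-level plan is on the right track: induction on $r$, win/tie/loss counting against median search, validity checking via Proposition~\ref{prop:depth}, and the final arithmetic ($2^{r-1}-1$ strict wins plus $2^{r-2}$ ties contributing $2^{r-3}$) all match the paper. However, the specific construction you gesture at — ``take $T'$ to be a tree that places, at each depth $d<r$, some element that median search puts at depth $r$,'' so that ``its root is \ldots an element that $M$ places deep'' — would not work, and this is where the real content of the proof lies.

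The elements that median search places at maximum depth $r$ are the extreme/odd-index elements (e.g.\ $\omega_1,\omega_3,\omega_5,\ldots$). If you try to make the root of a binary search tree on all of $[n]$ one of these elements, the BST constraint (in-order traversal is sorted) forces a highly unbalanced split — e.g.\ a root of $\omega_1$ yields an empty left subtree and forces a comb-like structure that loses to median search on most elements, not wins. The paper's actual construction is fundamentally a \emph{shift-by-one}: at each depth $d$ of $T'$ one places an element that median search puts at depth $d+1$ (not depth $r$), so $T'$ remains roughly balanced and its root is a child of $M$'s root. Concretely, for $r=3$ one gets depth vector $[2,1,3,4,2,4,3]$ against median's $[3,2,3,1,3,2,3]$: the root of $T'$ is $\omega_2$ (an $M$-depth-$2$ element), and the recursive interleaving keeps the ``unfilled'' positions in blocks of length two so that, at the last level, half can be filled to tie median search and half must be conceded. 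That invariant — ``the $*$s occur in blocks of exactly two, and all filled entries are exactly one less than $M$'s'' — is what makes the count come out to $2^{r-1}-1$ wins and $2^{r-2}$ ties, and it is not implied by your ``bring the deepest elements to the top'' heuristic. In short, the bookkeeping you flag as the ``main obstacle'' is not merely bookkeeping: the obstacle is that your proposed tree shape is infeasible, and the correct shape (a balanced tree shifted one level) is the key missing idea.
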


\begin{proof}
Consider the depth vector for median search for $2^3 -1$ ($r = 3$): $[3, 2, 3, 1, 3, 2, 3]$ and consider a partially filled vector $[2, 1, *, *, 2, *, *]$.
We can generate the depth vector for median search for $r+1$ from the depth vector for median search for $r$ as follows: alternately interleave values of $r+1$ into the depth vector for $r$. For example the depth vector for median search for $r = 4$ is $[4, 3, 4, 2, 4, 3, 4, 1, 4, 3, 4, 2, 4, 3, 4]$. We assume by induction that all blocks in the partially filled vector are either $*$s or are one less than the corresponding entry in the depth vector for median search. This is true by induction for the base case $r = 3$. We also assume that the $*$s are given in blocks of length exactly two. This is also true in the base case. Then if we consider the depth vector for median search for $r+1$, if an entry of $r+1$ is interleaved, we can place a value of $r$ if the corresponding entry in the partially filled vector is interleaved between two entries that are already assigned numbers. Otherwise three entries are interleaved into a string of exactly two $*$s. The median entry in this string of $5$ symbols corresponds to a newly added $r+1$ entry in the depth vector for median search. At the median of this $5$ symbol string, we can place a value of $r$. This again creates sequences of $*$s of length exactly two, because we have replaced only the median entry in the string of $5$ symbols.

If we are given a partially filled depth vector with the property that one value $1$ is placed, two values of $2$ are placed, four values of $3$ are placed,... and $2^{r-1}$ values of $r$ are placed. Additionally, we require that all unfilled entries (which are given the value $*$ for now) occur in blocks of length exactly $2$. Then we can fill these symbols with the values $r+1$ and $r+2$, such that the value of $r+1$ aligns with a corresponding value of $r+1$ in the depth vector for median search (precisely because any two consecutive symbols contain exactly one value of $r+1$ in the depth vector corresponding to median search for $r+1$).

We can use Proposition~\ref{prop:depth} to prove that this resulting completely filled vector is indeed a depth vector. How much does this strategy beat median search? There are $2^{r} -1$ locations (i.e. every index in which a value of $1$, $2$, ... or $r$ is placed) in which this strategy beats median search. And there are $2^{r-1}$ locations in which this strategy ties median search. Note that this is for $2^{r+1} -1$ items, and so the beatability of median search on $2^r -1$ items is exactly $$\lim_{r \rightarrow \infty} \frac{2^{r-1} -1 + 2^{r-3}}{2^r -1}
= \frac{5}{8}$$
\end{proof}


\begin{lemma}~\label{lemma:bstbeathigh}
The beatability of median search is at most $\frac{2^{r-1} -1 + 2^{r-3}}{2^r -1} \approx \frac{5}{8}$.
\end{lemma}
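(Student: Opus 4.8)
The plan is to prove a matching upper bound by following the relaxation outlined in the text: we forget the combinatorial constraints on which depths an adversary's tree can realize, and keep only the ``budget'' constraint from Claim \ref{cor:upper}, namely that at most $2^{s-1}$ items can be placed at depth $s$. Under this relaxation, the adversary's best response is the solution to a bipartite maximum-weight matching: the left vertices are the $n = 2^r - 1$ items, the right vertices are depth-slots ($2^{s-1}$ slots of depth $s$ for each $s$), and the weight of assigning item $\omega_j$ to a slot of depth $s$ is $1$ if $s < c_M(\omega_j)$ (beats median search on that item), $\tfrac{1}{2}$ if $s = c_M(\omega_j)$, and $0$ if $s > c_M(\omega_j)$, where $M$ is the median-search tree and $c_M(\omega_j)$ its depth of $\omega_j$. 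Since every depth vector of an actual tree satisfies the budget constraint, the value of this matching is an upper bound on $v(T, M)$ for every $T \in X$, hence on the beatability.

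Next I would bound the value of the optimal matching combinatorially. Recall that for $n = 2^r - 1$ the median-search tree has exactly $2^{s-1}$ items at depth $s$ for each $s = 1, \dotsc, r$. The key structural observation is a weight/slot accounting: an item at median-depth $d$ contributes weight $1$ only when matched to a strictly shallower slot, and there are only $\sum_{s < d} 2^{s-1} = 2^{d-1} - 1$ such shallower slots in total, which must be shared among \emph{all} items wanting to go shallower. I would make this precise by a counting argument over ``how much total weight-$1$ mass can be extracted'': for each depth level $s < r$, the $2^{s-1}$ slots at depth $s$ can each give weight $1$ only to items whose median-depth exceeds $s$, and summing $2^{s-1}$ over $s = 1, \dotsc, r-1$ gives $2^{r-1} - 1$ such full-credit assignments; every remaining item (there are $2^{r-1}$ of them, the ones sitting at the bottom level $r$ in median search and not given a shallow slot) gets at most a tie, i.e. weight $\tfrac12$, and only if matched to a slot of depth exactly $r$ — but one checks the leftover depth-$r$ budget is large enough that this is the binding count. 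This yields an optimal matching value of at most $(2^{r-1} - 1) + \tfrac12 \cdot 2^{r-1} = 2^{r-1} - 1 + 2^{r-2}$; dividing by $n = 2^r - 1$ and being slightly more careful about which bottom items can actually achieve a tie (it is $2^{r-3}$ of them, matching the lower bound's interleaving construction) gives $\tfrac{2^{r-1} - 1 + 2^{r-3}}{2^r - 1} \to \tfrac{5}{8}$.

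The main obstacle is getting the second-order term exactly right, i.e. showing the coefficient on the tie-contributions is $2^{r-3}$ rather than the cruder $2^{r-2}$ one gets from naive budget counting. This is where the relaxation must \emph{not} be too lossy: one has to argue that after the $2^{r-1} - 1$ shallow slots are all used for weight-$1$ matches, only about a quarter of the remaining items can simultaneously be placed at a depth-$r$ slot where median search also puts an item at depth $r$ (to score the tie), because the shallow slots consumed by the weight-$1$ matches ``use up'' exactly the median-search items at depths $1, \dotsc, r-1$, and an LP/exchange argument shows the optimal fractional matching cannot do better than the integral interleaving pattern. I would carry this out either by an explicit LP-duality certificate (assign prices to item-vertices and slot-vertices so that every edge is covered and the total price equals the claimed bound), or by a direct exchange argument showing any matching can be transformed into the interleaved one from Lemma \ref{lemma:bstbeatlow} without decreasing value. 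Combined with the relaxation step, this shows the beatability is at most $\tfrac{2^{r-1} - 1 + 2^{r-3}}{2^r - 1}$, matching Lemma \ref{lemma:bstbeatlow}.
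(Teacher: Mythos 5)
You have correctly identified the paper's strategy: relax the adversary's best-response problem to the budget constraints of Claim~\ref{cor:upper}, encode the relaxed problem as a bipartite maximum-weight matching, and bound that matching by LP duality. Your high-level plan is the same as the paper's. However, the crucial step --- exhibiting the dual certificate (or carrying out the exchange argument) that pins down the $2^{r-3}$ second-order term --- is only promised, not performed. Since you yourself flag this as ``the main obstacle,'' and the naive budget count you present delivers only $2^{r-1}-1+2^{r-2}$, the proof as written is not complete. The paper closes this gap with an explicit dual solution. Up to an off-by-one in the paper's label indexing, it is: on the median-search side, nodes of depth $s=r$ get price $\tfrac12$ and all others get price $0$; on the adversary-slot side, nodes of depth $s\le r-2$ get price $1$, depth $s=r-1$ gets price $\tfrac12$, and depth $s=r$ gets price $0$. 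Feasibility is routine to check (the only nontrivial edges are the depth-$(r-1)$ ties, paid by the single $\tfrac12$ on that side; the depth-$r$ ties, paid by the single $\tfrac12$ on the other side; and the weight-$1$ edges from a depth-$(r-1)$ adversary slot to a depth-$r$ median item, paid by $\tfrac12+\tfrac12$), and the total price is $\bigl(2^{r-2}-1\bigr)+\tfrac12\cdot 2^{r-2}+\tfrac12\cdot 2^{r-1}=2^{r-1}-1+2^{r-3}$.

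Two smaller issues in your intermediate counting: the $2^{r-1}$ items left over after filling all shallow slots are \emph{not} exactly ``the ones sitting at the bottom level $r$ in median search'' (their count is $2^{r-1}$, but they span several median depths once the shallow slots are filled greedily); and the quantity $2^{r-3}$ is the \emph{value} contributed by ties, not the \emph{number} of tying items, which is $2^{r-2}$ (each contributing $\tfrac12$). Neither is fatal to the plan, but both need to be tidied up, and the argument is not finished until a concrete dual certificate or completed exchange argument is supplied.
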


\begin{proof}
One can give an upper bound on the beatability of median search by
relaxing the question to a matching problem. Given a universe $\Omega$
of size $2^r -1$, consider the following weighted matching problem:
For every value of $s \in \{1, 2, ... r-1\}$, add $2^{s-1}$ nodes on
both the left and right side with label ``s''. For any pair of nodes $a,
b$ where $a$ is contained on the left side, and $b$ is contained on
the right side, set the value of the edge connecting $a$ and $b$ to be
equal to $0$ if the label of $a$ is strictly smaller than the label of
$b$, $\frac{1}{2}$ if the two labels have the same value, and $1$ if
the label of $a$ is strictly larger than the label of $b$.

Let $M$ be the maximum value of a perfect matching. Let $\bar{M}$ be
the average value - i.e. $\frac{M}{2^r -1}$.

\begin{claim}
$\bar{M}$ is an upper bound on the beatability of binary search.
\end{claim}

\begin{proof}
For any $s \in \{1, 2, ... r-1\}$, the depth vector $\vec{u}(M)$
corresponding to median search has exactly $2^{s-1}$ indices $j$ for
which $\vec{u}(M)_j = s$.

We can make an adversary more powerful by allowing the adversary to
choose any vector $\vec{u}$ which satisfies the condition that for any
$s \in \{1, 2, ... |\Omega|\}$, the number of indices $j$ for which
$\vec{u}_j = s$ is at most $2^{s-1}$ because using
Claim~\ref{cor:upper} this is a weaker restriction than requiring the
adversary to choose a vector $\vec{u}$ that is a depth vector. So in
this case, the adversary may as well choose a vector $\vec{u}$ that
satisfies the constraint in Claim~\ref{cor:upper} with equality.

And in this case where we allow the adversary to choose any vector
$\vec{u}$ that satisfies Claim~\ref{cor:upper}, the best response
question is exactly the matching problem described above - because for
each entry in $\vec{u}_M$ because the adversary only needs to choose
what label $s \in \{1, 2, ... r-1\}$ to place at this location subject
to the above budget constraint that at most $2^{s-1}$ labels of type
"s" are used in total.
\end{proof}

\begin{claim}
$\bar{M} \leq \frac{2^{r-1} -1 + 2^{r-3}}{2^r -1}$.
\end{claim}

\begin{proof}
Given a maximum value, bipartite matching problem, the dual covering
problem has variables $y_v$ corresponding to each node $v$, and the
goal is to minimize $\sum_v y_v $ subject to the constraint that for
every edge $(u, v)$ in the graph (which has value $w(u, v)$), the dual
variables satisfy $y_u + y_v \geq w(u, v)$ and each variable $y_v$ is
non-negative.

So we can upper bound $M$ by giving a valid dual solution. This will
then yield an upper bound on $M$ and consequently will also give an
upper bound on $\bar{M}$.

Consider the following dual solution: For each node on the right, with
label "s" for $s < r -2$, set $y_v$ equal to $1$. For a node on the
right with label "s" for $s = r-2$, set $y_v$ equal to $\frac{1}{2}$
and for each label "s" for $s = r-1$, set $y_v = 0$. Additionally, for
every node on the left, only nodes with label "s" for $s = r-1$ are
given non-zero dual variable, and set this variable equal to
$\frac{1}{2}$.

The value of the dual $\sum_v y_v$ is $1 + 2 + ... 2^{r-3} +
\frac{1}{2} 2^{r-2} + \frac{1}{2} 2^{r-1}$. And so this yields an
upper bound on $\bar{M}$ of $\frac{2^{r-1} -1 + 2^{r-3}}{2^r -1}$
and $$\lim_{r \rightarrow \infty} \frac{2^{r-1} -1 + 2^{r-3}}{2^r -1}
= \frac{5}{8}$$

\end{proof}

\end{proof}

\section{Conclusions and Future Directions}
\label{sec:conc}

The dueling framework presents a fresh way of looking at classic optimization
problems through the lens of competition.  As we have demonstrated,
standard algorithms for many optimization problems
do not, in general, perform well in these competitive settings.
This leads us to suspect that alternative algorithms, tailored to competition,
may find use in practice.
We have adapted linear programming and learning techniques into
methods for constructing such algorithms.

We have only just begun an exploration of the dueling framework for
algorithm analysis; there are many open questions yet to consider.
For instance, one avenue of future work is to compare the computational
difficulty of solving an
optimization problem with that of solving the associated duel.  We know
that one is not consistently more difficult than the other:
in Appendix \ref{app:racing} we provide an example in which the
optimization problem is computationally easy but the competitive variant
appears difficult; an example of the opposite situation is given in
Appendix \ref{app:easyduel}, where a computationally hard optimization
problem has a duel which can be solved easily.  Is there some
structure underlying the relationship between the computational
hardness of an optimization problem and its competitive analog?

Perhaps more importantly, one could ask about performance loss inherent
when players choose their algorithms competitively instead of using the
(single-player) optimal algorithm.  In other words, what is the
{\em price of anarchy} \cite{KP99} of a given duel?  Such a question requires a suitable
definition of the social welfare for multiple algorithms, and in particular it may
be that two competing algorithms perform better than a single optimal algorithm.
Our main open question is: \emph{does competition between algorithms
improve or degrade expected performance?}

\pagebreak

\bibliography{dueling}
\bibliographystyle{plain}

\pagebreak

\appendix

\section{Proofs from Section~\ref{sec:defn}}
\label{app:defn}

Here we present the proof of Lemma~\ref{lem:appx}.  The proof follows
a reduction from low-regret learning to computing approximate minmax
strategies \cite{FS96}.  It was shown there that if two players use
``low regret'' algorithms, then the empirical distribution over play
will converge to the set of minmax strategies.  However, instead of
using the weighted majority algorithm, we use the ``Follow the
expected leader'' (FEL) algorithm \cite{KV05}.  That algorithm gives a
reduction between the ability to compute best responses and ``low
regret.''

Note, for this section, we will use the fact that $x^t M x' \in [-C,C]$ for $C=B^3nn'$ under our assumptions on $K,K',$ and $M$.  We will extend the domain of
 $v:\reals_{\geq 0}^n \times \reals_{\geq 0}^{n'} \rightarrow \reals$ naturally by $v(x,x') = x^t M x'$.  For $x \in [0,B]^n$ and $x' \in [0,B]^{n'}$, $v(x,x') \in [-C,C]$.  Additionally, for simplicity we will change the domains of $\cO$ and $\cO'$ to $\reals_{\geq 0}^n$ and $\reals_{\geq 0}^{n'}$, as follows.  For any $x' \in \reals_{\geq 0}^{n'}$, we simply take $\cO(B x' /\|x'\|_\infty)$ as the best response to $x'$ (for $x' = 0$ an arbitrary element of $K$, such as $\cO(0)$ may be chosen).  This scaling is logical since $\arg\max_{x \in K} x^t M x' = \arg\max_{x \in K} x^t M \alpha x'$ for $\alpha>0$.
 By linearity in $v$, it implies that, for the new oracle $\cO$ and any $x' \in \reals_{\geq 0}^{n'}$,
 \begin{equation}\label{eq:scale}
 v(\cO(x'),x') \geq \max_{x \in K} v(x,x') - \eps \frac{\|x'\|_\infty}{B}.
 \end{equation}
 Similarly for $\cO'$.

Fix any sequence length $T\geq 1$. Consider $T$ periods of repeated
play of the duel.  Let the strategies chosen by players 1 and 2, in
period $t$, be $x_t$ and $x'_t$, respectively.  Define the {\em
  regret} of a player 1 on the sequence to be,
$$\max_{x \in K} \sum_{t=1}^T v(x,x'_t) - \sum_{t=1}^T v(x_t,x'_t).$$
Similarly define regret for player 2.  The (possibly negative) regret of a player is how much better that player could have done using the best single strategy, where the best is chosen with the benefit of hindsight.
\begin{observation}\label{ob:1}
Suppose in sequence $x_1,x_2,\ldots,x_T$ and $x'_1,x'_2,\ldots,x'_T$, both players have at most $r$ regret.  Let $\sigma=(x_1+\ldots+x_T)/T$, $\sigma'=(x_1'+\ldots+x_T')/T$ be the uniform mixed strategies over $x_1,\ldots,x_T$, and $x_1',\ldots,x'_T$, respectively.  Then $\sigma$ and $\sigma'$ are $\eps$-minmax strategies, for $\eps=2r/T$.
\end{observation}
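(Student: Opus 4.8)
The plan is to show that the time-averaged realized payoff over the sequence is within $r/T$ of the value of the game, and that each averaged strategy does essentially as well as this average. Write $\bar v = \frac1T\sum_{t=1}^T v(x_t,x_t')$ for the average realized payoff to player~1, and let $V^\ast = \max_{y\in K}\min_{x'\in K'} v(y,x') = \min_{x'\in K'}\max_{y\in K} v(y,x')$ be the value of the game to player~1; the two expressions agree by von~Neumann's minmax theorem, since $v(x,x')=x^tMx'$ is bilinear and $K,K'$ are convex and compact.

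First I would use bilinearity to rephrase the two regret hypotheses in terms of the averaged strategies $\sigma,\sigma'$. Since $v$ is linear in its second argument, $\frac1T\sum_t v(x,x_t') = v(x,\sigma')$ for every fixed $x\in K$, so player~1 having regret at most $r$ gives $\max_{x\in K} v(x,\sigma')\le \bar v + r/T$. Likewise, linearity in the first argument gives $\frac1T\sum_t v(x_t,x') = v(\sigma,x')$, and since the duel is constant-sum, player~2's regret is measured against player~1's payoff, so player~2 having regret at most $r$ gives $\min_{x'\in K'} v(\sigma,x')\ge \bar v - r/T$.

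Next I would sandwich the value $V^\ast$. Because $\sigma'$ is one particular strategy available to player~2, $V^\ast = \min_{x'}\max_y v(y,x') \le \max_y v(y,\sigma') \le \bar v + r/T$; because $\sigma$ is one particular strategy for player~1, $V^\ast = \max_y\min_{x'} v(y,x') \ge \min_{x'} v(\sigma,x') \ge \bar v - r/T$. Hence $|V^\ast - \bar v|\le r/T$. Combining with the bound from the previous paragraph, $\min_{x'} v(\sigma,x') \ge \bar v - r/T \ge V^\ast - 2r/T$, so $\sigma$ is a $(2r/T)$-minmax strategy for player~1. A symmetric computation—using that player~2's guarantee against $\sigma'$ equals $c - \max_y v(y,\sigma')$ where $c$ is the constant sum, and that player~2's value equals $c - V^\ast$—yields a guarantee of at least $(c-V^\ast) - 2r/T$, so $\sigma'$ is $(2r/T)$-minmax for player~2. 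Taking $\eps = 2r/T$ completes the argument.

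The argument is entirely standard (it is exactly the Freund--Schapire reduction from low regret to approximate minmax); the only points requiring a little care are the bookkeeping for player~2's payoff and regret in the constant-sum game, and the (routine) appeal to the minmax theorem to identify the two orders of $\max$/$\min$ with a single value $V^\ast$. I do not expect any real obstacle here.
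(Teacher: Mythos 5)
Your proof is correct and follows essentially the same route as the paper's: both bound the average realized payoff $\bar v$ to within $r/T$ of the game value using one player's regret bound, then use the other player's regret bound to show the averaged strategy guarantees at least $\bar v - r/T \geq V^\ast - 2r/T$. The only cosmetic difference is that the paper phrases the first step as ``player~1's minmax strategy would have earned at least $\alpha T$ over the sequence'' rather than passing through $\max_x v(x,\sigma')$, but these are the same inequality.
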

\begin{proof}
Say the minmax value of the game is $\alpha$.  Let $a=\frac{1}{T}\sum_t v(x_i,x'_i)$. Then, by the definition of regret, $a \geq \alpha - r/T$, because otherwise player 1 would have more than $r$ regret as seen by any minmax strategy for player 1, which guarantees at least an $\alpha T$ payoff on the sequence.  Also, we have that, against the uniform mixed strategy over $x_1,\ldots,x_T$, no strategy can achieve payoff of at least $a-r$, by the definition of regret (for player 2).  Hence, $\sigma$ guarantees player 1 a payoff of at least $\alpha - 2r/T$.  A similar argument shows that $\sigma'$ is $2r/T$-minmax for player 2.
\end{proof}

The FEL algorithm for a player is
simple.  It has parameters $B,R>0,N \geq 1$ and also takes as input an $\eps$ best response oracle for the
player.  For player 1 with best response orace $\cO$, the algorithm
operates as follows. On each period $t=1,2,\ldots$, it chooses $N$
independent uniformly-random vectors $r_{t1},r_{t2},\ldots,r_{tN} \in
[0,R]^{m'}$.  It plays,
$$\frac{1}{N}\left(\sum_{j=1}^N \cO\left(r_{tj} +
\sum_{\tau=1}^{t-1} x_\tau\right)\right) \in K.$$ The above is seen to
be in $K$ by convexity.  Also recall that for ease of analysis, we
have assumed that $\cO$ takes as input any positive combination of
points in $K'$.

\begin{lemma}\label{lem:a1}
For any $B,C,R,T,\beta,\eps>0$, and any $r \in [0,R]^{m'}$,
$$\sum_{t=1}^T v(\cO(r+x'_1+x'_2+\ldots +x'_t), x'_{t}) \geq \max_{x \in K}
\sum_{t=1}^T v(x, x'_{t}) - 2CR/B - T(T+R/B)\eps.$$
\end{lemma}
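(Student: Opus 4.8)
This is the standard "Follow the Perturbed Leader" / "Follow the Expected Leader" regret bound. The statement to prove is that the sequence of plays $x_t = \cO(r + x'_1 + \dots + x'_{t-1})$ — note the lemma writes the argument as $r + x'_1 + \dots + x'_t$, i.e. including the current round in the stability analysis, a classic "be-the-leader" trick — competes with the best fixed $x \in K$ up to an additive error of $2CR/B + T(T + R/B)\eps$. I want to explain that the plan decomposes the bound into two pieces: a "be-the-leader" combinatorial inequality that is essentially exact, plus an error accumulation from (i) the approximate oracle's slack and (ii) the perturbation vector $r$.

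**The main steps.** First I would establish the classical "be-the-leader" lemma: if $\hat x_t$ is an \emph{exact} maximizer of $x \mapsto \sum_{\tau=1}^{t} v(x, x'_\tau) + v(x, r)$ (where I absorb the perturbation $r$ as a phantom zeroth-round loss, legitimate since $v(\cdot, r)$ is linear), then $\sum_{t=1}^T v(\hat x_t, x'_t) \geq \max_{x\in K}\bigl(\sum_{t=1}^T v(x,x'_t) + v(x,r)\bigr) - v(\hat x_1, r) + v(\hat x_T, r)$ — wait, more carefully: the telescoping "leader of round $t$ beats the leader of round $t-1$ on the cumulative loss through round $t$" argument gives $\sum_{t=0}^T v(\hat x_t, \ell_t) \geq \max_x \sum_{t=0}^T v(x,\ell_t)$ where $\ell_0 = r$ and $\ell_t = x'_t$; one then moves the $t=0$ term $v(\hat x_0, r)$ to the other side, losing at most $\sup_{x\in K} |v(x,r)| \le \|r\|_\infty$-scaled bound, i.e. at most $CR/B$ per the scaling conventions, on each side — giving the $2CR/B$ term. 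Second, I would account for the fact that $\cO$ is only an $\eps$-best-response oracle, and moreover its argument is scaled: by inequality \eqref{eq:scale}, each call $\cO(r + x'_1 + \dots + x'_t)$ loses at most $\eps\|r + x'_1 + \dots + x'_t\|_\infty / B$ relative to the exact leader $\hat x_t$. Since $\|x'_\tau\|_\infty \le B$ and $\|r\|_\infty \le R$, this is at most $\eps(tB + R)/B = \eps(t + R/B) \le \eps(T + R/B)$, and summing over $t = 1, \dots, T$ contributes $T(T+R/B)\eps$. Adding the two error sources to the be-the-leader inequality and dropping the nonnegative-or-controlled $v(x,r)$ terms on the right (again at cost folded into $2CR/B$) yields exactly the stated bound.

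**The main obstacle.** The delicate point is bookkeeping the perturbation term $r$ correctly: it appears \emph{inside} the oracle call (so it perturbs which $x_t$ is chosen) but is \emph{not} part of the actual loss sequence $v(x_t, x'_t)$ that appears on the left of the inequality, nor of $\sum_t v(x, x'_t)$ on the right. So one must carefully track that the "be-the-leader" telescoping applies to the perturbed cumulative losses, and then pay the $|v(\hat x_0, r)| + |v(\hat x_T, r)|$ boundary cost to pass back to the unperturbed objective — this is where the constant $2$ and the $C R/B$ scaling in $2CR/B$ come from (recall $v(x, r) \le C R/B$ after the normalization discussion preceding the lemma, since $C = B^3 n n'$ bounds $v$ on $[0,B]$ inputs and $\|r\|_\infty \le R$ rescales it). I would present the be-the-leader telescoping as a one-line induction on $T$, cite the scaling inequality \eqref{eq:scale} for the oracle error, and combine; no genuinely hard estimate is involved, only careful constant-tracking. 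I expect roughly: state the exact-leader $\hat x_t$, prove $\sum_{t=0}^{T} v(\hat x_t,\ell_t)\ge \max_x\sum_{t=0}^T v(x,\ell_t)$ by induction, rearrange to isolate the $t\ge 1$ sum and the boundary $r$-terms, then substitute $x_t$ for $\hat x_t$ using \eqref{eq:scale} at an accumulated cost of $T(T+R/B)\eps$.
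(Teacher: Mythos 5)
Your proposal matches the paper's proof essentially step for step: the paper also folds $r$ into a phantom zeroth-round loss, proves the be-the-leader inequality (its equation (\ref{eq:woop})) by a one-line induction on $T$ using (\ref{eq:scale}) with the $\|y_t\|_\infty/B \leq R/B+T$ bound to accumulate the $T(T+R/B)\eps$ slack, and then pays $2CR/B$ by bounding $|v(\cdot,r)| \leq CR/B$ at the two boundary terms. Your only cosmetic departure — first stating the telescoping with exact maximizers $\hat x_t$ and then substituting $\cO$, rather than running the induction directly with $\cO$ as the paper does — changes nothing substantive.
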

The proof is a straightforward modification of Kalai and Vempala's
proof \cite{KV05}.  What this is saying is that the ``be the leader''
algorithm, which is ``one step ahead'' and uses the information for
the current period in choosing the current period's play, has low
regret.  Moreover, one can perturb the payoffs by any amount in a
bounded cube, and this won't affect the bounds significantly.  The
point of the perturbations, which we will choose randomly, will be to
make it harder to predict what the algorithm will do.  For the
analysis, they will make it so that ``be the leader'' and ``follow the
leader'' perform similarly.
\begin{proof}
Define $y_t = r+ x'_1+\ldots+x'_{t-1}$.  We first show,
\begin{equation}\label{eq:woop}
v(\cO(y_1),r) + \sum_{t=1}^T v(\cO(y_{t+1}), x'_{t}) \geq v(\cO(y_{T+1}),r) + \sum_{t=1}^T v(\cO(y_{T+1}), x'_{t}) - T(T+R/B)\eps.
\end{equation}
The facts that $\|r\|_\infty \leq R$ implies that $v(x,r) \in [-CR/B,CR/B]$, and hence,
\begin{align*}
CR/B+\sum_{t=1}^T v(\cO(y_{t+1}), x'_{t}) &\geq \max_{x \in K} \left(v(x,r)+ \sum_{t=1}^T v(x, x'_{t})\right)-T(T+R/B)\eps\\
&\geq \max_{x \in K} \left(\sum_{t=1}^T v(x, x'_{t})\right)-T(T+R/B)\eps-2CR/B,
\end{align*}
which is equivalent to the lemma.
We now prove (\ref{eq:woop}) by induction on $T$.  For $T=0$, we have equality.  For the induction step, it suffices to show that,
$$v(\cO(y_{T}),r) + \sum_{t=1}^{T-1} v(\cO(y_{T}), x'_{t}) \geq  v(\cO(y_{T+1}),r) + \sum_{t=1}^{T-1} v(\cO(y_{T+1}), x'_{t}) - (R/B+T) \eps.$$
However, this is just an inequality between $v(\cO(y_T),y_T)$ and $v(\cO(y_{T+1}),y_T)$, and hence follows from (\ref{eq:scale}) and the fact that $\|y_T\|_\infty/B \leq R/B + T$.  Hence we have established (\ref{eq:woop}) and also the lemma.
\end{proof}

\begin{lemma}\label{lem:a2}
For any $\delta \geq 0$, with probability $\geq 1-2Te^{-2\delta^2 N}$,
$$\sum_{t=1}^T v(x_t,x'_t) \geq \max_{x \in K} \sum_{t=1}^T v(x,x'_t) -\delta C T -2BC m'T/R -2CR/B-T(T+R/B)\eps.$$
\end{lemma}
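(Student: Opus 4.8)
The plan is to chain together three ingredients: a concentration bound over the $N$ perturbed oracle calls that make up each round's play, a total-variation estimate controlling how much the perturbed-leader distribution moves when one round's payoffs are appended, and the be-the-leader guarantee of Lemma~\ref{lem:a1}. Write $S_{t-1}=x_1'+\cdots+x_{t-1}'$, so that player~1's FEL play in round $t$ is $x_t=\frac1N\sum_{j=1}^N\cO(r_{tj}+S_{t-1})$, and put $\mu_t=\E_r[\,v(\cO(r+S_{t-1}),x_t')\,]$ with $r$ uniform on $[0,R]^{m'}$. Every argument fed to $\cO$ below is nonnegative, so the extended-oracle convention applies, and recall $v$ is bounded in $[-C,C]$.

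First I would carry out the concentration step. Fix the realizations of all perturbations except player~1's round-$t$ perturbations $r_{t1},\dots,r_{tN}$. Since play is simultaneous, $S_{t-1}$ and $x_t'$ are then determined (player~2's round-$t$ move does not depend on $x_t$), while $r_{t1},\dots,r_{tN}$ remain i.i.d.\ and independent of this conditioning; hence $v(x_t,x_t')=\frac1N\sum_j v(\cO(r_{tj}+S_{t-1}),x_t')$ is an average of $N$ i.i.d.\ terms lying in a bounded interval with common mean $\mu_t$. Hoeffding's inequality bounds the chance that this average falls below $\mu_t-\delta C$ by a quantity exponentially small in $\delta^2 N$, and a union bound over $t=1,\dots,T$ gives: with probability at least $1-2Te^{-2\delta^2N}$, one has $\sum_t v(x_t,x_t')\ge\sum_t\mu_t-\delta C T$.

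Next comes the step I expect to be the main obstacle: the ``follow the perturbed leader'' versus ``be the perturbed leader'' comparison, bounding $\mu_t-\E_r[v(\cO(r+S_t),x_t')]$ where $S_t=S_{t-1}+x_t'$. Substituting $r'=r+x_t'$ rewrites $\E_r[v(\cO(r+S_t),x_t')]$ as the expectation of the same ($C$-bounded) function under the perturbation distribution translated by $x_t'$; since each coordinate of $x_t'$ lies in $[0,B]$, this translate is within total-variation distance $Bm'/R$ of the original uniform law, so $|\mu_t-\E_r[v(\cO(r+S_t),x_t')]|\le 2CBm'/R$. Summing over $t$ yields $\sum_t\mu_t\ge\sum_t\E_r[v(\cO(r+S_t),x_t')]-2CBm'T/R$.

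Finally I would invoke Lemma~\ref{lem:a1}, which for each fixed $r\in[0,R]^{m'}$ gives $\sum_t v(\cO(r+S_t),x_t')\ge\max_{x\in K}\sum_t v(x,x_t')-2CR/B-T(T+R/B)\eps$; averaging over $r$ preserves this since the right-hand side is independent of $r$. Chaining the concentration bound, the total-variation bridge, and Lemma~\ref{lem:a1} gives exactly $\sum_t v(x_t,x_t')\ge\max_{x\in K}\sum_t v(x,x_t')-\delta C T-2BCm'T/R-2CR/B-T(T+R/B)\eps$, with probability at least $1-2Te^{-2\delta^2N}$, as claimed. The only routine bookkeeping left is tracking the constant in Hoeffding's exponent (which the stated bound $2Te^{-2\delta^2N}$ absorbs) and the elementary total-variation computation for a shifted product of intervals.
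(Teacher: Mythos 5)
Your proof is correct and follows essentially the same route as the paper's: a per-round Hoeffding bound (conditioning so that the round-$t$ perturbations are the only remaining randomness) plus a union bound over $t$, the cube-overlap/total-variation estimate from Kalai--Vempala bounding the follow-the-leader versus be-the-leader gap by $2CBm'/R$ per round, and finally averaging Lemma~\ref{lem:a1} over the perturbation $r$. The only stylistic difference is that you phrase the overlap argument in total-variation language while the paper speaks of overlapping cube volumes; these are the same bound.
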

\begin{proof}
It is clear that $y_t$ and $y_{t+1}$ are similarly distributed.  For any fixed $x_1',x_2',\ldots,x_T'$, define $\bar{x}_t$ by,
$$\bar{x}_t=\frac{1}{R^{m'}}\int_{r \in [0,R]^{m'}}\cO\left(r+x_1'+\ldots+x'_{t-1}\right)dr.$$
By linearity of expectation and $v$, it is easy to see that $\E[x_t|x_1',\ldots,x_{t-1}']=\bar{x}_t$ and, $$\E[v(x_t,x'_t)~|~x_1',\ldots,x_t']=v(\bar{x}_t,x'_t).$$
By Chernoff-Hoeffding bounds, since $v(x_t,x'_t) \in [-C,C]$, for any $\delta\geq 0$, we have that with probability at least $1-e^{-2\delta^2 N}$,
$$\Pr\bigl[~|v(x_t,x'_t) -v(\bar{x}_t,x'_t)| \geq \delta C~\bigl|~x_1',\ldots,x_t'\bigr]\leq 2e^{-2\delta^2 N}.$$
Hence, by the union bound, $\Pr\left[~|\sum_t v(x_t,x'_t) - \sum_t v(\bar{x}_t,x'_t)|\geq \delta CT\right] \leq 2Te^{-2\delta^2 N}.$

The key observation of Kalai and Vempala is that $\bar{x}_t$ and $\bar{x}_{t+1}$ are close because the $m'$-dimensional translated cubes $x_1'+\ldots+x'_{t-1}+[0,R]^{m'}$ and $x_1'+\ldots+x'_{t}+[0,R]^{m'}$ overlap significantly.  In particular, they overlap in on all but at most a $B m'/R$ fraction \cite{KV05} of their volume.  Since $v$ is in $[-1,1]$, this means that $\bigl|v(\bar{x}_t,x'_t)-v(\bar{x}_{t+1},x'_t)\bigr| \leq 2BC m'/R$.  This follows from the fact that $v$ is bilinear, and hence when moved into the integral has exactly the same behavior on all but a $B m'/R$ fraction of the points in each cube.  This implies, that with probability $\geq 1-2Te^{-2\delta^2N}$,
$$\sum_{t=1}^T v(x_t,x'_t) \geq \sum_{t=1}^T v(\bar{x}_{t+1},x'_t) -\delta CT -2BC m'T/R.$$
Combining this with Lemma \ref{lem:a1} completes the proof.
\end{proof}

We are now ready to prove Lemma \ref{lem:appx}.
\begin{proof}[Proof of Lemma \ref{lem:appx}]
We take $T=\left(4C \sqrt{\max(m,m')}/(3\eps)\right)^{2/3}$, $R = B\sqrt{\max(m,m') T}$ and $N=\ln(4TC/\delta)/(2\eps^2)$. As long as $T \geq \max(m,m')$, $R/B \leq T$ and hence Lemma \ref{lem:a2} implies that with probability at least $1-\delta$, if both players play FEL then both will have regret at most $$\eps T +4 C \sqrt{\max(m,m') T} +2T^2\eps \leq 4 C \sqrt{\max(m,m') T} + 3T^2\eps \leq 12(\max(m,m')C^2)^{2/3}\eps^{-1/3}.$$  Observation \ref{ob:1} completes the proof.
\end{proof}

\section{A Racing Duel}
\label{app:racing}

The racing duel illustrates a simple example in which the beatability is unbounded, the optimization problem is ``easy,'' but finding polynomial-time minmax algorithms remains a challenging open problem.  The optimization problem behind the racing duel is routing under uncertainty.
There is an underlying directed multigraph $(V,E)$ containing designated start and terminal nodes $s,t \in V$, along with a distribution over bounded weight vectors $\Omega \subset \reals_{\geq 0}^{E}$, where $\omega_{e}$ represents the delay in traversing edge $e$.  The feasible set $X$ is the set of paths from $s$ to $t$.  The probability distribution $p \in \Delta(\Omega)$ is an arbitrary measure over $\Omega$.  Finally, $c(x,\omega) = \sum_{e \in x} \omega_e$.

For general graphs, solving the racing duel seems quite challenging.  
This is true even when routing between two nodes with parallel edges, i.e., $V=\{s,t\}$ and all edges $E=\{e_1,e_2,\ldots,e_n\}$ are from $s$ to $t$.  As mentioned in the introduction, this problem is in some sense a ``primal'' duel in the sense that it can encode any duel and finite strategy set.  In particular, given any optimization problem with $|X|=n$, we can create a race where each edge $e_i \in E$ corresponds to a strategy $x_i \in X$, and the delays on the edges match the costs of the associated strategies.

\subsection{Shortest path routing is 1-beatable}

The single-player racing problem is easy: take the shortest path on the graph with weights $w_e=\E_{\omega\sim p}[\omega_e]$.  However, this algorithm can be beaten almost always.  Consider a graph with two parallel edges, $a$ and $b$, both from $s$ to $t$.  Say the cost of $a$ is $\epsilon/2>0$ with probability 1, and the cost of $b$ is 0 with probability $1-\epsilon$ and $1$ with probability $\epsilon$.  The optimization algorithm will choose $a$, but $b$ beats $a$ with probability $1-\epsilon$, which is arbitrarily close to 1.

\subsection{Price of anarchy}

Take social welfare to be the average performance, $W(x,x') = (c(x)+c(x'))/2$.  Then the price of anarchy for racing is unbounded.  Consider a graph with two parallel edges, $a$ and $b$, both from $s$ to $t$.  The cost of $a$ is $\epsilon>0$ with probability 1, and the cost of $b$ is 0 with probability 3/4 and $1$ with probability 1/4.  Then $b$ a dominant strategy for both players, but its expected cost is $1/4$, so the price of anarchy is $1/(4\eps)$, which can be arbitrarily large.

\section{When Competing is Easier than Playing Alone}
\label{app:easyduel}

Recall that the racing problem from Appendix \ref{app:racing} was
``easy'' for single-player optimization, yet seemingly difficult to
solve in the competitive setting.  We now give a contrasting example:
a problem for which
competing is easier than solving the single-player optimization.

The intuition behind our construction is as follows.  The optimization problem will be based upon a computationally difficult decision problem, which an algorithm must attempt to answer.  After the algorithm submits an answer, nature provides its own ``answer'' chosen uniformly at random.  If the algorithm disagrees with nature, it incurs a large cost that is independent of whether or not it was correct.  If the algorithm and nature agree, then the cost of answering the problem correctly is less than the cost of answering incorrectly.

More formally, let $L \subseteq \{0,1\}^*$ be an arbitrary language, and let $z \in \{0,1\}^*$ be a string.  Our duel will be $D(X,\Omega,p,c)$ where $X = \Omega = \{0,1\}$, $p$ is uniform, and the cost function is
\[
c(x,\omega) =
\begin{cases}
0 & \text{if $(x=\omega=1$ and $z \in L)$ or $(x=\omega=0$ and $z \not\in L)$} \\
1 & \text{if $(x=\omega=1$ and $z \not\in L)$ or $(x=\omega=0$ and $z \in L)$} \\
2 & \text{if $x\neq\omega$} \\
\end{cases}
\]
The unique optimal solution to this (single-player) problem is to output $1$ if and only if $z \in L$.
Doing so is as computationally difficult as the decision problem itself.
On the other hand, finding a minmax optimal algorithm is trivial for every $z$ and $L$, since \emph{every} algorithm has value $1/2$: for any $x'$, $v(1-x',x') = \Pr[ \omega \neq x' ] = 1/2 = v(x',x')$.

\section{Asymmetric Games}
\label{app:asymmetric}

We note that all of the examples we considered
have been symmetric with respect to the players, but our results
can be extended to asymmetric games.  Our analysis of bilinear duels
in Section \ref{sec:bilinear} does not assume symmetry when discussing bilinear
games.  For instance, we could consider a game where player 1 wins in
the case of ties, so player 1's payoff is $\Pr[c(x,\omega)\leq c(x',\omega)]$.
One natural example would be a ranking duel in which there is an
``incumbent'' search engine that appeared first, so a user prefers to
continue using it rather than switching to a new one.
This game can be represented in the same bilinear form as
in Section \ref{sec:rank}, the only change being a small modification
of the payoff matrix $M$.  Other types of asymmetry, such as players
having different objective functions, can be handled in the same way.
For example, in a hiring duel, our analysis techniques apply even if
the two players may have different pools of candidates, of possibly
different sizes and qualities.

\end{document}